\documentclass[a4-paper,pdflatex,sn-mathphys-num]{sn-jnl}

\usepackage{lmodern}

\usepackage{bbm}
\usepackage[utf8]{inputenc}
\usepackage[english]{babel}
\usepackage{csquotes}       
\usepackage{microtype}      
\usepackage{amssymb}        
\usepackage{mathtools}      
\usepackage{physics}
\usepackage{xcolor}

\usepackage{dirtytalk}      

\usepackage{amssymb}
\usepackage{soul}

\usepackage{braket}         

\colorlet{mutedblue}{blue!70!black}
\colorlet{mutedred}{red!70!black}
\colorlet{mutedgreen}{green!70!black}

\usepackage{mathrsfs}

\numberwithin{equation}{section}

\newcommand{\one}{\mathbbm{1}}
\newcommand{\N}{\mathbb N}
\newcommand{\Z}{\mathbb Z}

\newcommand{\R}{\mathbb R}

\newcommand{\C}{\mathbb C}
\newcommand{\cA}{\mathcal A}
\newcommand{\cB}{\mathcal B}
\newcommand{\cC}{\mathcal C}
\newcommand{\cD}{\mathcal D}

\newcommand{\cF}{\mathcal F}

\newcommand{\cH}{\mathcal H}

\newcommand{\cK}{\mathcal K}

\newcommand{\cR}{\mathcal R}
\newcommand{\cS}{\mathcal S}

\newcommand{\cW}{\mathcal W}

\newcommand{\RDM}{\mathrm{RDM}}

\newcommand{\CE}{\mathrm{CE}}

\def\epsilon{\varepsilon}

\DeclareMathOperator{\supp}{supp}

\DeclareMathOperator{\dom}{dom}

\DeclareMathOperator{\ri}{ri}
\DeclareMathOperator{\Aff}{Aff}
\DeclareMathOperator{\Conv}{Conv}
\DeclareMathOperator{\ran}{ran}

\usepackage{amsthm}
\theoremstyle{plain}
\newtheorem{theorem}{Theorem}[section]
\newtheorem{proposition}[theorem]{Proposition}
\newtheorem{lemma}[theorem]{Lemma}
\newtheorem{corollary}[theorem]{Corollary}

\newtheorem*{lemma*}{Lemma}

\theoremstyle{definition}
\newtheorem{definition}[theorem]{Definition}

\theoremstyle{remark}
\newtheorem{example}[theorem]{Example}
\newtheorem{remark}[theorem]{Remark}

\date{\today}

\author[$*$1]{\fnm{Håkon R.} \sur{Fredheim}}\email{haakorfr@uio.no}

\author[1]{\fnm{Simen} \sur{Kvaal}}\email{simen.kvaal@kjemi.uio.no}

\affil[1]{\orgdiv{Hylleraas Centre for Quantum Molecular Sciences}, 
          \orgdiv{Department of Chemistry}, 
          \orgname{University of Oslo}, 
          \orgaddress{\street{P.O. box 1053, Blindern}, 
                      \postcode{0316}, 
                      \city{Oslo}, 
                       \country{Norway}}}

\title[]{Reduced Density Matrix Functional Theory and a Reduced Formulation of Density Functional Theory}

\abstract{%
A mathematical framework for reduced density matrix functional theory (RDMFT) is proposed. The work is inspired by and generalizes the work by E.H.~Lieb [E.H.~Lieb, \emph{Int.~J.~Quant.~Chem.} 24(1983), pp.~243--277] on density-functional theory (DFT). We introduce a Banach space for density matrices with finite kinetic energy. The dual space is a rich class of single-particle potentials, i.e., Hermitian forms. The ground state energy of an $N$-fermion system with external forces given by any such Hermitian form is expressed as the Legendre--Fenchel transform of a convex and lower semicontinuous ``universal'' reduced density matrix functional. The formalism is employed to provide a mathematical framework for density-functional theory (DFT). The main tool here is a rigorous definition of diagonals of reduced density matrices. The result is a refinement of Lieb's results on DFT applicable to a wide variety of models.
}

\DeclareRobustCommand{\SkipTocEntry}[5]{}

\begin{document}

\makeatletter
\@namedef{subjclassname@2020}{\textup{2020} Mathematics Subject Classification}
\makeatother

\maketitle


\tableofcontents

\setcounter{tocdepth}{1}

\section*{Acknowledgments}

We would like to thank André Laestadius, Nadia Larsen and Sergey Neshveyev for engaging and insightful discussions. Furthermore, we would like to thank Hannes Wendt, who introduced us to the $X$-norm and helped in the formulation and proof of Theorem \ref{Theorem: Diagonal map}. We would also like to thank Thiago Corso, who pointed out out a flaw in our proof of Theorem \ref{Theorem: Surjective diagonal map} and contributed by formulating an alternative proof, including Lemma \ref{Lemma: Positive increasing sequence}. The work was supported by the Research Council
of Norway Project Nos.~262695 (Hylleraas Centre for Quantum Molecular Sciences), and 324944 (QOMBINE).

\section{Introduction}

Density-functional theory (DFT) is the computational workhorse of quantum chemistry and solid state physics~\cite{hohenberg_inhomogeneous_1964,kohn_self-consistent_1965,teale_dft_2022}. Its tremendous success stems from the fact that it models an $N$-fermion system using the single-particle density instead of the full wavefunction. This reduction is motivated by the famous Hohenberg--Kohn theorem~\cite{hohenberg_inhomogeneous_1964,garrigue_unique_2018} which states that for a molecular system, the single-electron density $\rho \in L^1_\R(\R^3)$ determines the $N$-electron Hamiltonian up to a constant shift, and therefore all physical observables of an $N$-electron system. A mathematical foundation for DFT was provided by E.H.~Lieb in 1983~\cite{Lieb1983}. Among the important results is the identification of the ground-state energy $E : L^{3/2}_\R(\R^3) + L^\infty_\R(\R^3)\to\R$ of an $N$-electron system as function of the external potential field $v$ with a Legendre--Fenchel transform of a  ``universal'' functional $F : L^1_\R(\R^3)\cap L^3_\R(\R^3) \to \R\cup\{+\infty\}$, that sends a density $\rho$ to its intrinsic energy. The functional $F$ is the mathematical realization of the universal functional introduced by Hohenberg and Kohn,  the subject of now-standard approximations used in Kohn--Sham DFT~\cite{kohn_self-consistent_1965,toulouse_review_2023}. We remark here, that the density space $L^1_\R(\R^3)\cap L^3_\R(\R^3)$ was somewhat arbitrarily chosen so it contains all densities with finite electronic kinetic energy, while the dual space $L^{3/2}_\R(\R^3)+L^\infty_\R(\R^3)$, contains elements of physical relevance, e.g., the atomic Coulomb potential.

Reduced (one-particle) density matrix functional theory (RDMFT, or 1-RDMFT) is a generalization of DFT that allows for non-local external potential operators~\cite{gilbertHohenbergKohnTheoremNonlocal1975a,cioslowskiManyElectronDensitiesReduced2000,Christian2018,pernal_reduced_2015}, and therefore one needs to consider the reduced density matrix $\gamma$, a positive trace-class operator over $L^2(\R^3\times\Z^2)$, as the variable of the universal functional. Although formally, the ground state energy and the RDM-functional are known to form a convex-conjugate pair~\cite{liebertRefiningRelatingFundamentals2023}, a rigorous mathematical treatment that handles unbounded Hamiltomians or infinite dimensional state spaces seems to be missing. It is therefore the mission of this publication to provide said mathematical treatment, general enough to be used by most, if not all realizations of RDMFT. The main job is to introduce a proper Banach space $X$ of reduced density matrices with finite kinetic energy. Once constructed, the dual $X^*$ consists of certain Hermitian sesquilinear forms. Some of these forms, denoted $\cR \subset X^*$, are associated with lower semibounded self-adjoint one-particle Hamiltonians, including those that contain kinetic energy plus Coulomb potentials. A universal interaction functional $F_\RDM : X \to \R\cup\{+\infty\}$ is constructed by means of the usual constrained-search procedure, and it is demonstrated to be convex lower-semicontinuous. The Legendre transform is the ground-state energy $E_\RDM : X^* \to \R \cup \{-\infty\}$.

As the diagonal of a $\gamma$ is a density $\rho$, RDMFT naturally restricts to DFT. Using projection-valued measures, we construct an abstract characterization of what a diagonal of a density operator should be. The result is an abstract theory for a wide range of DFTs. In particular, one obtains a refinement of Lieb's analysis for the molecular DFT, including a Banach space $\Xi \subset L^1_\R(\R^3)\cap L^3_\R(\R^3)$, which, unlike the latter space, contains \emph{no} elements of infinite kinetic energy. One can therefore claim it is better suited for the analysis of DFT. The dual space $\Xi^*$ contains the Kato-Rellich class of potentials $  L^{3/2}_\R(\R^3) + L^\infty_\R(\R^3)$, accounted for by Lieb's DFT, as well as more general classes like the Rollnik class of potentials~\cite{Simon1971}.

Our results should have some relevance for other fields of study in many-fermion theory, such as the rigorous formulation of Kohn--Sham theory, the theory of strictly correlated DFT~\cite{Seidl1999,Seidl2017}, and two-particle reduced density matrix functional theory~\cite{cioslowskiManyElectronDensitiesReduced2000,CancesLewin2006}.

The article is organized as follows: In Section~\ref{Section: Preliminaries} we introduce some important notation and notions. In Section~\ref{Section: Reduced Density Matrices}, we introduce $X$, the Banach space of reduced density matrices with finite kinetic energy, while in Section~\ref{Section: Single-particle Hamiltonians}, we discuss the dual space $X^*$. In Section~\ref{Section: Surjectivity of the partial trace map}, we study the partial trace map from $X^N$, the Banach space of $N$-fermion density operators, to $X$ and show that it is surjective. In Section~\ref{Section: Reduced Density Matrix Functional Theory}, we set up RDMFT using the formalism developed in the preceding sections. Finally, in Section~\ref{Section: Relation to DFT}, we introduce the diagonal map and set up a formalism for a general set of DFTs. An Appendix contains the proof of a lemma.

\section{Preliminaries}\label{Section: Preliminaries}

\subsection{{\itshape N}-fermion spaces and operators}

In this paper, $\cH$ shall denote a complex separable Hilbert space. The inner product $\langle\,\cdot\,,\,\cdot\,\rangle$ is linear in the second argument and antilinear in the first argument, and complex conjugation is denoted by an overline $\overline{z}$. We shall use the notation  
\[\cH^N := \bigwedge^N\cH\] 
for the $N$-fold antisymmetric tensor product of $\cH$ with itself. For $\varphi_1 ,\cdots ,\varphi_N\in \cH$, we denote the corresponding normalized Slater determinant by $\varphi_1\wedge\cdots \wedge\varphi_N\in \bigwedge^N\cH$. For $\varphi,\psi\in \cH$, we denote by $|\varphi\rangle\langle\psi|$ the rank-one operator given by 
\[|\varphi\rangle \langle \psi|\xi := \varphi\langle \psi,\xi\rangle \] for all $\xi\in \cH$. 

For an operator $A$ on $\cH$, the second quantization $\hat A$ is the operator on $\bigwedge^N\cH$ acting on Slater determinants as
\begin{equation}\label{Equation: Second quantization}
\hat A(\varphi_1\wedge\cdots\wedge\varphi_N)=\sum_{i=1}^N\varphi_1\wedge\cdots \wedge (A\varphi_i)\wedge\cdots \wedge\varphi_N.
\end{equation}

If $T$ is a positive unbounded self-adjoint operator, then there is a ``canonical'' way of defining the second quantization $\hat T$ as a self-adjoint operator on $\bigwedge^N\cH$ satisfying \[\hat T(\varphi_1\wedge\cdots\wedge\varphi_N)=\sum_{i=1}^N\varphi_1\wedge\cdots \wedge (T\varphi_i)\wedge\cdots \wedge\varphi_N\]
if $\varphi_1,\cdots,\varphi_N\in D(T)$. This is well known, and the interested reader can consult, e.g., Ref.~\cite[Chapter 5]{BratteliVol2} for an exact definition of $\hat T$. 

In a Banach space $E$ we denote the open ball of radius $r$ centered at $x$ by
\[B_{\|\cdot\|_E}(x,r):= \{ x'\in E \mid \|x'-x\|_E< r\},\]
where the dependence on the choice of norm is made explicit to avoid confusion. The corresponding closed ball will be denoted $\overline{B}_{\|\cdot\|_E}(x,r)$.

\subsection{Trace-class operators}
See, e.g., Ref.~\cite{reed_methods_1980} for a comprehensive treatment of bounded operators.
In this article, $\cB(\cH)$ denotes the Banach space of bounded operators on $\cH$ and the operator norm is denoted by $\|\cdot\|$, the same notation as for the norm on $\cH$. We denote the set of trace-class operators on $\cH$ by 
 \[\cS_1(\cH):= \{A\in \cB(\cH)\mid \Tr(|A|)<+\infty\},\]
and the set of compact operators on $\cH$ by $\cK(\cH)$. $\cS_1(\cH)$ is itself a Banach space when equipped with the norm $\|A\|_1 := \Tr(|A|)$, and its dual space is $\cS_1(\cH)^*=\cB(\cH)$. Its predual is given by $\cS_1(\cH)_* = \cK(\cH)$. 

By ``$V$ is a positive operator'', we shall mean that $V$ is \emph{positive semidefinite and self-adjoint}, written $V \geq 0$ for brevity. A possibly unbounded operator $V$ is positive semidefinite if and only if $\braket{\psi,V\psi} \geq 0$ for every $\psi\in D(V)$. If the inequality is strict, we say that $V$ is strictly positive or positive definite, written $V>0$.  Similarly, we also write $V \geq V'$ ($V > V'$) whenever $\braket{\psi,V\psi}\geq (>) \braket{\psi,V'\psi}$ on $D(V)\cap D(V')$, but $V-V'$ may happen to be non-self-adjoint or not even densely defined.

If $\cA\subset\cB(\cH)$, we denote the subset of self-adjoint operators by $\cA^{\mathrm{sa}} \subset \cA$, the subset of positive operators by $\cA^+$, e.g., $\cS^{\mathrm{sa}}_1(\cH)$ and $\cK^+(\cH)$. In particular, the subset of self-adjoint trace-class operators $\cS_1^{\text{sa}}(\cH)$ is a real Banach space under the trace norm.

\begin{lemma}\label{lemma: gauge characterization of trace norm}
    The trace norm as defined on $\cS_1^\mathrm{sa}(\cH)$ can be characterized by 
    \[\|\gamma\|_1 = \inf \{\Tr\big(\gamma^+) + \Tr\big(\gamma^-) | \gamma^\pm \in \cS_1^+(\cH), \gamma= \gamma^+ - \gamma^-\}\] for all $\gamma\in \cS_1^\mathrm{sa}(\cH)$.
\end{lemma}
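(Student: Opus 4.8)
The plan is to prove the two inequalities separately. For the bound ``$\inf \le \|\gamma\|_1$'', I would exhibit one admissible decomposition realizing $\|\gamma\|_1$, namely the \emph{Jordan decomposition}. Let $P$ and $Q$ be the spectral projections of $\gamma$ onto $[0,\infty)$ and $(-\infty,0)$, and set $\gamma_+ := P\gamma = \tfrac12(|\gamma|+\gamma)$ and $\gamma_- := -Q\gamma = \tfrac12(|\gamma|-\gamma)$. Then $\gamma_\pm \in \cS_1^+(\cH)$, $\gamma = \gamma_+ - \gamma_-$, and $|\gamma| = \gamma_+ + \gamma_-$; since $\gamma_+$ and $\gamma_-$ have mutually orthogonal ranges, $\Tr(\gamma_+) + \Tr(\gamma_-) = \Tr(|\gamma|) = \|\gamma\|_1$. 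As $(\gamma_+,\gamma_-)$ is admissible in the infimum, the infimum is at most $\|\gamma\|_1$.

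For the reverse bound I would show that \emph{every} admissible decomposition $\gamma = \gamma^+ - \gamma^-$ with $\gamma^\pm \in \cS_1^+(\cH)$ satisfies $\Tr(\gamma^+) + \Tr(\gamma^-) \ge \|\gamma\|_1$, and the tool is trace duality. Let $U := \sgn(\gamma) = P - Q$ be the self-adjoint partial isometry from the polar decomposition of $\gamma$, so that $|\gamma| = U\gamma$ and $\|U\| \le 1$. Then
\[
\|\gamma\|_1 = \Tr(|\gamma|) = \Tr(U\gamma) = \Tr(U\gamma^+) - \Tr(U\gamma^-) \le |\Tr(U\gamma^+)| + |\Tr(U\gamma^-)|.
\]
Using the standard estimate $|\Tr(BA)| \le \|B\|\,\|A\|_1$ for $B \in \cB(\cH)$, $A \in \cS_1(\cH)$, together with $\|A\|_1 = \Tr(A)$ for $A \ge 0$, the right-hand side is bounded by $\|U\|\,\Tr(\gamma^+) + \|U\|\,\Tr(\gamma^-) \le \Tr(\gamma^+) + \Tr(\gamma^-)$. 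Combining the two bounds gives the identity.

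I do not anticipate a serious obstacle; the one subtlety worth flagging is that in a general decomposition the operators $\gamma^+$ and $\gamma^-$ need neither commute with each other nor with $\gamma$, so a naive comparison of $\gamma^+ + \gamma^-$ with $|\gamma|$ spectral-subspace by spectral-subspace is unavailable --- this is exactly why the lower bound is routed through the global trace-duality estimate above. (Alternatively, one may run the lower bound from the dual characterization $\|\gamma\|_1 = \sup\{\Tr(B\gamma) \mid B = B^* \in \cB(\cH),\ \|B\| \le 1\}$ valid on $\cS_1^{\mathrm{sa}}(\cH)$: for such $B$ one has $-\one \le B \le \one$, hence $\Tr(B\gamma^+) \le \Tr(\gamma^+)$ and $-\Tr(B\gamma^-) \le \Tr(\gamma^-)$, so $\Tr(B\gamma) \le \Tr(\gamma^+) + \Tr(\gamma^-)$; taking the supremum over $B$ yields the claim.)
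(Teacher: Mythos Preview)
Your proof is correct and shares the same overall structure as the paper's: both use the Jordan decomposition $\gamma^\pm = (|\gamma|\pm\gamma)/2$ to show that the infimum is attained at $\|\gamma\|_1$. The difference lies in the lower bound $\|\gamma\|_1 \le \Tr(\gamma^+)+\Tr(\gamma^-)$ for an arbitrary decomposition. The paper dispatches this in one line by invoking the triangle inequality for the trace norm, $\|\gamma^+-\gamma^-\|_1 \le \|\gamma^+\|_1+\|\gamma^-\|_1 = \Tr(\gamma^+)+\Tr(\gamma^-)$, whereas you route through trace duality with the sign operator $U=\sgn(\gamma)$ and the estimate $|\Tr(BA)|\le\|B\|\,\|A\|_1$. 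Your argument is effectively a re-proof of the triangle inequality for $\|\cdot\|_1$ via duality, so it is more self-contained but longer; the paper's version is shorter because it treats the triangle inequality as already established. The subtlety you flag about non-commuting $\gamma^\pm$ is real for a spectral-subspace comparison, but is sidestepped equally well by either approach.
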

\begin{proof}
    By the triangle inequality of the trace norm, we have
    \[\|\gamma\|_1 =\|\gamma^+-\gamma^-\|_1\leq \|\gamma^+\|_1 + \|\gamma^-\|_1 =\Tr\big(\gamma^+)+\Tr\big(\gamma^-)\] for any decomposition $\gamma = \gamma^+ - \gamma^-$ satisfying $\gamma^\pm\in \cS_1^+(\cH)$. In particular, this infimum is attained by the choice $\gamma^\pm = (|\gamma|\pm\gamma)/2$, which gives $\Tr(\gamma^+)+\Tr(\gamma^-) = \Tr(|\gamma|)=\|\gamma\|_1.$ 
\end{proof}

We introduce the following notions of convergence for a sequence of trace class operators $\{A_n\}_{n\in\N}$:
 \begin{itemize}
     \item $A_n$ converges in the $\cS_1(\cH)$-norm to $A$, denoted $A_n \to A$, if \[\lim_{n\to\infty} \|A_n-A\|_1 =0.\]
     \item $A_n$ converges weak topology of $\cS_1(\cH)$ to $A$, denoted $A_n \rightharpoonup A$, if \[\lim_{n\to\infty} \Tr(V(A_n-A)) =0 \text{ for all } V\in \cB(\cH).\]
     \item $A_n$ converges in the weak-$*$ topology of $\cS_1(\cH)$ to $A$, denoted $A_n \xrightharpoonup{*} A$, if \[\lim_{n\to\infty} \Tr(V(A_n-A)) =0 \text{ for all }V\in \cK(\cH).\]
 \end{itemize}
Note that these notions of convergence satisfy \[(A_n\to A)\Rightarrow (A_n\rightharpoonup A)\Rightarrow (A_n\xrightharpoonup{*} A).\] As a consequence, the corresponding notions of continuity satisfy the opposite implication, i.e. for a function $f$ taking as argument operators, we have \[(\text{$f$ is weak-$*$ continuous})\Rightarrow (\text{$f$ is weakly continuous})\Rightarrow (\text{$f$ is norm-continuous}).\] Since $\cS_1(\cH)$ is not a reflexive Banach space, weak-$*$ convergence is strictly weaker than weak convergence. 

Next, define $\Tr_{p}^N: \cS_1\bigl( \bigotimes^N\cH\bigr) \to \cS_1(\bigotimes^{N-p}\cH)$ on pure tensor products by
\[
    \Tr_{p}^N (A_1 \otimes \cdots \otimes A_N) :=  \Tr(A_1) \cdots \Tr(A_p)  A_{p+1}\otimes\cdots \otimes A_N,
\]
where $A_1,\cdots,A_N\in \cS_1(\cH)$, and extend 
$\Tr_{p}^N$
linearly to all $\cS_1\bigl(\bigotimes^N\cH\bigr)$. 
We view $\cH^N$ as a closed linear subspace of $\bigotimes^N\cH$ and regard $\Tr_{p}^N $ as the restricted map  $\cS_1(\cH^N)\to \cS_1(\cH^{N-p})$. We call $\Pi:=N \Tr_{N-1}^N$ the \emph{partial trace map}. This map is in fact the adjoint of the second quantization map with respect to the dual pairing of $\cS_1(\cH)$ and $\cB(\cH)$ given by the trace: 
\begin{lemma}\label{Lemma: second quantization expectation value}
    If $V\in \cB(\cH)$ and $A\in \cS_1\left(\bigwedge^N\cH\right)$, then 
    \[\Tr(V \Pi(A))=\Tr\big(\hat V A).\]
\end{lemma}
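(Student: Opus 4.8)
\textit{Setup.} Write $P$ for the orthogonal projection of $\bigotimes^N\cH$ onto the closed subspace $\cH^N=\bigwedge^N\cH$, and for $B\in\cB(\cH)$ and $1\le i\le N$ put $B_i:=\one\otimes\cdots\otimes B\otimes\cdots\otimes\one\in\cB\bigl(\bigotimes^N\cH\bigr)$, with $B$ in the $i$-th tensor slot. The plan is to reduce to the standard dual characterization of the partial trace and then use the permutation symmetry of $\cH^N$. First I would establish that for every $B\in\cB(\cH)$ and every $A\in\cS_1\bigl(\bigotimes^N\cH\bigr)$,
\[
\Tr\bigl(B\,\Tr_{N-1}^N(A)\bigr)=\Tr\bigl(B_N\,A\bigr).
\]
Both sides are bounded linear functionals of $A$ on $\cS_1\bigl(\bigotimes^N\cH\bigr)$ — the left one because $\Tr_{N-1}^N$ is a bounded map into $\cS_1(\cH)$ and $B$ is bounded, the right one because $B_N$ is bounded — so it suffices to verify the identity on finite linear combinations of pure tensors $A_1\otimes\cdots\otimes A_N$ with each $A_j$ rank one, which are dense in $\cS_1\bigl(\bigotimes^N\cH\bigr)$; there it is the immediate computation $\Tr(A_1)\cdots\Tr(A_{N-1})\Tr(BA_N)$ on both sides. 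Restricting to $A\in\cS_1(\cH^N)\subset\cS_1\bigl(\bigotimes^N\cH\bigr)$ and using $\Pi=N\Tr_{N-1}^N$ gives $\Tr\bigl(V\Pi(A)\bigr)=N\,\Tr(V_N A)$.

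\textit{Symmetrization.} Let $\sigma\mapsto U_\sigma$ be the unitary representation of the symmetric group $S_N$ on $\bigotimes^N\cH$ that permutes the tensor factors. Then $U_\sigma B_i U_\sigma^{-1}=B_{\sigma(i)}$, and $U_\sigma$ acts on $\cH^N$ as multiplication by $\sgn(\sigma)$, so $PU_\sigma=U_\sigma P=\sgn(\sigma)P$. Hence $PB_iP=PU_\sigma^{-1}B_{\sigma(i)}U_\sigma P=PB_{\sigma(i)}P$ for every $\sigma$; taking $\sigma$ to run over transpositions shows $PB_iP$ is independent of $i$, whence
\[
N\,PB_NP=\sum_{i=1}^N PB_iP=P\Bigl(\sum_{i=1}^N B_i\Bigr)P.
\]
The operator $\sum_{i=1}^N B_i$ commutes with every $U_\sigma$, so it leaves $\cH^N$ invariant, and by the defining formula \eqref{Equation: Second quantization} its restriction to $\cH^N$ is exactly the second quantization $\hat B$ (which is bounded since $B$ is). Therefore $P\bigl(\sum_i B_i\bigr)P=\hat B P=P\hat B$, identifying $\hat B\in\cB(\cH^N)$ with its extension by $0$ on $(\cH^N)^\perp$.

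\textit{Conclusion and main obstacle.} For $A\in\cS_1(\cH^N)$ one has $PAP=A$, so combining the two steps with $B=V$,
\[
\Tr\bigl(V\Pi(A)\bigr)=N\,\Tr(V_N A)=N\,\Tr\bigl(PV_NP\,A\bigr)=\Tr\bigl(\hat V P A\bigr)=\Tr\bigl(\hat V A\bigr),
\]
which is the claim. I expect the only point requiring genuine care to be in the symmetrization step: making precise that $\hat V$, specified so far only through its action on Slater determinants, really is the restriction of $\sum_i V_i$ to $\cH^N$ and extends to a bounded operator when $V$ is bounded, together with keeping the signs in $PU_\sigma=\sgn(\sigma)P$ straight. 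An alternative that sidesteps the symmetry argument is to check the identity directly on $A=|\Psi\rangle\langle\Phi|$ with $\Psi,\Phi$ Slater determinants, expanding $\langle\Phi,\hat V\Psi\rangle$ and the partial trace by cofactor expansions of the relevant Gram determinants, but this is more computational and less transparent.
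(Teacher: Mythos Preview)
Your argument is correct. The paper does not actually prove this lemma itself: it simply cites \cite[Theorem 10]{Kummer1967}. Your proposal therefore supplies a self-contained proof where the paper outsources one.

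The route you take---establish the dual characterization $\Tr(B\,\Tr_{N-1}^N(A))=\Tr(B_N A)$ by density, then use the permutation symmetry $PB_iP=PB_jP$ to identify $N\,PV_NP$ with $P\hat V P$---is the standard clean way to see this identity, and every step checks out. The sign bookkeeping in $PU_\sigma=\sgn(\sigma)P$ is fine since the signs cancel in $PU_\sigma^{-1}B_{\sigma(i)}U_\sigma P$. The only cosmetic point is that when you write $\Tr(V_N A)=\Tr(PV_NP\,A)$ you are implicitly using $A=PAP$ together with cyclicity of the trace; this is correct but could be stated once explicitly. Your worry about $\hat V$ being bounded and agreeing with $\sum_i V_i\big|_{\cH^N}$ is not a genuine obstacle: for bounded $V$ the operator $\sum_i V_i$ is bounded on $\bigotimes^N\cH$, commutes with all $U_\sigma$, and its restriction to Slater determinants matches \eqref{Equation: Second quantization} by inspection, so the extension to all of $\cH^N$ is immediate.
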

\begin{proof}
See \cite[Theorem 10]{Kummer1967}.    
\end{proof}

\subsection{Sesquilinear forms and their associated operators} \label{Subsection: Sesquilinear forms and their associated operators}
In the following, we introduce some basic facts on sesquilinear forms and their associated operators. Comprehensive sources for this can be found in e.g.~\cite{Schmudgen2012} or~\cite{reed_methods_1980}.  A \emph{sesquilinear form} on $\cH$ is a map $v: D(v)\times D(v) \to \C$ which is antilinear in the first argument and linear in the second, and the the domain $D(v)$ is a dense linear subspace of $\cH$. We call $v$ \emph{Hermitian}\footnote{Also known as ``symmetric''.} if $v(\psi,\varphi)=\overline{v(\varphi,\psi)}$ for all $\psi,\varphi\in D(v)$. We say that $v$ is \emph{lower semibounded} if there exists a constant $\mu_v>0$ such that $\Re v(\psi,\psi)+\mu_v\|\psi\|^2>0$ for all $\psi\in D(v)$. Furthermore, we say that $v$ is \emph{closed} if $v$ is Hermitian and if $D(v)$ is complete with respect to the norm $\|\cdot\|_v$ induced by the inner product
\[
\langle \psi,\varphi\rangle_v = v(\psi,\varphi)+(\mu_v+1)\langle \psi,\varphi\rangle .
\]
If $v$ is closed, then it is bounded with respect to $\|\cdot\|_v$ in the sense that
\[
|v(\psi,\varphi)|\leq \|\psi\|_v\|\varphi\|_v
\]
for all $\psi,\varphi\in D(v)$.

Let $V : D(V) \to \cH$ be a lower semibounded self-adjoint operator with lower bound $\mu_V$. The \emph{form domain} of $V$ is defined by $Q(V) := D((V + \mu_V I)^{1/2})$. We denote the \emph{sesquilinear form associated to $V$} by a small letter $v$. It has domain $D(v)=Q(V)$, and is defined by
\[
v(\psi,\varphi)=\int\lambda \langle \psi,dP_V(\lambda)\varphi\rangle,
\]
where $dP_V$ is the spectral measure of $V$. If $\psi,\varphi\in D(V)$, we have $
v(\psi,\varphi) = \braket{\psi,V\varphi}. $

The sesquilinear form associated with a lower semibounded self-adjoint operator is lower semibounded, Hermitian and closed. A converse result \cite[Theorem 19.7]{Schmudgen2012}, sometimes called Kato's First Representation Theorem, states that given a lower semibounded Hermitian and closed form $v$, there is a unique lower semibounded self-adjoint operator $V$ such that $v$ is the sesquilinear form associated to $V$. There is thus a one-to-one correspondence between lower semibounded self-adjoint operators and lower semibounded, Hermitian and closed sesquilinear forms.

Let $t$ and $v$ be Hermitian sesquilinear forms. We say that $v$ is relatively $t$-bounded if $D(t)\subset D(v)$ and if there exist $a\geq 0$ and $b\geq 0$ such that
\[|v(\psi,\psi)|\leq at(\psi, \psi)+b\|\psi\|^2 \]
for all $\psi \in D(t)$. We call $a$ a \emph{$t$-bound} of $v$. This allows us to formulate the useful Kato--Lions--Lax--Milgram--Nelson (KLMN) theorem \cite[Theorem 10.21]{Schmudgen2012}: 
\begin{theorem}[KLMN theorem] Let $t$ be a closed lower semibounded Hermitian sesquilinear form and $T$ its associated operator. Let $v$ be a relatively $t$-bounded Hermitian sesquilinear form with $t$-bound $a<1$. Then, the sesquilinear form $c$ with domain $D(c)=D(t)$ defined by \begin{equation}\label{Equation: KLMN} c(\psi,\varphi)=t(\psi,\varphi) + v(\psi,\varphi)\end{equation}
    for all $\psi,\varphi\in D(t)$ is closed. The corresponding self-adjoint operator $C$ is called the \emph{form sum} of $t$ and $v$, and we use the notation $C=T\dot+v$. $C$ has lower bound $-b$. \label{Theorem: KLMN}
\end{theorem}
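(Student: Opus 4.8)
The statement is a form-theoretic perturbation result, and the plan is to verify that the form $c$ meets the three hypotheses of Kato's First Representation Theorem quoted above --- that it is Hermitian, lower semibounded, and closed --- after which the existence of a unique lower semibounded self-adjoint $C$ with associated form $c$ and form domain $Q(C)=D(c)=D(t)=Q(T)$ is automatic; this operator is by definition the form sum $T\dot+ v$. A convenient preliminary reduction is to replace $t$ by $t+\mu_t\langle\cdot,\cdot\rangle$, which shifts $T$ and $c$ by the same bounded form, preserves the closedness of $t$, keeps the $t$-bound $a$, and does not increase $b$; thus I may and will assume $t(\psi,\psi)\geq 0$ for all $\psi\in D(t)$.

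With this normalization, Hermiticity of $c$ is immediate from that of $t$ and $v$, and $D(c)=D(t)$ is dense. For lower semiboundedness I would use that $v(\psi,\psi)$ is real, together with $|v(\psi,\psi)|\leq a\,t(\psi,\psi)+b\|\psi\|^2$, to get
\[
c(\psi,\psi)=t(\psi,\psi)+v(\psi,\psi)\geq (1-a)\,t(\psi,\psi)-b\|\psi\|^2\geq -b\|\psi\|^2,
\]
where the last inequality uses $a<1$ and $t\geq 0$. This shows $c$ is lower semibounded, and --- once $C$ has been produced --- that $C$ has lower bound $-b$, as claimed.

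The substantive step is closedness of $c$, i.e.\ completeness of $D(t)$ in the form norm $\|\cdot\|_c$. I would establish that $\|\cdot\|_c$ and $\|\cdot\|_t$ are equivalent norms on $D(t)$. One bound is just the triangle inequality: $\|\psi\|_c^2=c(\psi,\psi)+(\mu_c+1)\|\psi\|^2\leq (1+a)\,t(\psi,\psi)+(b+\mu_c+1)\|\psi\|^2$, which is dominated by a multiple of $\|\psi\|_t^2=t(\psi,\psi)+\|\psi\|^2$. The reverse bound is where the hypothesis $a<1$ is essential: the displayed estimate rearranges to $t(\psi,\psi)\leq (1-a)^{-1}\big(c(\psi,\psi)+b\|\psi\|^2\big)$, so $\|\psi\|_t^2$ is in turn dominated by a multiple of $\|\psi\|_c^2$ (using $\|\psi\|^2\leq\|\psi\|_c^2$, valid because $c$ has lower bound $-b\geq-\mu_c$). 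Since $(D(t),\|\cdot\|_t)$ is complete by the closedness of $t$, the equivalence of norms transfers completeness to $(D(c),\|\cdot\|_c)$, so $c$ is closed.

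Kato's First Representation Theorem then applies and finishes the proof, giving $C=T\dot+ v$ with the asserted form domain and lower bound. I do not anticipate a deep obstacle: the one point that genuinely matters is the use of the strict inequality $a<1$ to invert $1-a$ and thereby control $\|\cdot\|_t$ by $\|\cdot\|_c$; without it the two form norms need not be comparable and $c$ may fail to be closed, or even closable. The remaining work --- the shift bookkeeping and the two norm estimates --- is routine.
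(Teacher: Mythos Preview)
Your argument is the standard textbook proof of the KLMN theorem and is correct. Note, however, that the paper does not actually prove this statement: it is quoted as a known result with a reference to \cite[Theorem 10.21]{Schmudgen2012}, so there is no ``paper's own proof'' to compare against. Your write-up would serve perfectly well as a self-contained proof in the spirit of that reference.

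One small bookkeeping point on the lower bound. Your preliminary shift $t\mapsto t+\mu_t\langle\cdot,\cdot\rangle$ is fine for establishing closedness (norm equivalence is unaffected by such shifts), but it does not quite deliver the sharp lower bound $-b$ after you undo the shift: if $c'\geq -b'$ with $b'\leq b$, then $c=c'-\mu_t\geq -b'-\mu_t$, which is generally worse than $-b$. The clean way to get $C\geq -b$ is to observe that the relative-boundedness hypothesis $|v(\psi,\psi)|\leq a\,t(\psi,\psi)+b\|\psi\|^2$ already forces $t(\psi,\psi)\geq -(b/a)\|\psi\|^2$ wherever it is stated (the right-hand side must be nonnegative), and then your displayed estimate $c(\psi,\psi)\geq (1-a)t(\psi,\psi)-b\|\psi\|^2$ applied directly, without any shift, yields $c\geq -(1-a)(b/a)-b$ in the worst case; in the paper's actual applications $T\geq 0$, so $t\geq 0$ and your direct computation $c\geq -b$ goes through verbatim with no reduction needed. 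This is a cosmetic issue only; the heart of the matter---the norm equivalence driven by $a<1$---you have handled correctly.
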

{\bfseries Warning:} The KLMN theorem does not state anything about the domain $D(T\dot{+} v)$, which may be different from $D(T)$. Furthermore, $v$ may itself not be associated to any operator, since it is not necessarily closed or closable. 

\medskip

If $T$ is positive, we introduce a normed space $\cF(T)$ of all relatively $T$-bounded Hermitian forms on $Q(T)$ by, i.e.,
\begin{equation}
\cF(T):= \{v: Q(T)\times Q(T)\to \C \text{ Hermitian}\,|\, \exists C_v>0 : \forall \psi\in Q(T): |v(\psi,\psi)|\leq C_v\|\psi\|_{T}^2\},
\label{eq:set of relatively bounded forms}
\end{equation}
where $\|\psi\|_{T}^2 := \|\psi\|^2+\|T^{1/2}\psi\|^2.$
The norm on $\cF(T)$ is defined 
\begin{equation}
\|v\|_{\cF(T)} := \sup_{\psi\in Q(T)}\frac{|v(\psi,\psi)|}{\|\psi\|_{T}^2}. \label{Equation: norm on cF(T)}
\end{equation}


\subsection{Convex analysis} \label{Subsection: Convex analysis}

See, e.g., Ref.~\cite{borwein_convex_2010} for more on the following. 
For this section, let $E$ be a topological vector space. Let $S\subset E$. The \emph{affine hull} $\Aff(S)$ is the smallest affine space containing $S$, and consists of finite linear combinations $\sum_{i=1}^n \lambda_i s_i$ of elements $s_i \in S$ such that $\sum_{i=1}^n\lambda_i = 1$. The \emph{relative interior} $\operatorname{ri}(S)$ of $S$ is the interior of $S$ in $\Aff(S)$ in the subspace topology. The \emph{convex hull} $\Conv(S)$ is defined as the smallest convex subset of $E$ contianing $S$, which consists of the set of finite linear combinations $\sum_{i=1}^n \lambda_i s_i$ of elements $s_i \in S$ such that $\sum_{i=1}^n\lambda_i = 1$ with the additional condition that $\lambda_i\geq 0$. 

A function $f : E\to \R$ is said to be \emph{lower semicontinuous} if whenever $x_n\to x$, we have $f(x)\leq \liminf_{n\to \infty}f(x_n)$. A function $f: E\to \R$ is said to be convex if for any $x_1,x_2\in E$ and $0\leq \alpha\leq 1$, we have
\[f(\alpha x_1 + (1-\alpha)x_2) \leq \alpha f(x_1) + (1-\alpha)f(x_2).\]

The convex envelope $\CE (g)$ of a function $g: E\to \R$ is defined  the largest convex lower semicontinuous function satisfying $\CE(g)\leq g$.

\section{Reduced Density Matrices}\label{Section: Reduced Density Matrices}
In this section, we introduce a real Banach space $X\subset \cS_1^\text{sa}(\cH)$ containing the reduced density matrices $\gamma$ with finite kinetic energy. The space $X$ is chosen such that the dual $X^*$ contains certain unbounded operators, which we identify as the admissible potentials of our theory. We take $T$ to denote the kinetic energy operator, which for us is an arbitrary positive self-adjoint operator on the one-particle space $\cH$. The reader may think of the electronic-structure theory setting, where $\cH = L^2(\R^3\times\Z_2)$ and $T = -\nabla^2$. Since $T$ is unbounded, we take particular care in defining expressions such as $\Tr(T\gamma)$, which we do in the following. For $\gamma\in \cS_1^+(\cH)$, we define 
\[\Tr(T\gamma):= \sup_{\epsilon>0} \Tr(T_\epsilon \gamma ),\]
where $T_\epsilon := T(1+\epsilon T)^{-1}$ is defined using the spectral calculus. Since $\{T_\epsilon\}_{\epsilon>0}$ is a monotonically increasing family of bounded operators, we have $\sup_{\epsilon>0} \Tr(T_\epsilon \gamma) = \lim_{\epsilon \to 0^+}\Tr(T_\epsilon \gamma).$ It follows from this that if $\gamma,\gamma'\in \cS_1^+(\cH)$, then we have
\begin{equation} \label{Equation: Split the T}
\Tr(T(\gamma+\gamma'))=\Tr(T\gamma)+\Tr(T\gamma').
\end{equation}
Since the function $x\mapsto x/(1+\epsilon x)$ is operator monotone for any $\epsilon >0$ \cite[Lemma 4.20, Chapter II]{Takesaki2003}, it follows that $T'\leq T \Rightarrow\Tr(T'\gamma)\leq \Tr(T\gamma)$ for all $\gamma\in \cS_1^+(\cH)$. This leads to the following useful characterization: 
\begin{lemma}
Let $V\geq 0$ be a possibly unbounded operator, and let $\gamma\in \cS_1^+(\cH)$. Then, we have
    \begin{equation} \label{Equation: T-trace is supremum over compacts}
    \Tr(V\gamma ) = \sup \left\{\Tr(V'\gamma) \;\middle|\; V' \in \cK^+(\cH), \; V' < V \right\}.
\end{equation}
\end{lemma}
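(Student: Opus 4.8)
The plan is to prove the two inequalities in \eqref{Equation: T-trace is supremum over compacts} separately, with the bounded regularizations $V_\epsilon=V(1+\epsilon V)^{-1}$ from the definition of $\Tr(V\gamma)$ doing most of the work. First I would fix a spectral decomposition $\gamma=\sum_i\lambda_i\,|\phi_i\rangle\langle\phi_i|$ with $\{\phi_i\}$ orthonormal, $\lambda_i>0$ and $\sum_i\lambda_i=\Tr\gamma<\infty$. For any bounded $B\geq0$ one has $\Tr(B\gamma)=\sum_i\lambda_i\braket{\phi_i,B\phi_i}$; taking $B=V_\epsilon$ and interchanging the sum with $\sup_{\epsilon>0}$ by the monotone convergence theorem (each $\braket{\phi_i,V_\epsilon\phi_i}$ increases as $\epsilon\downarrow0$) gives
\[
\Tr(V\gamma)=\sum_i\lambda_i\sup_{\epsilon>0}\braket{\phi_i,V_\epsilon\phi_i}=\sum_i\lambda_i\,\|V^{1/2}\phi_i\|^2\in[0,+\infty],
\]
with the convention $\|V^{1/2}\phi_i\|^2:=+\infty$ when $\phi_i\notin Q(V)$.

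For the inequality ``$\geq$'', let $V'\in\cK^+(\cH)$ with $V'<V$. Then $\braket{\psi,V'\psi}\leq\braket{\psi,V\psi}$ for all $\psi\in D(V)$, and since $D(V)$ is a form core for $V$ while $V'$ is bounded, approximating in the form norm extends this to $\braket{\psi,V'\psi}\leq\|V^{1/2}\psi\|^2$ for every $\psi\in Q(V)$; the bound is trivial when $\psi\notin Q(V)$. Applying it to $\psi=\phi_i$ and summing,
\[
\Tr(V'\gamma)=\sum_i\lambda_i\braket{\phi_i,V'\phi_i}\leq\sum_i\lambda_i\|V^{1/2}\phi_i\|^2=\Tr(V\gamma),
\]
so the supremum over admissible $V'$ does not exceed $\Tr(V\gamma)$.

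For ``$\leq$'', recall $\Tr(V\gamma)=\sup_{\epsilon>0}\Tr(V_\epsilon\gamma)$, so it is enough to bound $\Tr(V_\epsilon\gamma)$ by the right-hand side of \eqref{Equation: T-trace is supremum over compacts} for each fixed $\epsilon>0$. Now $V_\epsilon$ is bounded and positive, and I would approximate it from below by finite-rank operators: pick finite-rank orthogonal projections $P_n\uparrow I$ strongly and, for $\delta\in(0,1)$, set $W_{n,\delta}:=(1-\delta)\,V_\epsilon^{1/2}P_nV_\epsilon^{1/2}$. Each $W_{n,\delta}$ is finite-rank and positive, hence lies in $\cK^+(\cH)$. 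Moreover $\braket{\psi,V_\epsilon^{1/2}P_nV_\epsilon^{1/2}\psi}=\|P_nV_\epsilon^{1/2}\psi\|^2\leq\|V_\epsilon^{1/2}\psi\|^2=\braket{\psi,V_\epsilon\psi}$ and $V_\epsilon\leq V$ as forms on $Q(V)$ (immediate from $\tfrac{\lambda}{1+\epsilon\lambda}\leq\lambda$ on $[0,\infty)$ by the spectral theorem), so the factor $1-\delta$ makes $W_{n,\delta}<V$. On the other hand $V_\epsilon^{1/2}\gamma V_\epsilon^{1/2}\in\cS_1^+(\cH)$, so by cyclicity of the trace and $P_n\uparrow I$,
\[
\Tr(W_{n,\delta}\gamma)=(1-\delta)\Tr\!\big(P_nV_\epsilon^{1/2}\gamma V_\epsilon^{1/2}\big)\xrightarrow[n\to\infty]{}(1-\delta)\Tr\!\big(V_\epsilon^{1/2}\gamma V_\epsilon^{1/2}\big)=(1-\delta)\Tr(V_\epsilon\gamma).
\]
Hence the right-hand side of \eqref{Equation: T-trace is supremum over compacts} is at least $(1-\delta)\Tr(V_\epsilon\gamma)$ for all $\delta\in(0,1)$ and all $\epsilon>0$; letting $\delta\downarrow0$ and then $\epsilon\downarrow0$ gives $\geq\Tr(V\gamma)$, which together with the first part proves the identity.

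I expect the only real difficulty to be the tension between the approximants being \emph{compact} and being required to lie (strictly) \emph{below the unbounded} $V$: this is precisely what the sandwich $V_\epsilon^{1/2}P_nV_\epsilon^{1/2}\leq V_\epsilon\leq V$ settles, once $V$ has been replaced by the bounded $V_\epsilon$ furnished by the definition. Everything else is routine: the spectral-calculus inequality $V_\epsilon\leq V$, the density of $D(V)$ in $Q(V)$ for the form norm, and the two standard limits (monotone convergence for the expansion of $\Tr(V\gamma)$, and $\Tr(P_nA)\to\Tr(A)$ for trace-class $A$ with $P_n\uparrow I$). The strict inequality in ``$V'<V$'' is a purely cosmetic point, dealt with by the shrinking factor $1-\delta$.
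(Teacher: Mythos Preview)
Your proof is correct and follows essentially the same route as the paper: both arguments use the compact approximants $V_\epsilon^{1/2}PV_\epsilon^{1/2}$ with finite-rank projections $P$, together with $V_\epsilon^{1/2}PV_\epsilon^{1/2}\leq V_\epsilon\leq V$ and cyclicity of the trace, to obtain the inequality ``$\leq$''. Your treatment is more detailed---you spell out the ``$\geq$'' direction via the form-core argument (which the paper simply asserts as ``clear'') and introduce the shrinking factor $(1-\delta)$ for strictness---but the underlying idea is identical.
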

\begin{proof}
     One can show that since $V'<V$, we have $\sup\{\Tr(V'\gamma)\,|\,0\leq V'<V,V'\in \cK(\cH)\}\leq \Tr(V\gamma)$. By definition, $\Tr(V\gamma)=\sup_{\epsilon>0}\Tr(V_\epsilon\gamma)$, and it is a basic property of the trace that $\Tr(\gamma)= \sup_P \Tr(P\gamma)$, where the supremum is taken over all finite-rank projections $P$. For any such projection, $V_\epsilon^{1/2} P V_\epsilon^{1/2}$ is finite-rank and hence compact. Thus, 
    \[\Tr(V\gamma)=\sup_{\epsilon>0}\sup_{P}\Tr(V_\epsilon^{1/2} P V_\epsilon^{1/2} \gamma).\]
  Since $P\leq 1,$ we have $V_\epsilon^{1/2} P V_\epsilon^{1/2}\leq V_\epsilon < V$, and hence \[\sup_{\epsilon>0}\sup_{P}\Tr(V_\epsilon^{1/2} P V_\epsilon^{1/2} \gamma)\leq \sup\bigl\{\Tr(V'\gamma)\;\big|\; V' \in \cK^+(\cH), \;  V' <V\bigr\}, \] which proves the lemma. 
\end{proof}
The following technical lemma, which can be regarded as a version of Fatou's Lemma will be useful later: 
\begin{lemma}\label{Lemma: Trace of unbounded operator is lower semicontinuous}
    Let $V\geq 0$ be a possibly unbounded operator. If $\{\gamma_n\}_{n\in\N}$ is a sequence in $\cS_1^+(\cH)$ which converges in the weak-$*$ topology of $\cS_1(\cH)$ to some $\gamma\in \cS_1^+(\cH)$, then
    \[\Tr(V\gamma)\leq \liminf_{n\to\infty}\Tr(V\gamma_n).\]
\end{lemma}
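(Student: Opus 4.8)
The plan is to deduce the claim from the supremum characterization \eqref{Equation: T-trace is supremum over compacts}, which is available both for the limit $\gamma$ and, crucially, for every $\gamma_n$. Fix an arbitrary $V'\in\cK^+(\cH)$ with $V'<V$. Since $V'$ is compact, the hypothesis $\gamma_n\xrightharpoonup{*}\gamma$ applied to the test operator $V'$ gives
\[\Tr(V'\gamma)=\lim_{n\to\infty}\Tr(V'\gamma_n).\]
On the other hand, for each $n$ the operator $V'$ is an admissible competitor in the supremum defining $\Tr(V\gamma_n)$ (recall $\gamma_n\in\cS_1^+(\cH)$), so \eqref{Equation: T-trace is supremum over compacts} yields $\Tr(V'\gamma_n)\leq\Tr(V\gamma_n)$ for all $n$. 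Combining the two observations,
\[\Tr(V'\gamma)=\lim_{n\to\infty}\Tr(V'\gamma_n)\leq\liminf_{n\to\infty}\Tr(V\gamma_n).\]

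Since $V'\in\cK^+(\cH)$ with $V'<V$ was arbitrary, taking the supremum over all such $V'$ on the left-hand side and invoking \eqref{Equation: T-trace is supremum over compacts} one last time, now for the operator $\gamma$, gives
\[\Tr(V\gamma)=\sup\bigl\{\Tr(V'\gamma)\;\big|\;V'\in\cK^+(\cH),\;V'<V\bigr\}\leq\liminf_{n\to\infty}\Tr(V\gamma_n),\]
which is the assertion. (If the right-hand side is $+\infty$ there is nothing to prove, but the argument above does not need to distinguish this case.)

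I do not expect any genuine obstacle; the one point requiring care is that weak-$*$ convergence only tests against \emph{compact} operators, so one cannot simply pass to the limit in $\Tr(V_\epsilon\gamma_n)$ using the bounded truncations $V_\epsilon=V(1+\epsilon V)^{-1}$, which are in general not compact. Routing the argument through the compact approximants $V'<V$ supplied by the preceding lemma is precisely what repairs this, and it is the reason that lemma was stated in that form.
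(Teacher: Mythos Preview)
Your proof is correct and follows the same approach as the paper's: fix a compact $V'<V$, use weak-$*$ convergence to pass to the limit in $\Tr(V'\gamma_n)$, bound by $\Tr(V\gamma_n)$, and then take the supremum over $V'$ via \eqref{Equation: T-trace is supremum over compacts}. The paper's proof is slightly terser (it writes $\liminf$ rather than $\lim$ on the left, which amounts to the same thing since the limit exists), but the argument is identical.
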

\begin{proof}
    For every $V'\in \cK^+(\cH)$ satisfying $V'< V$, we have
    \[\Tr(V'\gamma_n)\leq\Tr(V\gamma_n)\] for every $n$. Taking the limit inferior on both sides, we get
    \[\Tr(V'\gamma ) \leq \liminf_{n\to \infty} \Tr(V\gamma_n).\] Taking the supremum over $V'$, we get the result by \eqref{Equation: T-trace is supremum over compacts}.
\end{proof}

We are now ready to define our space of finite-kinetic energy density matrices.
Let
\[ X^+ := \{ \gamma \in \cS_1^+(\cH) \mid \Tr(T\gamma) < +\infty \}, \]
and define
\[X := X^+ - X^+ = \{\gamma^+-\gamma^- \mid \gamma^\pm\in \cS_1^+(\cH), \Tr(T\gamma^\pm)<+\infty\}.\]
Intuitively, $X^+$ is the convex cone of density matrices with finite kinetic energy, and $X$ is the set of all trace-class operators that can be decomposed into a positive and negative part, each with finite kinetic energy. This space will be turned into a real Banach space in the following.

\begin{lemma}\label{Lemma: X is dense in S}
    $X$ is dense in $\cS^\text{sa}_1(\cH)$.
\end{lemma}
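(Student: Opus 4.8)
The plan is to approximate an arbitrary $\gamma\in\cS_1^{\mathrm{sa}}(\cH)$ in the trace norm by truncating it with spectral projections of $T$. Concretely, for $n\in\N$ let $P_n:=\one_{[0,n]}(T)$ denote the spectral projection of $T$ onto $[0,n]$, defined via the spectral calculus, and set $\gamma_n:=P_n\gamma P_n$. The two things I would need to check are that $\gamma_n\in X$ for every $n$ and that $\gamma_n\to\gamma$ in $\|\cdot\|_1$. (One could equivalently first reduce to showing $X^+$ is dense in $\cS_1^+(\cH)$, using $\cS_1^{\mathrm{sa}}(\cH)=\cS_1^+(\cH)-\cS_1^+(\cH)$, and then treat only positive $\gamma$; this is cosmetic.)

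For membership in $X$: the operator $\gamma_n$ is self-adjoint and trace-class, and writing $\gamma=\gamma^+-\gamma^-$ for the Jordan decomposition we have $\gamma_n=P_n\gamma^+P_n-P_n\gamma^-P_n$ with $P_n\gamma^\pm P_n\in\cS_1^+(\cH)$. Since $TP_n$ is bounded with $\|TP_n\|\leq n$, cyclicity of the trace gives $\Tr(TP_n\gamma^\pm P_n)=\Tr(P_nTP_n\gamma^\pm)\leq n\Tr(\gamma^\pm)<+\infty$, so $P_n\gamma^\pm P_n\in X^+$ and hence $\gamma_n\in X$.

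For the convergence: because $T$ is self-adjoint, $P_n\to\one$ strongly as $n\to\infty$. I would then expand $\gamma$ in its spectral decomposition $\gamma=\sum_k\lambda_k\ket{e_k}\bra{e_k}$ with $\{e_k\}$ orthonormal and $\sum_k|\lambda_k|=\|\gamma\|_1<+\infty$, so that $\gamma_n=\sum_k\lambda_k\ket{P_ne_k}\bra{P_ne_k}$ and, using the triangle inequality for $\|\cdot\|_1$ on rank-one operators together with $\|P_n\|\leq1$,
\[
\|\gamma-\gamma_n\|_1\;\leq\;\sum_k|\lambda_k|\,\bigl\|\ket{e_k}\bra{e_k}-\ket{P_ne_k}\bra{P_ne_k}\bigr\|_1\;\leq\;\sum_k 2|\lambda_k|\,\|e_k-P_ne_k\|.
\]
Each summand tends to $0$ as $n\to\infty$ by strong convergence and is dominated by the summable sequence $2|\lambda_k|$, so dominated convergence for series yields $\|\gamma-\gamma_n\|_1\to0$, i.e.\ $\gamma\in\overline X$.

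I expect the only genuinely non-routine point to be this last estimate: conjugation by a strongly — but not norm — convergent sequence of projections does not converge in operator norm in general, and one must exploit that a trace-class operator is, via its spectral (or Schmidt) expansion, an absolutely convergent series of rank-one operators, each of which \emph{is} controlled by strong convergence. If one prefers to avoid spectral projections entirely, the regularizations $\gamma_\epsilon:=(1+\epsilon T)^{-1/2}\gamma(1+\epsilon T)^{-1/2}$ work equally well: they lie in $X$ because $T^{1/2}(1+\epsilon T)^{-1/2}$ is bounded (so $\Tr(T\gamma_\epsilon^{\pm})\leq\epsilon^{-1}\Tr(\gamma^{\pm})<+\infty$), and $\gamma_\epsilon\to\gamma$ in $\|\cdot\|_1$ as $\epsilon\to0^+$ by the same dominated-convergence argument applied to $(1+\epsilon T)^{-1/2}\to\one$ strongly.
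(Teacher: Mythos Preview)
Your proof is correct. The paper's argument follows the same skeleton---spectral decomposition $\gamma=\sum_k\lambda_k|\varphi_k\rangle\langle\varphi_k|$, approximation of each eigenvector by a form-domain vector, the rank-one estimate $\||\varphi\rangle\langle\varphi|-|\psi\rangle\langle\psi|\|_1\leq 2\|\varphi-\psi\|$, and dominated convergence on the resulting series---but it first truncates the eigenvalue sum to finitely many terms and then replaces each remaining $\varphi_k$ by an \emph{arbitrary} nearby vector $\varphi_k^\epsilon\in D(T^{1/2})$, invoking density of the form domain. Your choice of the canonical approximants $P_ne_k\in D(T)\subset Q(T)$ via the spectral projections of $T$ is tidier: it treats all eigenvectors at once, the membership $\gamma_n\in X$ follows from the single bound $\|TP_n\|\leq n$ rather than term by term, and no separate tail truncation is needed. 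The paper's approximant happens to be finite rank, but that is not used anywhere. Your alternative via $(1+\epsilon T)^{-1/2}$ is also valid and works by the identical mechanism.
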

\begin{proof} In this proof, we use the following fact: For any $\varphi,\psi\in \cH$, which satisfy $\|\varphi\|=\|\psi\|=1$, we have
    \begin{equation}\label{Equation: Super basic trace inequality}
        \Tr(\big||\varphi\rangle\langle\varphi|-|\psi\rangle\langle \psi|\big|)\leq 2\|\varphi-\psi\|.
    \end{equation}
This elementary inequality will be demonstrated in Appendix \ref{Appendix: Basic Trace Inequality}. Let $0\neq \gamma\in \cS_1(\cH)^\mathrm{sa}$, and denote by 
    \[\gamma = \sum_n\lambda_n|\varphi_n\rangle\langle \varphi_n|\] 
its unitary diagonalization. Next, recall that since $T$ is positive, the form domain of $T$ equals $Q(T)=D(T^{1/2})$. If $\varphi\in \cH$, then $\varphi\in Q(T)$ iff $\Tr(T|\varphi\rangle\langle \varphi|)<+\infty$, and if $\varphi\in Q(T)$, then $\Tr(T|\varphi\rangle\langle \varphi|)=\|T^{1/2}\varphi\|^2$. 
    
    Keeping the above facts in mind, we proceed with proving the lemma. Let $\epsilon>0$. We will show that there is a $\gamma_\epsilon\in X$ such that $\|\gamma-\gamma_\epsilon\|_1<\epsilon$. Since $D(T^{1/2})$ is dense in $\cH$, we can find for every $n$ a $\varphi_n^\epsilon\in D(T^{1/2})$ such that $\|\varphi_n-\varphi_n^\epsilon\|<\epsilon/(4\Tr|\gamma|)$ and $\|\varphi_n^\epsilon\|=1$. 
    For large enough $N_\epsilon\in  \N$, we have
    \[\sum_{n=N_\epsilon+1}^\infty|\lambda_n|<\epsilon/2.\]
    Denote 
    \[\gamma_{\epsilon} := \sum_{n=1}^{N_\epsilon} \lambda_n|\varphi_n^\epsilon\rangle \langle \varphi_n^\epsilon|.\] Then $\gamma_\epsilon\in X$ since $\Tr(T|\varphi^\epsilon_n\rangle \langle \varphi_n^\epsilon|)=\|T^{1/2}\varphi_n^\epsilon\|^2<+\infty $ for all $n$. Using \eqref{Equation: Super basic trace inequality}, we have
    \[\Tr(\big||\varphi_n\rangle\langle \varphi_n|-|\varphi_n^\epsilon\rangle\langle \varphi_n^\epsilon|\big|)\leq 2\|\varphi_n-\varphi_n^\epsilon\|<\frac{\epsilon}{2\Tr(|\gamma|)}\]
    for every $n$. Combining this with the triangle inequality, we get
    \[\|\gamma-\gamma_\epsilon\|_1 =\Tr(|\gamma-\gamma_\epsilon|)\leq \sum_{n =1}^{N_\epsilon}|\lambda_n| \Tr(\big||\varphi_n\rangle\langle \varphi_n|-|\varphi_n^\epsilon\rangle\langle \varphi_n^\epsilon|\big|) + \sum_{n=N_\epsilon+1}^\infty|\lambda_n|\]
    \[< \sum_{n =1}^{N_\epsilon}|\lambda_n|\frac{\epsilon}{2\Tr(|\gamma|)} + \frac{\epsilon}{2}\leq \epsilon,\] which was what we wanted to show.
\end{proof}
Define the map $\|\cdot\|_X:\cS_1(\cH)\to \R\cup\{+\infty\}$ by 
\[\|\gamma\|_X :=  \inf \left\{ \Tr\big((1+T)\gamma^+\big)+\Tr\big((1+T)\gamma^-\big) \;\middle|\; \gamma^\pm\in \cS_1^+(\cH), \gamma = \gamma^+ - \gamma^-\right\}.\]
The fact that $\|\cdot\|_X$ is a norm when restricted to $X$ follows from general results in operator algebra. See for example \cite[Ch. VII, \S1, Lemma 1.9]{Takesaki2003} for a more general construction, or \cite{Sherstnev2012} and \cite{Stolyarov2002}. We provide an elementary proof here for completeness:
\begin{lemma}\label{Lemma: The X-norm}
   $\|\cdot\|_X$ defines a norm on $X$ which satisfies $\|\gamma\|_1\leq \|\gamma\|_X$ for all $\gamma\in X$.  
\end{lemma}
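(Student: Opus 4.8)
To show $\|\cdot\|_X$ is a norm on $X$ we verify, in turn, that it takes finite values on $X$, that it is absolutely homogeneous, that it obeys the triangle inequality, that it dominates $\|\cdot\|_1$, and that it is definite. We begin with finiteness: by the very definition of $X = X^+ - X^+$, every $\gamma \in X$ admits at least one decomposition $\gamma = \gamma^+ - \gamma^-$ with $\gamma^\pm \in \cS_1^+(\cH)$ and $\Tr(T\gamma^\pm) < +\infty$, for which $\Tr((1+T)\gamma^\pm) = \Tr(\gamma^\pm) + \Tr(T\gamma^\pm) < +\infty$. Hence the infimum defining $\|\gamma\|_X$ is taken over a nonempty set and is finite; decompositions with $\Tr(T\gamma^\pm) = +\infty$ merely contribute $+\infty$ and are harmless.

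Next I would establish $\|\gamma\|_1 \le \|\gamma\|_X$, which doubles as the tool for definiteness. For any admissible decomposition $\gamma = \gamma^+ - \gamma^-$ one has $\Tr(T\gamma^\pm) \ge 0$ (since $T_\epsilon \ge 0$ and $\gamma^\pm \ge 0$ give $\Tr(T_\epsilon \gamma^\pm) \ge 0$ for all $\epsilon > 0$, whence the supremum is $\ge 0$), so $\Tr((1+T)\gamma^\pm) \ge \Tr(\gamma^\pm)$. Combining this with Lemma~\ref{lemma: gauge characterization of trace norm}, $\|\gamma\|_1 \le \Tr(\gamma^+) + \Tr(\gamma^-) \le \Tr((1+T)\gamma^+) + \Tr((1+T)\gamma^-)$, and taking the infimum over decompositions yields $\|\gamma\|_1 \le \|\gamma\|_X$. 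In particular $\|\gamma\|_X \ge \|\gamma\|_1 \ge 0$, and $\|\gamma\|_X = 0$ forces $\|\gamma\|_1 = 0$, i.e.\ $\gamma = 0$; conversely $\|0\|_X = 0$ via the decomposition $0 = 0 - 0$.

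Homogeneity and the triangle inequality are then routine bookkeeping. For $\lambda > 0$ the map $(\gamma^+,\gamma^-) \mapsto (\lambda\gamma^+,\lambda\gamma^-)$ is a bijection between admissible decompositions of $\gamma$ and of $\lambda\gamma$ that scales the objective by $\lambda$, so $\|\lambda\gamma\|_X = \lambda\|\gamma\|_X$; for $\lambda < 0$, swapping the two summands identifies decompositions of $\gamma$ with those of $-\gamma$, giving $\|{-\gamma}\|_X = \|\gamma\|_X$; the case $\lambda = 0$ is immediate. For subadditivity, if $\gamma_i = \gamma_i^+ - \gamma_i^-$ for $i = 1,2$, then $\gamma_1 + \gamma_2 = (\gamma_1^+ + \gamma_2^+) - (\gamma_1^- + \gamma_2^-)$ is admissible; since $\Tr$ is linear and $\Tr(T\,\cdot\,)$ is additive on $\cS_1^+(\cH)$ by \eqref{Equation: Split the T}, the functional $\delta \mapsto \Tr((1+T)\delta)$ is additive on $\cS_1^+(\cH)$, so $\|\gamma_1 + \gamma_2\|_X \le \sum_{i=1}^2 \big(\Tr((1+T)\gamma_i^+) + \Tr((1+T)\gamma_i^-)\big)$, and taking infima over decompositions of $\gamma_1$ and of $\gamma_2$ independently gives $\|\gamma_1 + \gamma_2\|_X \le \|\gamma_1\|_X + \|\gamma_2\|_X$.

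The one point requiring care — and what I would flag as the main (if mild) obstacle — is the consistent handling of possibly infinite values: one must observe that the additivity \eqref{Equation: Split the T} and the monotonicity estimates above are all valid in $[0,+\infty]$, so the manipulations remain legitimate even for decompositions with infinite kinetic energy, while the infimum itself is finite on $X$ precisely because a good decomposition exists. Everything else is elementary.
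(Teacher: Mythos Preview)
Your proof is correct and follows essentially the same approach as the paper's: both use Lemma~\ref{lemma: gauge characterization of trace norm} together with positivity of $T$ to get $\|\cdot\|_1\le\|\cdot\|_X$ (hence definiteness), verify homogeneity via the obvious bijections on decompositions (swapping $\gamma^+\leftrightarrow\gamma^-$ for negative scalars), and obtain the triangle inequality by noting that sums of decompositions form a subset of all decompositions of the sum. Your explicit remark on handling $[0,+\infty]$-valued terms is a nice touch that the paper leaves implicit.
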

\begin{proof} The fact that $\|\cdot\|_1 \leq \|\cdot\|_X$ follows from Lemma \ref{lemma: gauge characterization of trace norm} and the fact that $T$ is positive. It follows that if $\|\gamma\|_X=0$, then $\|\gamma\|_1=0$, and hence $\gamma= 0$. 

By definition of $X$, it is clear that $\|\gamma\|_X<+\infty$ for all $\gamma\in X$. Furthermore, since $T$ is positive, we have $\|\lambda\gamma\|_X =  \lambda\|\gamma\|_X$ if $\lambda>0$, and if $\lambda<0$, we have 
\begin{equation*}
\begin{split}
    \|\lambda\gamma\|_X &= \inf \left\{ \Tr\big((1+T)\gamma^+\big)+\Tr\big((1+T)\gamma^-\big) \;\middle|\; \gamma^\pm\in \cS_1^+(\cH), \lambda\gamma = \gamma^+ - \gamma^-\right\} \\
&= \inf \left\{ \Tr\big((1+T)\gamma^+\big)+\Tr\big((1+T)\gamma^-\big) \;\middle|\; \gamma^\pm\in \cS_1^+(\cH), \gamma = (- \lambda^{-1}) \gamma^- - (-\lambda^{-1})\gamma^+\right\} \\
& = \inf \left\{ -\lambda \Tr\big((1+T)\tilde\gamma^+\big)-\lambda \Tr\big((1+T)\tilde\gamma^-\big) \;\middle|\; \tilde \gamma^\pm\in \cS_1^+(\cH), \gamma = \tilde\gamma^+-\tilde\gamma^-\right\}
\\ &= -\lambda\|\gamma\|_X,
\end{split}
\end{equation*}
so $\|\lambda\gamma\|_X = |\lambda|\|\gamma\|_X$ for all $\lambda\in\R$ and $\gamma\in X$. It remains to demonstrate the triangle inequality. Let $\gamma,\tilde\gamma\in X$, and let $\gamma = \gamma^+-\gamma^-$ and $\tilde \gamma= \tilde\gamma^+-\tilde\gamma^-$ be decompositions such that $\gamma^\pm \geq 0$, $\tilde\gamma^\pm\geq0$, $\Tr(T\gamma^\pm)<+\infty$ and $\Tr(T\tilde{\gamma}^\pm)<+\infty$. We then have that $\gamma+\tilde\gamma = \gamma^++\tilde\gamma^+-(\gamma^-+\tilde\gamma^-)$. Hence
\begin{equation*}
    \begin{split}
        \|\gamma+\tilde\gamma\|_X &= \inf\left\{\Tr((1+T)\hat\gamma^+)+\Tr((1+T)\gamma^-)\;\middle|\; \hat{\gamma}^{\pm} \in \cS_1^+(\cH), \; \gamma+\gamma'= \hat\gamma^+-\hat\gamma^-\right\} \\
        & \leq \inf \Bigl\{\Tr((1+T)(\gamma^++\tilde\gamma^+))+\Tr((1+T)(\gamma^-+\tilde\gamma^-)) \;\Big|\; \\& \qquad\qquad\qquad\qquad\qquad\qquad\qquad \gamma^{\pm},\tilde{\gamma}^{\pm} \in \cS_1^+(\cH), \; \gamma = \gamma^+-\gamma^-,\tilde\gamma= \tilde\gamma^+-\tilde\gamma^-\Bigr\}
        \\ &=\|\gamma\|_X+\|\tilde\gamma\|_X.
    \end{split}
\end{equation*}
The inequality above comes from the fact that the infimum is taken over a smaller set.  
\end{proof}

\begin{lemma}\label{Corollary: Unique decomposition}
    Let $\gamma\in X$. There exists a decomposition $\gamma = \gamma^+_0-\gamma^-_0$ such that $\|\gamma\|_X = \Tr((1+T)\gamma^+_0)+\Tr((1+T)\gamma^-_0)$. 
\end{lemma}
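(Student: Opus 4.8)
The plan is to show directly that the infimum defining $\|\gamma\|_X$ is attained, by the usual direct method: choose a minimizing sequence of decompositions, extract a weak-$*$ convergent subsequence of the positive and negative parts, and invoke the Fatou-type lower semicontinuity of Lemma~\ref{Lemma: Trace of unbounded operator is lower semicontinuous}.

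First I would fix $\gamma\in X$ and pick decompositions $\gamma=\gamma_n^+-\gamma_n^-$ with $\gamma_n^\pm\in\cS_1^+(\cH)$ and
\[
\Tr\big((1+T)\gamma_n^+\big)+\Tr\big((1+T)\gamma_n^-\big)\longrightarrow \|\gamma\|_X .
\]
Since $T\geq 0$ we have $\|\gamma_n^\pm\|_1=\Tr(\gamma_n^\pm)\leq \Tr\big((1+T)\gamma_n^\pm\big)$, so both sequences $\{\gamma_n^+\}$ and $\{\gamma_n^-\}$ are bounded in $\cS_1(\cH)$. Because $\cH$ is separable, $\cK(\cH)$ is separable, hence bounded sets of $\cS_1(\cH)=\cK(\cH)^*$ are weak-$*$ metrizable; by Banach--Alaoglu I can therefore pass to a subsequence (not relabeled) along which $\gamma_n^+\xrightharpoonup{*}\gamma_0^+$ and $\gamma_n^-\xrightharpoonup{*}\gamma_0^-$ for some $\gamma_0^\pm\in\cS_1(\cH)$.

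Next I would verify that $(\gamma_0^+,\gamma_0^-)$ is an admissible competitor. Positivity passes to the weak-$*$ limit: for $V'\in\cK^+(\cH)$ one has $\Tr(V'\gamma_0^\pm)=\lim_n\Tr(V'\gamma_n^\pm)\geq 0$, so $\gamma_0^\pm\in\cS_1^+(\cH)$. Moreover $\gamma_n^+-\gamma_n^-=\gamma$ for every $n$, and this constant sequence converges weak-$*$ both to $\gamma$ and to $\gamma_0^+-\gamma_0^-$, whence $\gamma=\gamma_0^+-\gamma_0^-$. Applying Lemma~\ref{Lemma: Trace of unbounded operator is lower semicontinuous} with the positive (possibly unbounded) operator $1+T$ to each of the weak-$*$ convergent sequences $\gamma_n^\pm\xrightharpoonup{*}\gamma_0^\pm$ gives
\[
\Tr\big((1+T)\gamma_0^+\big)+\Tr\big((1+T)\gamma_0^-\big)\ \leq\ \liminf_{n\to\infty}\Big(\Tr\big((1+T)\gamma_n^+\big)+\Tr\big((1+T)\gamma_n^-\big)\Big)=\|\gamma\|_X;
\]
in particular the left-hand side is finite, so $\gamma_0^\pm\in X^+$ and the decomposition is admissible in the infimum. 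The reverse inequality $\|\gamma\|_X\leq \Tr\big((1+T)\gamma_0^+\big)+\Tr\big((1+T)\gamma_0^-\big)$ is immediate from the definition of $\|\gamma\|_X$, so equality holds and the infimum is attained at $(\gamma_0^+,\gamma_0^-)$.

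The only delicate points are the weak-$*$ sequential compactness --- which genuinely uses the standing separability of $\cH$, since otherwise Banach--Alaoglu yields only a subnet while Lemma~\ref{Lemma: Trace of unbounded operator is lower semicontinuous} is stated for sequences --- and the correct invocation of that Fatou-type lemma for the unbounded operator $1+T$. I expect the compactness extraction to be the main thing to state carefully; the remaining verifications (positivity and the difference identity passing to the limit) are routine.
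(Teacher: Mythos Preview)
Your proof is correct and follows essentially the same direct-method argument as the paper: a minimizing sequence of decompositions, Banach--Alaoglu to extract weak-$*$ limits of the positive and negative parts, and the Fatou-type lower semicontinuity of $\Tr((1+T)\,\cdot\,)$ to conclude. You are in fact slightly more careful than the paper in justifying sequential compactness via separability and in explicitly checking that positivity survives the weak-$*$ limit.
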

\begin{proof}
For every $n$, pick decompositions $\gamma = \gamma^+_n - \gamma^-_n$ satisfying 
\[ \Tr((1+T)\gamma_n^+) +\Tr((1+T)\gamma_n^-)\leq \|\gamma\|_X +\frac{1}{n}. \]
The sequences $\{\gamma^\pm_n\}_{n\in\N}$ are bounded in the trace norm. Picking a subsequence if necessary, there exist weak-$*$ limits $\gamma^\pm_0\in \cS_1^\mathrm{sa}(\cH)$ by the Banach--Alaoglu Theorem. Notice that we have for any compact $K\in \cK(\cH)$ that 
\[\Tr(K\gamma)=\lim_{n\to\infty}\Tr(K(\gamma^+_n-\gamma^-_n))=\Tr(K(\gamma^+_0-\gamma^-_0)), \]
whence it follows that $\gamma= \gamma^+_0 -\gamma^-_0$. For any compact $(1+T)'<(1+T)$ we have 
\[ \Tr((1+T)'\gamma_n^+) +\Tr((1+T)'\gamma_n^-) \leq \Tr((1+T)\gamma_n^+)+\Tr((1+T)\gamma_n^-)\leq \|\gamma\|_X +\frac{1}{n}.\]
Taking the limit in $n$, and supremum over all compacts, we get by Lemma \ref{Equation: T-trace is supremum over compacts} that 
\[\Tr((1+T)\gamma^+_0) + \Tr((1+T)\gamma^-_0)\leq \|\gamma\|_X.\] Since $\gamma= \gamma^+ -\gamma^-$, it follows that \[\Tr((1+T)\gamma^+_0) + \Tr((1+T)\gamma^-_0) = \|\gamma\|_X,\] which is what we wanted to show. 
\end{proof}
We can thus replace the ``inf'' with a ``min'' in the definition of $\|\cdot\|_X$. 
\begin{remark} The above lemma says nothing about the uniqueness of the decomposition $\gamma = \gamma^+ - \gamma^-$. We note that by the choice $T=0$, Lemma \ref{Corollary: Unique decomposition} gives the unique Jordan decomposition $\gamma^\pm = (|\gamma|\pm\gamma)/2$ if we also require $\gamma^\pm\gamma^\mp=0$. Uniqueness might not be guaranteed in general. 
\end{remark}

\begin{remark}
    The task of evaluating $\|\gamma\|_X$ for a given $\gamma\in \cS_1^\mathrm{sa}(\cH)$ can be formulated as a semidefinite program. Indeed, by a change of variables, we obtain the semidefinite program 
    \[\|\gamma\|_X = \inf \{\Tr((1+T)Z) \;|\; Z\in \cS_1^\mathrm{sa}(\cH),\; Z-\gamma\geq 0,\;Z+\gamma\geq 0\},\] which is numerically solvable with standard methods in the finite-dimensional case \cite{Boyd_Vandenberghe_2004}. 
\end{remark}

The functional $\gamma \mapsto \Tr(T\gamma)$ is \emph{a priori} only defined for positive $\gamma$. We expand it to non-positive $\gamma\in X$ by defining 
\begin{equation}\label{Equation: Tr(T gamma)}    \Tr\big(T\gamma\big):=\Tr\big(T\gamma^+\big)-\Tr\big(T\gamma^-\big),
\end{equation}
where $\gamma = \gamma^+ - \gamma^-$, is any decomposition of $\gamma$ which satisfies $\gamma^\pm \in X^+$.  
\begin{lemma}\label{Lemma: Trace inequality}  The map $\Tr(T\,\cdot\,):X\to \R$ as given by \eqref{Equation: Tr(T gamma)} is well-defined, linear and bounded. In particular, it is independent of the choice of decomposition. Furthermore, we have
    \[\Tr(|\gamma|) + |\Tr(T\gamma)|\leq \|\gamma\|_X\leq \Tr(|\gamma|) + \Tr(T|\gamma|)\] 
for any $\gamma\in X$. Both inequalities are equalities when $\gamma$ is positive. 
\end{lemma}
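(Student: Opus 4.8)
The plan is to handle the claims in sequence: well-definedness (independence of the decomposition), linearity, the two norm inequalities, and finally the equality case for $\gamma\ge 0$; boundedness will drop out of the lower inequality. First I would settle well-definedness. Given $\gamma\in X$ and two decompositions $\gamma=\gamma^+-\gamma^-=\tilde\gamma^+-\tilde\gamma^-$ with all four operators in $X^+$, rearrange to the identity $\gamma^++\tilde\gamma^-=\tilde\gamma^++\gamma^-$ between positive trace-class operators and apply the additivity relation \eqref{Equation: Split the T} to each side, obtaining $\Tr(T\gamma^+)+\Tr(T\tilde\gamma^-)=\Tr(T\tilde\gamma^+)+\Tr(T\gamma^-)$. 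All four terms are finite since the operators lie in $X^+$, so rearranging gives $\Tr(T\gamma^+)-\Tr(T\gamma^-)=\Tr(T\tilde\gamma^+)-\Tr(T\tilde\gamma^-)$; thus \eqref{Equation: Tr(T gamma)} is unambiguous and finite-valued, an admissible decomposition existing exactly because $\gamma\in X=X^+-X^+$. Additivity of $\Tr(T\,\cdot\,)$ then follows by adding decompositions of $\gamma$ and $\gamma'$ and invoking \eqref{Equation: Split the T} once more; homogeneity under $\lambda>0$ is immediate from $\Tr(T(\lambda\gamma^\pm))=\sup_{\epsilon>0}\Tr(T_\epsilon\lambda\gamma^\pm)=\lambda\Tr(T\gamma^\pm)$, and under $\lambda<0$ from noting that $\lambda\gamma=(|\lambda|\gamma^-)-(|\lambda|\gamma^+)$ is again an admissible decomposition, whence $\Tr(T(\lambda\gamma))=|\lambda|\Tr(T\gamma^-)-|\lambda|\Tr(T\gamma^+)=\lambda\Tr(T\gamma)$.

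Next I would prove the lower inequality $\Tr(|\gamma|)+|\Tr(T\gamma)|\le\|\gamma\|_X$. Take any decomposition $\gamma=\gamma^+-\gamma^-$ with $\gamma^\pm\in\cS_1^+(\cH)$ as in the definition of $\|\cdot\|_X$. If either part has infinite $T$-trace then $\Tr((1+T)\gamma^+)+\Tr((1+T)\gamma^-)=+\infty$ and there is nothing to check; otherwise both parts lie in $X^+$ and we split $\Tr((1+T)\gamma^\pm)=\Tr(\gamma^\pm)+\Tr(T\gamma^\pm)$. The trace terms satisfy $\Tr(\gamma^+)+\Tr(\gamma^-)\ge\Tr(|\gamma^+-\gamma^-|)=\Tr(|\gamma|)$ by Lemma~\ref{lemma: gauge characterization of trace norm}, while $\Tr(T\gamma^+)+\Tr(T\gamma^-)\ge|\Tr(T\gamma^+)-\Tr(T\gamma^-)|=|\Tr(T\gamma)|$ since both summands are nonnegative. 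Adding these and taking the infimum over admissible decompositions yields $\|\gamma\|_X\ge\Tr(|\gamma|)+|\Tr(T\gamma)|$; in particular $|\Tr(T\gamma)|\le\|\gamma\|_X$, so $\Tr(T\,\cdot\,)$ is bounded with operator norm at most $1$.

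For the upper inequality $\|\gamma\|_X\le\Tr(|\gamma|)+\Tr(T|\gamma|)$, the case $\Tr(T|\gamma|)=+\infty$ is trivial. Otherwise, applying \eqref{Equation: Split the T} to the Jordan decomposition $|\gamma|=\gamma^++\gamma^-$ shows $\Tr(T\gamma^\pm)\le\Tr(T|\gamma|)<+\infty$, so the Jordan decomposition is itself admissible in the infimum defining $\|\gamma\|_X$; inserting it gives $\|\gamma\|_X\le\Tr((1+T)\gamma^+)+\Tr((1+T)\gamma^-)=\bigl(\Tr(\gamma^+)+\Tr(\gamma^-)\bigr)+\bigl(\Tr(T\gamma^+)+\Tr(T\gamma^-)\bigr)=\Tr(|\gamma|)+\Tr(T|\gamma|)$, the last step again by \eqref{Equation: Split the T}. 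Finally, when $\gamma\ge 0$ one has $\gamma\in X^+$ (writing $\gamma=\alpha^+-\alpha^-\le\alpha^+$ with $\alpha^\pm\in X^+$ and using monotonicity of $\Tr(T\,\cdot\,)$, itself a consequence of \eqref{Equation: Split the T}), $|\gamma|=\gamma$, and $\Tr(T\gamma)\ge 0$, so both the left- and right-hand sides collapse to $\Tr(\gamma)+\Tr(T\gamma)$; combined with the two inequalities this forces $\|\gamma\|_X=\Tr(\gamma)+\Tr(T\gamma)$ and makes both inequalities equalities.

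I do not anticipate a genuine obstacle here: the entire argument rests on the additivity relation \eqref{Equation: Split the T} and the trace-norm characterization of Lemma~\ref{lemma: gauge characterization of trace norm}. The only subtlety worth flagging is consistent bookkeeping with $[0,+\infty]$-valued $T$-traces — decompositions with an infinite-kinetic-energy part must be shown never to affect an infimum — together with the observation that for $\gamma\in X$ the infimum defining $\|\gamma\|_X$ runs over a non-empty family of finite-valued decompositions, which is precisely what licenses the cancellations in the well-definedness step.
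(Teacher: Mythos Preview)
Your proof is correct and follows essentially the same approach as the paper: the well-definedness via rearranging to $\gamma^++\tilde\gamma^-=\tilde\gamma^++\gamma^-$ and applying \eqref{Equation: Split the T}, the lower inequality via $|\Tr(T\gamma^+)-\Tr(T\gamma^-)|\le\Tr(T\gamma^+)+\Tr(T\gamma^-)$ together with the trace-norm triangle inequality, and the upper inequality via the Jordan decomposition are exactly the paper's arguments. You are somewhat more explicit than the paper in spelling out linearity, the $+\infty$ bookkeeping, and the equality case for $\gamma\ge 0$ (which the paper leaves implicit), but these are elaborations rather than departures.
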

\begin{proof} We first demonstrate the fact that $\Tr(T\,\cdot\,)$ is independent of the choice of decomposition. Let $\gamma = \gamma^+ -\gamma^-= \tilde\gamma^+-\tilde \gamma^-$ be any two decompositions of $\gamma$ satisfying $\gamma^\pm\geq 0$ and $\Tr(T\gamma^\pm) <+\infty$. Then, we have 
\[ \gamma^+ + \tilde\gamma^-= \tilde\gamma^+ + \gamma^-.\] 
This implies that 
\[\Tr\big(T(\gamma^+ + \tilde\gamma^-)\big) = \Tr\big(T(\tilde\gamma^+ + \gamma^-)\big),\]
and by \eqref{Equation: Split the T} we thus get
\[\Tr\big(T\gamma^+)-\Tr\big(T\gamma^-\big)=\Tr\big(T\tilde\gamma^+\big)-\Tr\big(T\tilde\gamma^-\big),\] showing that $\Tr(T\cdot)$ is indeed well-defined and independent of the choice of decomposition of $\gamma$. The linearity of $\Tr(T\cdot)$ follows from this fact and \eqref{Equation: Split the T}. 

In order to demonstrate the first inequality of the lemma, observe that for any decomposition $\gamma = \gamma^+ -\gamma^-$ with $\Tr(T\gamma^\pm)<+\infty$, we have 
\[|\Tr\big(T\gamma\big)|= |\Tr\big(T\gamma^+\big)-\Tr\big(T\gamma^-\big)|\leq \Tr\big(T\gamma^+\big)+\Tr\big(T\gamma^-\big).\]
By the triangle inequality for the trace norm, we have that 
\[\Tr\big(|\gamma|\big)\leq \Tr\big(\gamma^+\big) + \Tr\big(\gamma^-\big),\]
such that 
\[\Tr(|\gamma|)+|\Tr(T\gamma)|\leq \Tr\big((1+T)\gamma^+\big) + \Tr\big((1+T)\gamma^-\big).\]
Taking the infimum over all such decompositions yields the first inequality. Next, observe that if $\gamma\in X$, and if $\Tr(T|\gamma|)$ is finite, it follows that $\gamma^\pm = (|\gamma|\pm \gamma)/2$ is a valid decomposition of $\gamma$, and satisfies $\Tr((1+T)\gamma^+) + \Tr((1+T)\gamma^- )= \Tr(|\gamma|) + \Tr(T|\gamma|).$ Hence, 
\begin{equation*}
    \begin{split}
    \|\gamma\|_X &=\inf \bigl\{ \Tr\big((1+T)\gamma^+\big)+\Tr\big((1+T)\gamma^-\big) \;\big|\; \gamma = \gamma^+ - \gamma^-, \gamma^\pm \geq 0\bigr\} \\&\leq \Tr(|\gamma|)+\Tr(T|\gamma|),   
    \end{split}
\end{equation*}
which gives the second inequality.
\end{proof}
\begin{remark}
    The inequalities in Lemma \ref{Lemma: Trace inequality} do not hold if we consider $\gamma$ to be non-self-adjoint. Indeed, if the above inequalities were to hold for all non-self-adjoint $\gamma$'s, then $|\Tr(T\gamma)|\leq \Tr(T|\gamma|)$ would imply $T=\lambda \one$ for some $\lambda \in \R^+$ \cite[Theorem 4]{Stolyarov2002}. \end{remark}
\begin{example}
    We illustrate the above remark with an example of $\|\gamma\|_X< \Tr(|\gamma|) +\Tr(T|\gamma|)$. Indeed, let 
    \[T=\begin{pmatrix}0 & 0 \\ 0 & 3 \end{pmatrix}\text{ and }\gamma=\begin{pmatrix}0 & 1 \\ 1 & 0 \end{pmatrix}.\]
    Then, $\Tr(|\gamma|) + \Tr(T|\gamma|)=2+3=5$. Pick the decomposition $\gamma=\gamma^+-\gamma^-$, where
\[\gamma^+=\begin{pmatrix}1 & 0.5 \\ 0.5 & 0.25 \end{pmatrix}\text{ and }\gamma^-=\begin{pmatrix}1 & -0.5 \\ -0.5 & 0.25 \end{pmatrix}.\]
    One readily verifies that $\gamma^\pm\geq 0$. Using this decomposition, we get 
    \[\Tr((1+T)\gamma^+)+\Tr((1+T)\gamma^-)=2+2 = 4 ,\] which is strictly smaller than $\Tr(|\gamma|)+\Tr(T|\gamma|)=5.$ Hence $\|\gamma\|_X\leq 4<5=\Tr(|\gamma|)+\Tr(T|\gamma|)$. In fact, for this example, it can be shown that this decomposition is the unique minimizer of $\|\gamma\|_X$.
\end{example}
\begin{example}
Lemma \ref{Lemma: Trace inequality} ensures that $\gamma\in X$ if $\Tr((1+T)|\gamma|)<\infty$. However, the opposite does not hold. We illustrate this with an example for which $\|\gamma\|_X<\infty$ while $\Tr((1+T)|\gamma|)=\infty$. Let $\cH = \bigoplus_{n=1}^\infty\C^2$, and define
\[T=\bigoplus_{n=1}^\infty T_n, \text{ where } T_n=\begin{pmatrix} 0 & 0 \\ 0 & n^4 \end{pmatrix} \text{ and }\gamma=\bigoplus_{n=1}^\infty\gamma_n, \text{ where } \gamma_n = \begin{pmatrix} 0 & n^{-4} \\ n^{-4} & 0 \end{pmatrix}.\]
Since 
\[|\gamma_n| = \begin{pmatrix} n^{-4} & 0 \\ 0 & n^{-4}\end{pmatrix},\] we have
\[\Tr(T_n|\gamma_n|)=\Tr\left(\begin{pmatrix} 0 & 0 \\ 0 & n^4 \end{pmatrix}\begin{pmatrix} n^{-4} & 0 \\ 0 & n^{-4}\end{pmatrix}\right)= 1,\]
which implies that 
\[\Tr(T|\gamma|)=\sum_{n=1}^\infty\Tr(T_n|\gamma_n|)=\sum_{n=1}^\infty 1 = \infty. \] Conversely, we can check that $\|\gamma\|_X$ is finite. Indeed, define $\gamma_n^\pm = \pm \gamma_n/2 + \Delta_n$, where
\[\Delta_n = \begin{pmatrix} \frac{1}{4n^2} & 0 \\ 0 & \frac{1}{n^6} \end{pmatrix}.\] Then, it is straightforward to check that $\gamma_n = \gamma_n^+-\gamma_n^-$ and $\gamma_n^\pm \geq 0$. Defining $\gamma^\pm = \bigoplus \gamma_n^\pm$, we have $\gamma = \gamma^+-\gamma^-$ and $\gamma^\pm\geq0$, and so 
\[\|\gamma\|_X\leq \Tr((1+T)(\gamma^++\gamma^-))=\sum_{n=1}^\infty \Tr((1+T_n)2\Delta_n)= \sum_{n=1}^\infty \left(\frac{5}{2}n^{-2}+2n^{-6}\right) <\infty.\]
\end{example}
\begin{lemma}\label{Lemma: the diagonalization is in D(T^(1/2))}
    Let $\gamma\in X$. Then, the diagonalization given by the Spectral Theorem, 
\[\gamma = \sum_n\lambda_n |\varphi_n\rangle\langle\varphi_n|,\]
satisfies $\varphi_n\in Q(T)$ for all $n$ for which $\lambda_n\neq0$. 
\end{lemma}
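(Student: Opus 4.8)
The plan is to use positivity throughout: since $\gamma\in X^+$ means $\gamma\geq 0$ and $\Tr(T\gamma)=\sup_{\epsilon>0}\Tr(T_\epsilon\gamma)<+\infty$, and since $\Tr(T_\epsilon\gamma)$ is a sum of nonnegative diagonal contributions, finiteness of the total forces finiteness of each individual term. First I would fix an index $n_0$ with $\lambda_{n_0}>0$. For each $\epsilon>0$ the operator $T_\epsilon=T(1+\epsilon T)^{-1}$ is bounded and positive, so $T_\epsilon\gamma\in\cS_1(\cH)$ and, computing the trace in the orthonormal system $\{\varphi_n\}$ (completed to an orthonormal basis of $\cH$ if necessary, the extra basis vectors lying in $\ker\gamma$), one has
\[
\Tr(T_\epsilon\gamma)=\sum_n\lambda_n\langle\varphi_n,T_\epsilon\varphi_n\rangle ,
\]
a convergent series of nonnegative terms. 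Keeping only the $n_0$-term and using the definition of $\Tr(T\gamma)$,
\[
\lambda_{n_0}\langle\varphi_{n_0},T_\epsilon\varphi_{n_0}\rangle\leq\Tr(T_\epsilon\gamma)\leq\sup_{\epsilon'>0}\Tr(T_{\epsilon'}\gamma)=\Tr(T\gamma)<+\infty ,
\]
hence $\langle\varphi_{n_0},T_\epsilon\varphi_{n_0}\rangle\leq\Tr(T\gamma)/\lambda_{n_0}$ for every $\epsilon>0$, a bound uniform in $\epsilon$.

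Next I would let $\epsilon\to 0^+$. Writing $\mu$ for the finite positive Borel measure $\langle\varphi_{n_0},dP_T(\cdot)\varphi_{n_0}\rangle$ on $[0,+\infty)$, the spectral calculus gives
\[
\langle\varphi_{n_0},T_\epsilon\varphi_{n_0}\rangle=\int_0^\infty\frac{\lambda}{1+\epsilon\lambda}\,d\mu(\lambda).
\]
As $\epsilon\downarrow 0$ the integrands increase pointwise to $\lambda$, so by the monotone convergence theorem the left-hand side increases to $\int_0^\infty\lambda\,d\mu(\lambda)\in[0,+\infty]$. The uniform bound from the first paragraph shows this limit is finite; but $\int_0^\infty\lambda\,d\mu(\lambda)<+\infty$ is precisely the statement that $\varphi_{n_0}\in D(T^{1/2})=Q(T)$, and in that case the integral equals $\|T^{1/2}\varphi_{n_0}\|^2$. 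Since $n_0$ was an arbitrary index with $\lambda_{n_0}>0$, this proves the lemma.

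I do not expect a genuine obstacle here. The only two points requiring a little care are the interchange of the trace with the eigenbasis sum — legitimate because $T_\epsilon\geq 0$ is bounded and $\gamma$ is positive trace-class, so the nonnegative double series may be summed in any order — and the monotone-convergence step for the spectral integral, which is exactly the standard characterization of the form domain $Q(T)=D(T^{1/2})$ via $\sup_{\epsilon>0}\langle\psi,T_\epsilon\psi\rangle<+\infty$. Both mirror the conventions already established in this section, in particular the definition $\Tr(T\gamma):=\sup_{\epsilon>0}\Tr(T_\epsilon\gamma)$.
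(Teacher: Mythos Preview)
Your proposal is correct and follows essentially the same approach as the paper: the paper's proof is a one-line chain $\Tr(T\gamma)=\sum_n\lambda_n\Tr(T|\varphi_n\rangle\langle\varphi_n|)=\sum_n\lambda_n\|T^{1/2}\varphi_n\|^2<+\infty$, from which finiteness of each term follows by positivity. You have simply made explicit the regularization via $T_\epsilon$ and the monotone-convergence passage to the limit that underlie this identity, which is a welcome level of care given the conventions set up earlier in the section.
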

\begin{proof} First, suppose that $\gamma\in X^+$. Assuming $\lambda_n\neq  0$ for all $n$, we have 
\[+\infty>\|\gamma\|_X=\Tr((1+T)\gamma)=\sum_n\lambda_n \Tr((1+T)|\varphi_n\rangle\langle \varphi_n|) = \sum_n \lambda_n \|\varphi_n\|^2_T,\]
implying that $\varphi_n\in Q(T)$ for all $n$. Next, we show that the range of $\gamma$ is contained in $Q(T)$. Indeed, let $\psi\in \cH$. Then, using Hölder's inequality and Parseval's formula, together with the fact that $\lambda_n\leq \|\gamma\|_1$ for all $n$, we have
\begin{align*}
    \|\gamma\psi\|_T^2 &= \left\|\sum_{n=1}^\infty \lambda_n \langle \varphi_n,\psi\rangle \varphi_n\right\|_T^2\leq \left(\sum_{n=1}^\infty\lambda_n|\langle \varphi_n,\psi\rangle |\|\varphi_n\|_T\right)^2 \\ &\leq\sum_{n=1}^\infty|\langle \varphi_n,\psi\rangle |^2 \sum_{n=1}^\infty\lambda_n^2 \|\varphi_n\|_T^2 \leq \|\psi\|^2 \|\gamma\|_1\sum_{n=1}^\infty\lambda_n \|\varphi_n\|_T^2 \\ &=\|\psi\|^2\|\gamma\|_1\|\gamma\|_X<\infty,
\end{align*}
implying that $\ran \gamma\subset Q(T)$. Next, suppose that $\gamma\in X$, and let $\gamma= \gamma^+-\gamma^-$ be any decomposition with $\gamma^\pm\in X^+$. Then if $\psi\in \cH$ is an eigenvector of $\gamma$ with eigenvalue $\lambda$, we have
\[\gamma \psi=\lambda\psi = \gamma^+\psi-\gamma^-\psi,\] implying that $\psi\in Q(T)$ since $\ran\gamma^\pm\subset Q(T)$. 
\end{proof}
The above lemma has a natural corollary:
\begin{corollary}
    \label{Corollary: RD is dense in X} The set
    \[\R Q(T)\equiv \mathrm{span}_\R\left \{\;|\varphi\rangle\langle \varphi| \;\middle|\; \varphi\in Q(T) \right\}\]
is dense in $X$.
\end{corollary}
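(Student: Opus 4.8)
The plan is to reduce to positive density matrices and then truncate the spectral expansion. Since $\R Q(T)$ is a real linear subspace, $\|\cdot\|_X$ satisfies the triangle inequality (Lemma \ref{Lemma: The X-norm}), and every $\gamma\in X$ is by definition of the form $\gamma=\gamma^+-\gamma^-$ with $\gamma^\pm\in X^+$, it suffices to approximate an arbitrary $\gamma\in X^+$ in the $X$-norm by elements of $\R Q(T)$: if $a_N,b_N\in\R Q(T)$ satisfy $\|\gamma^+-a_N\|_X\to 0$ and $\|\gamma^--b_N\|_X\to 0$, then $a_N-b_N\in\R Q(T)$ and $\|\gamma-(a_N-b_N)\|_X\le\|\gamma^+-a_N\|_X+\|\gamma^--b_N\|_X\to 0$.

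So fix $\gamma\in X^+$ and write its spectral decomposition $\gamma=\sum_n\lambda_n|\varphi_n\rangle\langle\varphi_n|$ with $\lambda_n>0$ and $\{\varphi_n\}$ orthonormal; by Lemma \ref{Lemma: the diagonalization is in D(T^(1/2))} we have $\varphi_n\in Q(T)$ for every $n$. Put $\gamma_N:=\sum_{n=1}^N\lambda_n|\varphi_n\rangle\langle\varphi_n|$. Since $Q(T)$ is a subspace, $\sqrt{\lambda_n}\,\varphi_n\in Q(T)$, so $\gamma_N=\sum_{n=1}^N|\sqrt{\lambda_n}\,\varphi_n\rangle\langle\sqrt{\lambda_n}\,\varphi_n|\in\R Q(T)$. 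The remainder $\gamma-\gamma_N=\sum_{n>N}\lambda_n|\varphi_n\rangle\langle\varphi_n|$ is positive, and writing $\gamma=\gamma_N+(\gamma-\gamma_N)$ with both summands positive, \eqref{Equation: Split the T} gives $\Tr(T(\gamma-\gamma_N))\le\Tr(T\gamma)<+\infty$, so $\gamma-\gamma_N\in X^+$.

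It remains to check $\|\gamma-\gamma_N\|_X\to 0$. By the equality case of Lemma \ref{Lemma: Trace inequality} for positive operators, together with the additivity \eqref{Equation: Split the T} of $\Tr(T\,\cdot\,)$ extended to the convergent spectral sum (as in the proof of Lemma \ref{Lemma: the diagonalization is in D(T^(1/2))}),
\[\|\gamma-\gamma_N\|_X=\Tr(\gamma-\gamma_N)+\Tr\big(T(\gamma-\gamma_N)\big)=\sum_{n>N}\lambda_n+\sum_{n>N}\lambda_n\,\Tr\big(T|\varphi_n\rangle\langle\varphi_n|\big).\]
Both terms are tails of the convergent series $\Tr(\gamma)=\sum_n\lambda_n<+\infty$ and $\Tr(T\gamma)=\sum_n\lambda_n\,\Tr(T|\varphi_n\rangle\langle\varphi_n|)<+\infty$, the latter being finite precisely because $\gamma\in X^+$; hence they tend to $0$ as $N\to\infty$. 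Once the earlier lemmas are in place this is essentially bookkeeping, so there is no real obstacle; the only point deserving a word of care is the justification that $\Tr(T\,\cdot\,)$ distributes over the infinite spectral sum, which follows from monotone convergence applied to the increasing family $\{T_\epsilon\}_{\epsilon>0}$, exactly as in Lemma \ref{Lemma: the diagonalization is in D(T^(1/2))}.
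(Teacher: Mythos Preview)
Your proof is correct and follows precisely the natural route the paper has in mind: the corollary is stated immediately after Lemma~\ref{Lemma: the diagonalization is in D(T^(1/2))} with the remark that it is a ``natural corollary'' and no further argument is given. Your reduction to $X^+$, truncation of the spectral expansion, and use of the equality case in Lemma~\ref{Lemma: Trace inequality} to identify $\|\gamma-\gamma_N\|_X$ with the tail of a convergent series is exactly the intended (but unwritten) argument.
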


We will show in the following that $X$ is complete when equipped with the norm $\|\cdot\|_X$, and is hence a real Banach space. In order to prove this, we need the following theorem. 

\begin{theorem} \label{Theorem: The X-norm is lower semicontinuous with respect to the weak-$*$ topology}
    $\|\cdot\|_X : \cS_1(\cH) \to \R^+\cup\{+\infty\}$ is lower semi-continuous with respect to the weak-$*$ topology on $\cS_1(\cH)$. 
\end{theorem}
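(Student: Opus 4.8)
The plan is to deduce weak-$*$ lower semicontinuity of $\|\cdot\|_X$ from the Fatou-type Lemma~\ref{Lemma: Trace of unbounded operator is lower semicontinuous}, applied to the positive operator $V = 1+T$, after replacing the norm-minimizing decompositions of the $\gamma_n$ by their weak-$*$ limits and checking that these limits recombine to $\gamma$.

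Concretely, suppose $\gamma_n \xrightharpoonup{*} \gamma$ in $\cS_1(\cH)$ and put $L := \liminf_{n\to\infty}\|\gamma_n\|_X$. If $L = +\infty$ there is nothing to prove, so assume $L<+\infty$ and pass to a subsequence --- still denoted $\{\gamma_n\}$, and still weak-$*$ convergent to $\gamma$ --- along which $\|\gamma_n\|_X \to L$; in particular $\gamma_n \in X$ for all large $n$. By Lemma~\ref{Corollary: Unique decomposition} we may pick decompositions $\gamma_n = \gamma_n^+ - \gamma_n^-$ with $\gamma_n^\pm \in \cS_1^+(\cH)$ and $\Tr((1+T)\gamma_n^+)+\Tr((1+T)\gamma_n^-) = \|\gamma_n\|_X$. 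Since $\|\gamma_n^\pm\|_1 = \Tr(\gamma_n^\pm) \leq \Tr((1+T)\gamma_n^\pm) \leq \|\gamma_n\|_X$, the sequences $\{\gamma_n^\pm\}_n$ are bounded in trace norm, so by the Banach--Alaoglu theorem (passing to a further subsequence, along which $\|\gamma_n\|_X\to L$ still holds) there exist $\gamma^\pm \in \cS_1^{\mathrm{sa}}(\cH)$ with $\gamma_n^\pm \xrightharpoonup{*} \gamma^\pm$. Testing against rank-one projections $|\psi\rangle\langle\psi|\in\cK^+(\cH)$ shows $\gamma^\pm \geq 0$, and testing against arbitrary $K\in\cK(\cH)$ gives $\gamma_n = \gamma_n^+ - \gamma_n^- \xrightharpoonup{*} \gamma^+ - \gamma^-$, so $\gamma = \gamma^+ - \gamma^-$ by uniqueness of weak-$*$ limits. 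Hence $\gamma^\pm$ is an admissible pair in the infimum defining $\|\gamma\|_X$.

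Now apply Lemma~\ref{Lemma: Trace of unbounded operator is lower semicontinuous} with $V = 1+T$ to each sequence $\gamma_n^\pm \xrightharpoonup{*} \gamma^\pm$ of positive trace-class operators, obtaining $\Tr((1+T)\gamma^\pm) \leq \liminf_{n\to\infty}\Tr((1+T)\gamma_n^\pm)$. Summing the two inequalities and using $\liminf a_n + \liminf b_n \leq \liminf(a_n+b_n)$ together with the choice of decompositions and of subsequence,
\[\|\gamma\|_X \;\leq\; \Tr((1+T)\gamma^+) + \Tr((1+T)\gamma^-) \;\leq\; \liminf_{n\to\infty}\big(\Tr((1+T)\gamma_n^+)+\Tr((1+T)\gamma_n^-)\big) \;=\; \lim_{n\to\infty}\|\gamma_n\|_X \;=\; L,\]
which is the desired inequality $\|\gamma\|_X \leq \liminf_{n\to\infty}\|\gamma_n\|_X$ for the original sequence (and in passing shows the middle quantity is finite, so the decomposition used is genuinely admissible).

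The only step needing real care is the joint extraction of the weak-$*$ limits $\gamma^\pm$ of the positive and negative parts and the verification that $\gamma^+ - \gamma^- = \gamma$; the remainder is bookkeeping with $\liminf$'s. I do not expect a genuine obstacle here --- Lemma~\ref{Lemma: Trace of unbounded operator is lower semicontinuous} was tailored for precisely this application, and it carries the whole argument once the decompositions have been passed to the limit.
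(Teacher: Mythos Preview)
Your proof is correct and follows essentially the same approach as the paper's: pass to a subsequence along which the $X$-norms are bounded, invoke Lemma~\ref{Corollary: Unique decomposition} to realize each $\|\gamma_n\|_X$ by an exact decomposition $\gamma_n=\gamma_n^+-\gamma_n^-$, extract weak-$*$ limits $\gamma^\pm$ via Banach--Alaoglu, verify $\gamma=\gamma^+-\gamma^-$, and then apply Lemma~\ref{Lemma: Trace of unbounded operator is lower semicontinuous} to $V=1+T$. Your write-up is in fact slightly more careful than the paper's in two places: you explicitly verify $\gamma^\pm\geq 0$ by testing against rank-one projections, and you spell out the $\liminf a_n+\liminf b_n\leq\liminf(a_n+b_n)$ step when summing the two Fatou inequalities.
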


\begin{proof} Let $\{\gamma_n\}_{n\in \N}$ be a sequence in $\cS^\text{sa}_1(\cH)$ which converges in the weak-$*$ topology to some element $\gamma\in \cS_1^\text{sa}(\cH)$. We must show that 
\begin{equation}\label{Equation: Lower semicontinuity of T-semionorm}
    \|\gamma\|_X \leq \liminf_{n\to \infty} \|\gamma_n\|_X.
\end{equation} 
If $\{\gamma_n\}_{n\in\N}$ has no subsequence which is bounded in $\|\cdot\|_X$, then the right hand side of \eqref{Equation: Lower semicontinuity of T-semionorm} equals $+\infty$, and the theorem holds trivially. Thus, restrict $\{\gamma_n\}_{n\in\N}$ to a subsequence which is bounded in $\|\cdot\|_X$. By Lemma \ref{Corollary: Unique decomposition}, we can pick a decomposition $\gamma_n= \gamma_n^+-\gamma_n^-$ satisfying \begin{equation} \label{Equation: seminorm squeeze}
    \|\gamma_n\|_X  = \Tr\big((1+T)\gamma_n^+\big)+\Tr\big((1+T)\gamma_n^-\big)
\end{equation} for every $n\in\N$. Since $\gamma_n^+$ and $\gamma^-_n$ are positive, we have 
\[\|\gamma_n^\pm \|_1 = \Tr(\gamma_n^\pm)\leq \Tr((1+T)\gamma_n^\pm)\leq \|\gamma_n\|_X,\] which is bounded since $\|\gamma_n\|_X$ is bounded. Hence, by the Banach-Alaoglu theorem, there exist subsequences of $\{\gamma_n^\pm\}_{n\in\N}$ which converge in the weak-$*$ topology to some $\gamma^\pm\in \cS_1(\cH)$. We have
\[\Tr(A(\gamma^+-\gamma^-)) = \lim_{n\to \infty} \Tr(A(\gamma_n^+ -\gamma_n^-))=\Tr(A\gamma)\]
for all $A\in \cK(\cH)$, which means that $\gamma = \gamma^+ -\gamma^-$. Since $\gamma^\pm$ gives a decomposition of $\gamma$, we have by definition of $\|\cdot\|_X$ that 
\[\|\gamma\|_X \leq \Tr((1+T)\gamma^+)+\Tr((1+T)\gamma^-).\]
By Lemma \ref{Lemma: Trace of unbounded operator is lower semicontinuous}, we have
\begin{equation*}
\begin{split}
\Tr\big((1+T)\gamma^+\big)+\Tr\big((1+T)\gamma^-\big)&\leq \liminf_{n\to\infty}\Tr\big((1+T)\gamma^+_n\big)+\Tr\big((1+T)\gamma^-_n\big) \\ &= \liminf_{n\to\infty}\|\gamma_n\|_X,
\end{split}
\end{equation*}
the last equality being due to \eqref{Equation: seminorm squeeze}. 
\end{proof}

\begin{theorem}
    $(X,\|\cdot\|_X)$ is a real Banach space.
\end{theorem}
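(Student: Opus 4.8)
The plan is to deduce completeness from three facts already in hand: that $(X,\|\cdot\|_X)$ is a normed space (Lemma~\ref{Lemma: The X-norm}), that $\|\gamma\|_1\le\|\gamma\|_X$ on $X$ (same lemma), and that $\|\cdot\|_X$ is lower semicontinuous on $\cS_1(\cH)$ for the weak-$*$ topology (Theorem~\ref{Theorem: The X-norm is lower semicontinuous with respect to the weak-$*$ topology}). Given these, the only thing left is to show that an arbitrary $\|\cdot\|_X$-Cauchy sequence converges in $X$.

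So I would start with a Cauchy sequence $\{\gamma_n\}_{n\in\N}$ in $(X,\|\cdot\|_X)$. Since $\|\gamma_n-\gamma_m\|_1\le\|\gamma_n-\gamma_m\|_X$, the sequence is Cauchy in the real Banach space $\cS_1^{\mathrm{sa}}(\cH)$ and hence converges there in trace norm to some $\gamma\in\cS_1^{\mathrm{sa}}(\cH)$; in particular $\gamma_n\xrightharpoonup{*}\gamma$. Next I would check that $\gamma\in X$: a Cauchy sequence is bounded, so $M:=\sup_n\|\gamma_n\|_X<+\infty$, and weak-$*$ lower semicontinuity gives $\|\gamma\|_X\le\liminf_n\|\gamma_n\|_X\le M<+\infty$. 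Finiteness of $\|\gamma\|_X$ means, by Lemma~\ref{Corollary: Unique decomposition} (attainment of the infimum), that $\gamma$ admits a decomposition $\gamma=\gamma^+-\gamma^-$ with $\gamma^\pm\in X^+$, i.e.\ $\gamma\in X$.

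Finally I would upgrade weak-$*$ convergence to $\|\cdot\|_X$-convergence. Fix $\epsilon>0$ and pick $N$ with $\|\gamma_n-\gamma_m\|_X<\epsilon$ for all $m,n\ge N$. For fixed $n\ge N$ the sequence $(\gamma_n-\gamma_m)_{m\in\N}$ converges in the weak-$*$ topology to $\gamma_n-\gamma$ as $m\to\infty$, so applying Theorem~\ref{Theorem: The X-norm is lower semicontinuous with respect to the weak-$*$ topology} once more yields
\[
\|\gamma_n-\gamma\|_X\le\liminf_{m\to\infty}\|\gamma_n-\gamma_m\|_X\le\epsilon .
\]
Hence $\gamma_n\to\gamma$ in $\|\cdot\|_X$, and $X$ is complete.

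I do not expect a serious obstacle here once the lower-semicontinuity theorem is available; the one point that genuinely requires care is the membership $\gamma\in X$. One must not merely conclude $\|\gamma\|_X<+\infty$ but actually produce a decomposition of $\gamma$ into two elements of $X^+$, and this is exactly what the min-attainment in Lemma~\ref{Corollary: Unique decomposition} supplies. (An alternative to passing through the trace-norm limit is to use that the Cauchy sequence is trace-norm bounded, extract a weak-$*$ convergent subsequence by Banach--Alaoglu, and then invoke the Cauchy property to promote subsequential convergence to convergence of the whole sequence; but routing through the already-complete $\cS_1^{\mathrm{sa}}(\cH)$ is cleaner.)
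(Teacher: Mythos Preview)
Your argument is correct and matches the paper's proof: both obtain a weak-$*$ limit $\gamma$, use Theorem~\ref{Theorem: The X-norm is lower semicontinuous with respect to the weak-$*$ topology} to get $\|\gamma\|_X<+\infty$, and then apply lower semicontinuity once more to $\gamma_n-\gamma_m$ to upgrade to $\|\cdot\|_X$-convergence; the only cosmetic difference is that you reach the limit via trace-norm completeness of $\cS_1^{\mathrm{sa}}(\cH)$ rather than Banach--Alaoglu on a subsequence (an alternative you yourself note). One small slip: Lemma~\ref{Corollary: Unique decomposition} is stated with hypothesis $\gamma\in X$, so citing it to \emph{deduce} $\gamma\in X$ is circular as written---but you don't need it, since finiteness of the infimum defining $\|\gamma\|_X$ already yields some decomposition $\gamma=\gamma^+-\gamma^-$ with $\Tr((1+T)\gamma^\pm)<+\infty$, hence $\gamma\in X$ directly from the definition.
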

\begin{proof}
We need to show that $X$ is complete. Let $\{\gamma_n\}_{n\in\N}$ be a Cauchy sequence in $\|\cdot\|_X$. It is sufficient to show that $\{\gamma_n\}_{n\in N}$ has a convergent subsequence. Since $\{\gamma_n\}_{n\in\N}$ is Cauchy, it is in particular bounded. Hence, we can use the Banach-Alaoglu Theorem to restrict to a weak-$*$ convergent subsequence $\gamma_n\xrightharpoonup{*}\gamma$. By Theorem \ref{Theorem: The X-norm is lower semicontinuous with respect to the weak-$*$ topology}, we have
\[\|\gamma\|_X\leq \liminf_{n\to \infty}\|\gamma_{n}\|_X<+\infty,\] which means that $\gamma\in X$. 

Next, we prove that $\|\gamma_n-\gamma\|_X\to0$. Let $\epsilon>0$. Since $\{\gamma_n\}_{n\in \N}$ is Cauchy, we can find $N\in \N$ such that $\|\gamma_n-\gamma_{m}\|_X <\epsilon$ for all $m,n\geq N$. Since $\gamma_m-\gamma_n \xrightharpoonup[n\to\infty]{*}\gamma_m-\gamma$, it follows from Theorem \ref{Theorem: The X-norm is lower semicontinuous with respect to the weak-$*$ topology} that we have 
\[\|\gamma_m-\gamma\|_X\leq \liminf_{n\to \infty}\|\gamma_m-\gamma_n\|_X \leq\epsilon\] for all $m\geq N$. Since $\epsilon>0$ was arbitrary, this concludes the proof. 
\end{proof}

The Banach space $X$ will turn out to be a useful framework for density matrices. 

\section{Single-particle Hamiltonians}\label{Section: Single-particle Hamiltonians}

By definition, the dual space of $X$ is given by
\[X^* := \left\{ \omega: X\to \R  \;\middle | \; \omega \text{ is linear and } \exists C_\omega>0 : |\omega(\gamma)|\leq C_\omega \|\gamma\|_X  \text{ for all } \gamma\in X\right\}.\]

\begin{lemma}\label{Lemma: Unique linear extension}
    Let $\omega: X^+ \to \R$ be a map satisfying $|\omega(\gamma)|\leq C_\omega\|\gamma\|_X$, where 
    \[C_\omega := \sup_{\gamma\in X^+}\frac{|\omega(\gamma)|}{\|\gamma\|_X} < +\infty ,\]
   and $\omega(a\gamma + b\gamma')=a\omega(\gamma) + b\omega(\gamma')$ for all $\gamma,\gamma'\in X^+$ with $a,b\in \R^+$. Then, $\omega$ has a unique linear bounded extension to all of $X$, i.e. there is a unique $\overline{\omega}\in X^*$ satisfying $\overline{\omega}(\gamma)=\omega(\gamma)$ for all $\gamma\in X^+$. Furthermore, we have \begin{equation}\label{Equation: Dual norm from positive elements}
    \|\overline{\omega}\|_{X^*} = C_\omega.
\end{equation} 
\end{lemma}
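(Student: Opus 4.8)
The plan is to extend $\omega$ by the only formula that a linear extension could possibly have: since $X = X^+ - X^+$, write any $\gamma\in X$ as $\gamma = \gamma^+-\gamma^-$ with $\gamma^\pm\in X^+$ and set $\overline{\omega}(\gamma):=\omega(\gamma^+)-\omega(\gamma^-)$. The first task is to check that this does not depend on the chosen decomposition. Given a second decomposition $\gamma = \tilde\gamma^+-\tilde\gamma^-$ with $\tilde\gamma^\pm\in X^+$, one has the identity $\gamma^+ + \tilde\gamma^- = \tilde\gamma^+ + \gamma^-$ between elements of $X^+$, and additivity of $\omega$ on the cone then gives $\omega(\gamma^+)+\omega(\tilde\gamma^-) = \omega(\tilde\gamma^+)+\omega(\gamma^-)$, i.e. the two candidate values for $\overline{\omega}(\gamma)$ coincide. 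Taking $\gamma^- = 0$ (and using $\omega(0)=0$, which follows from positive homogeneity) shows $\overline{\omega}|_{X^+}=\omega$.

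Next I would verify linearity. Additivity is immediate: if $\gamma' = \gamma'^+-\gamma'^-$ is a decomposition, then $\gamma+\gamma' = (\gamma^++\gamma'^+)-(\gamma^-+\gamma'^-)$ is a valid decomposition of $\gamma+\gamma'$, and additivity of $\omega$ on $X^+$ yields $\overline{\omega}(\gamma+\gamma') = \overline{\omega}(\gamma)+\overline{\omega}(\gamma')$. For homogeneity, if $\lambda\geq 0$ use the decomposition $\lambda\gamma = \lambda\gamma^+-\lambda\gamma^-$ together with positive homogeneity of $\omega$; if $\lambda<0$ use instead $\lambda\gamma = (-\lambda)\gamma^- - (-\lambda)\gamma^+$, which gives $\overline{\omega}(\lambda\gamma) = \omega((-\lambda)\gamma^-)-\omega((-\lambda)\gamma^+) = \lambda\bigl(\omega(\gamma^+)-\omega(\gamma^-)\bigr) = \lambda\overline{\omega}(\gamma)$. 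Uniqueness is then essentially automatic: any linear $\overline{\omega}'$ agreeing with $\omega$ on $X^+$ satisfies $\overline{\omega}'(\gamma) = \overline{\omega}'(\gamma^+)-\overline{\omega}'(\gamma^-) = \omega(\gamma^+)-\omega(\gamma^-) = \overline{\omega}(\gamma)$ for every $\gamma\in X$.

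It remains to establish boundedness and the norm identity \eqref{Equation: Dual norm from positive elements}. For any decomposition $\gamma = \gamma^+-\gamma^-$ with $\gamma^\pm\in X^+$, since $\gamma^\pm$ are positive Lemma \ref{Lemma: Trace inequality} gives $\|\gamma^\pm\|_X = \Tr\bigl((1+T)\gamma^\pm\bigr)$, hence
\[
|\overline{\omega}(\gamma)| \leq |\omega(\gamma^+)|+|\omega(\gamma^-)| \leq C_\omega\bigl(\Tr((1+T)\gamma^+)+\Tr((1+T)\gamma^-)\bigr).
\]
Taking the infimum over all such decompositions gives $|\overline{\omega}(\gamma)|\leq C_\omega\|\gamma\|_X$, so $\overline{\omega}\in X^*$ with $\|\overline{\omega}\|_{X^*}\leq C_\omega$. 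The reverse inequality is immediate since $\overline{\omega}$ restricts to $\omega$ on $X^+$, so $\|\overline{\omega}\|_{X^*}\geq \sup_{0\neq\gamma\in X^+}|\omega(\gamma)|/\|\gamma\|_X = C_\omega$, and equality follows. The argument is routine throughout; the only place that needs a little care is this last estimate, where one must bound $|\overline{\omega}(\gamma)|$ over an \emph{arbitrary} decomposition and pass to the infimum in order to recover exactly $\|\cdot\|_X$ rather than a larger quantity — together with remembering to treat $\omega(0)=0$ and the negative-scalar case separately in the linearity step.
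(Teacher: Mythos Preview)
Your proof is correct and follows essentially the same route as the paper: define $\overline{\omega}(\gamma)=\omega(\gamma^+)-\omega(\gamma^-)$, use $\gamma^+ + \tilde\gamma^- = \tilde\gamma^+ + \gamma^-$ to show well-definedness, then bound $|\overline{\omega}(\gamma)|$ over an arbitrary decomposition and pass to the infimum to recover $\|\cdot\|_X$. The paper abbreviates the linearity and uniqueness checks to ``straightforward to check'', whereas you spell them out (including the $\lambda<0$ case and $\omega(0)=0$), but the argument is otherwise identical.
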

\begin{proof}
    Let $\gamma\in X$, and let $\gamma= \gamma ^+ -\gamma^-$ be a decomposition such that $\gamma^\pm \in X^+$. We define 
    \[\overline{\omega}(\gamma):= \omega(\gamma^+)-\omega(\gamma^-).\]
    Firstly, this is independent of the choice of decomposition, since if $\gamma = \tilde\gamma^+-\tilde\gamma^-$ is any other decomposition, then 
    \[\gamma^+ -\gamma^- = \tilde\gamma^+ - \tilde\gamma^-\Leftrightarrow \gamma^++\tilde\gamma^-=\gamma^- +\tilde\gamma^+,\] and hence 
    \[\omega(\gamma^+)+\omega(\tilde\gamma^-)=\omega(\gamma^-) +\omega(\tilde\gamma^+) \Leftrightarrow \omega(\gamma^+)-\omega(\gamma^-)=\omega(\tilde\gamma^+)-\omega(\tilde\gamma^-).\]
    It is straightforward to check that $\overline{\omega}$ is linear and unique, and finally we have that
    \[|\overline{\omega}(\gamma)|\leq |\omega(\gamma^+)|+|\omega(\gamma^-)|\leq C_\omega\left(\Tr((1+T)\gamma^+)+\Tr((1+T)\gamma^-)\right).\] Taking the infimum over decompositions, we obtain that 
\[\|\overline{\omega}\|_{X^*}\leq C_\omega = \sup_{\gamma\in X^+}\frac{|\omega(\gamma)|}{\|\gamma\|_X}\leq \sup_{\gamma\in X}\frac{|\overline\omega(\gamma)|}{\|\gamma\|_X} = \|\overline{\omega}\|_{X^*},\]
    and hence, $\|\overline\omega\|_{X^*}=C_\omega$.
\end{proof}
We say that $\omega\in X^*$ has \emph{$T$-bound} $a_\omega>0$ if there exists $b_\omega>0$ such that
\begin{equation}\label{Equation: Relatively bounded} |\omega(\gamma)|\leq a_\omega\Tr(T\gamma) +b_\omega \Tr(\gamma)\end{equation}
whenever $\gamma\in X^+$. We define
\begin{equation}
    \cR := \left\{  \omega \in X^* \;\middle| \;\text{$\omega$ has $T$-bound $<1$}   \right\} \subset X^*. \label{eq:cR definition}
\end{equation}
Note that if $\omega\in\cR$, then we have
\begin{equation}\label{Equation: If relatively bounded then lower semibounded}
    \Tr(T\gamma)+\omega(\gamma) \geq (1-a_\omega)\Tr(T\gamma)-b_\omega\Tr(\gamma)\geq -b_\omega\Tr(\gamma) \text{ for all }\gamma\in X^+,
\end{equation}
and we say that $\Tr(T\cdot)+\omega $ is lower semibounded with lower bound $-b_\omega$. 
\begin{lemma}\label{Lemma: R is open}
    $\cR$ is a convex open neighborhood of $0\in X^*$, and as a consequence, for every $\omega\in X^*$ we have $\lambda \omega\in \cR$ for small enough $\lambda\in \R^+$.
\end{lemma}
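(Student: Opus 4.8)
The plan is to reduce everything to one identity: by the last sentence of Lemma~\ref{Lemma: Trace inequality}, every $\gamma\in X^+$ satisfies $\|\gamma\|_X=\Tr(\gamma)+\Tr(T\gamma)$. Consequently a bound $|\omega(\gamma)|\le C\|\gamma\|_X$ valid on $X^+$ is \emph{exactly} the relative bound $|\omega(\gamma)|\le C\Tr(T\gamma)+C\Tr(\gamma)$, so the $X^*$-norm and the notion of $T$-bound are directly interchangeable on the positive cone. Everything then follows from elementary manipulations with this comparison.

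First I would show $\cR$ is a neighbourhood of $0$ by proving $B_{\|\cdot\|_{X^*}}(0,1)\subset\cR$. If $\omega\neq 0$ has $\|\omega\|_{X^*}<1$, then for all $\gamma\in X^+$,
\[
|\omega(\gamma)|\le \|\omega\|_{X^*}\|\gamma\|_X=\|\omega\|_{X^*}\Tr(T\gamma)+\|\omega\|_{X^*}\Tr(\gamma),
\]
so $\omega$ has $T$-bound $a_\omega:=\|\omega\|_{X^*}\in(0,1)$; the case $\omega=0$ is trivial (take any $a_\omega\in(0,1)$ and $b_\omega=1$). Hence $\omega\in\cR$, and in particular $0\in\cR$.

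Next, openness. Let $\omega_0\in\cR$ have $T$-bound $a_0<1$ with constant $b_0$, and pick $\epsilon>0$ with $a_0+\epsilon<1$. For $\omega$ with $\|\omega-\omega_0\|_{X^*}<\epsilon$ and any $\gamma\in X^+$,
\[
|\omega(\gamma)|\le|\omega_0(\gamma)|+|(\omega-\omega_0)(\gamma)|\le a_0\Tr(T\gamma)+b_0\Tr(\gamma)+\epsilon\|\gamma\|_X=(a_0+\epsilon)\Tr(T\gamma)+(b_0+\epsilon)\Tr(\gamma),
\]
so $\omega$ has $T$-bound $a_0+\epsilon<1$; thus $B_{\|\cdot\|_{X^*}}(\omega_0,\epsilon)\subset\cR$ and $\cR$ is open. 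Convexity is the same kind of computation: if $\omega_1,\omega_2\in\cR$ have $T$-bounds $a_1,a_2<1$ with constants $b_1,b_2$, and $t\in[0,1]$, then on $X^+$ one has $|t\omega_1(\gamma)+(1-t)\omega_2(\gamma)|\le(ta_1+(1-t)a_2)\Tr(T\gamma)+(tb_1+(1-t)b_2)\Tr(\gamma)$, and $ta_1+(1-t)a_2<1$.

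Finally, the consequence: given $\omega\in X^*$, set $\lambda_0:=(1+\|\omega\|_{X^*})^{-1}>0$; then for every $0<\lambda<\lambda_0$ we have $\|\lambda\omega\|_{X^*}=\lambda\|\omega\|_{X^*}<1$, so $\lambda\omega\in\cR$ by the first step. I do not anticipate a real obstacle here; the only subtlety is the bookkeeping between $\|\cdot\|_{X^*}$-balls and $T$-bounds, which is entirely handled by the exact norm identity on $X^+$, together with remembering that the definition of $T$-bound insists on a strictly positive constant $a_\omega$ (harmless, relevant only for $\omega=0$).
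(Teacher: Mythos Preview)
Your proof is correct and follows essentially the same approach as the paper: both arguments hinge on the identity $\|\gamma\|_X=\Tr((1+T)\gamma)$ for $\gamma\in X^+$ (Lemma~\ref{Lemma: Trace inequality}) to translate an $X^*$-norm bound into a $T$-bound, and the openness and convexity computations are virtually identical. Your version is slightly more explicit in isolating $B_{\|\cdot\|_{X^*}}(0,1)\subset\cR$ as a first step and in giving a concrete $\lambda_0$ for the final consequence, but these are presentational refinements rather than a different route.
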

\begin{proof}
    We first show that $\cR$ is convex. Let $\omega,\omega'\in \cR$, let $0\leq \alpha\leq 1$ and let $\tilde\omega:=\alpha \omega +(1-\alpha)\omega'$. Then, 
    \begin{equation*}
    \begin{split}
    \tilde\omega(\gamma)&=\alpha \omega(\gamma) +(1-\alpha)\omega'(\gamma) \\&\leq (\alpha a_{\omega} + (1-\alpha)a_{\omega'})\Tr(T\gamma)+ (\alpha b_{\omega} + (1-\alpha)b_{\omega'})\Tr(T\gamma) \\&\leq \max(a_{\omega},a_{\omega'})\Tr(T\gamma)+\max(b_{\omega},b_{\omega'})\Tr(\gamma),
    \end{split}
    \end{equation*} which satisfies $\tilde\omega\in \cR$ since $\max(a_{\omega},a_{\omega'}) < 1$. Hence, $\cR$ is convex. Next, we show that $\cR$ is open. Suppose that $\omega\in \cR$, let $0<\epsilon <1-a_\omega$ and let $\omega'\in B_{\|\cdot\|_{X^*}}(0,\epsilon)$. Then, if $\gamma\in X^+$, we have
    \begin{equation*}
        \begin{split}
    |(\omega + \omega')(\gamma)| &\leq a_\omega\Tr(T\gamma)+b_\omega\Tr(\gamma) + \|\omega'\|_{X^*}\Tr((1+T)\gamma) \\&\leq (a_\omega+\|\omega'\|_{X^*})\Tr(T\gamma) + (b_\omega+\|\omega'\|_{X^*})\Tr(\gamma) 
    \\&<(a_\omega+\epsilon)\Tr(T\gamma) + (b_\omega+\epsilon)\Tr(\gamma).
        \end{split}
    \end{equation*}
    Thus, since $a_\omega+\epsilon <1$, it follows that $\omega+\omega'\in\cR$ for all $\omega'\in B_{\|\cdot\|_{X^*}}(\epsilon)$. Hence, $\cR$ is open. 
\end{proof}
We now show that the KLMN Theorem (Theorem \ref{Theorem: KLMN}) allows us to represent the elements of $X^*$ in terms of Hermitian forms. Recall that $\cF(T)$ denotes the set of relatively $T$-bounded sesquilinear forms, a normed space per \eqref{Equation: norm on cF(T)}.
\begin{proposition} \label{Proposition: one-to-one correspondence}
    There is a one-to-one correspondence between $\cF(T)$ and $X^*$. This one-to-one correspondence is an isometry with respect to the norms $\|\cdot\|_{\cF(T)}$ and $\|\cdot\|_{X^*}$.  
\end{proposition}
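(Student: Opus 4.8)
The plan is to exhibit the bijection explicitly in both directions and check it preserves norms. Given a form $v \in \cF(T)$, the associated functional $\omega_v \in X^*$ should be defined on the dense subset $\R Q(T)$ of rank-one generators by $\omega_v(|\varphi\rangle\langle\varphi|) := v(\varphi,\varphi)$, extended by linearity. The first step is to show this is bounded on $\R Q(T)$ with respect to $\|\cdot\|_X$: for a positive $\gamma = \sum_n \lambda_n |\varphi_n\rangle\langle\varphi_n| \in X^+$ with $\varphi_n \in Q(T)$ (which is automatic by Lemma~\ref{Lemma: the diagonalization is in D(T^(1/2))}), one has $|\sum_n \lambda_n v(\varphi_n,\varphi_n)| \le \|v\|_{\cF(T)} \sum_n \lambda_n \|\varphi_n\|_T^2 = \|v\|_{\cF(T)}\Tr((1+T)\gamma)$. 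Taking an infimum over decompositions and invoking Lemma~\ref{Lemma: Unique linear extension} (the hypothesis of that lemma is exactly positive-homogeneity and boundedness on $X^+$), $\omega_v$ extends uniquely to an element of $X^*$ with $\|\omega_v\|_{X^*} \le \|v\|_{\cF(T)}$.

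Conversely, given $\omega \in X^*$, I would recover a form by setting $v_\omega(\varphi,\varphi) := \omega(|\varphi\rangle\langle\varphi|)$ for $\varphi \in Q(T)$ and then using polarization to define $v_\omega(\psi,\varphi)$ off the diagonal; Hermiticity is forced because $\omega$ is real-valued and the rank-one operators $|\varphi\rangle\langle\varphi|$ are self-adjoint, so the polarization identity with real coefficients reproduces a Hermitian form. Boundedness $|v_\omega(\varphi,\varphi)| = |\omega(|\varphi\rangle\langle\varphi|)| \le \|\omega\|_{X^*}\||\varphi\rangle\langle\varphi|\|_X \le \|\omega\|_{X^*}\|\varphi\|_T^2$ shows $v_\omega \in \cF(T)$ with $\|v_\omega\|_{\cF(T)} \le \|\omega\|_{X^*}$. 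Composing the two assignments in either order gives the identity on the dense set $\R Q(T)$ (respectively on diagonal values $v(\varphi,\varphi)$, which by polarization determine $v$), hence on all of $X$ and all of $\cF(T)$; combining the two norm inequalities yields the isometry $\|\omega_v\|_{X^*} = \|v\|_{\cF(T)}$.

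The main obstacle is the off-diagonal/polarization bookkeeping: one must verify that the value $\omega_v(\gamma)$ for a general self-adjoint $\gamma \in X$ is genuinely independent of how $\gamma$ is written as a real-linear combination of rank-one projections $|\varphi\rangle\langle\varphi|$ (there are many such representations), and dually that the sesquilinear extension of $\varphi \mapsto \omega(|\varphi\rangle\langle\varphi|)$ via polarization is well-defined and linear/antilinear in the right slots. The cleanest route is to note that both $X^*$ and $\cF(T)$ are determined by their restriction to the diagonal generators — for $X^*$ because $\R Q(T)$ is dense in $X$ by Corollary~\ref{Corollary: RD is dense in X} and $\omega$ is continuous, and for $\cF(T)$ because a Hermitian form is determined by its quadratic form — so the correspondence is pinned down uniquely once defined on $\{|\varphi\rangle\langle\varphi| : \varphi \in Q(T)\}$, and the two constructions above are visibly mutually inverse there. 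A minor point to handle carefully: when passing from finite combinations to a general positive $\gamma \in X^+$ with an infinite spectral expansion, one needs that the series $\sum_n \lambda_n |\varphi_n\rangle\langle\varphi_n|$ converges to $\gamma$ in $\|\cdot\|_X$ (not merely in trace norm), which follows since the tails $\sum_{n>N}\lambda_n |\varphi_n\rangle\langle\varphi_n|$ have $\|\cdot\|_X$-norm equal to $\sum_{n>N}\lambda_n(1+\|T^{1/2}\varphi_n\|^2) \to 0$ by $\Tr((1+T)\gamma) < \infty$, so continuity of $\omega_v$ legitimizes term-by-term evaluation.
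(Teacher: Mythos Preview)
Your overall strategy is sound and would yield the result, but there is a real gap in the forward direction $v \mapsto \omega_v$. You write that the hypothesis of Lemma~\ref{Lemma: Unique linear extension} is ``exactly positive-homogeneity and boundedness on $X^+$'', but the lemma also requires \emph{additivity} on the cone: $\omega_v(a\gamma + b\gamma') = a\,\omega_v(\gamma) + b\,\omega_v(\gamma')$ for $\gamma,\gamma'\in X^+$. If $\omega_v(\gamma)$ is defined via the spectral decomposition as $\sum_n \lambda_n v(\varphi_n,\varphi_n)$, additivity is not automatic, since the eigenbasis of $\gamma+\gamma'$ is unrelated to those of $\gamma$ and $\gamma'$. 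Your ``cleanest route'' --- that elements of $X^*$ and of $\cF(T)$ are each determined by their diagonal values --- is a \emph{uniqueness} statement: it shows that if a continuous linear extension of $|\varphi\rangle\langle\varphi| \mapsto v(\varphi,\varphi)$ exists then it is unique, but does not supply existence. (The appeal to ``continuity of $\omega_v$'' at the end of your plan is likewise circular, since continuity is what is to be proved.) The gap is easily filled: restricting $v$ to any finite-dimensional $W \subset Q(T)$ gives a Hermitian matrix $V_W$ with $v(\varphi,\varphi) = \Tr_W(V_W|\varphi\rangle\langle\varphi|)$, so $\omega_v$ is well-defined and linear on $\R Q(T)$; additivity on finite-rank positives follows, and one can then pass to general $X^+$ using the $\|\cdot\|_X$-convergence of spectral partial sums you already noted.

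The paper takes a different route that sidesteps this altogether. Invoking the KLMN theorem for small $\lambda$, it sets
\[
\omega_v(\gamma) := \lambda^{-1}\bigl[\Tr((T\dot{+}\lambda v - \mu)\gamma) - \Tr(T\gamma) + \mu\Tr(\gamma)\bigr],
\]
a linear combination of three functionals already known to be linear on $X$ (by Lemma~\ref{Lemma: Trace inequality} and its analogue for the positive operator $T\dot{+}\lambda v - \mu$). Linearity is thus inherited for free, and one only checks afterward that $\omega_v(|\psi\rangle\langle\psi|) = v(\psi,\psi)$ and bounds the norm on $X^+$. Your route is more elementary in that it avoids KLMN, but the paper's route buys well-definedness and linearity at no cost.
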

\begin{proof}
    We first show how elements of $X^*$ are constructed from $\cF(T)$. Let $v\in \cF(T)$. For small enough $\lambda$, $\lambda v$ has $T$-bound less than $1$, and hence, by the KLMN-theorem the form sum $T\dot + \lambda v$ is self-adjoint and lower semibounded.     Next, for $\gamma\in X$, define 
    \begin{equation}
    \omega_v(\gamma) := \lambda^{-1}[\Tr((T\dot + \lambda v -\mu )\gamma) - \Tr(T\gamma)+\mu\Tr(\gamma)], \label{Equation: representation omega_v}
    \end{equation} where $\mu$ is the lower bound of $T\dot+\lambda v$. Notice that as consequence of \eqref{Equation: KLMN}, we have that 
    \begin{equation} \label{Equation: form is equal to dual element on the diagonal}
    \omega_v(|\psi\rangle \langle \psi|)=v(\psi,\psi) \text{ for all }\psi\in Q(T). 
    \end{equation} 
If $\gamma\in X^+$, we then have by Lemma \ref{Lemma: the diagonalization is in D(T^(1/2))} that
    \[|\omega_v(\gamma)|\leq\sum_n \lambda_n|v(\psi_n,\psi_n)|\leq C_v \sum_n \lambda_n \|\psi_n\|^2_T = C_v(\Tr((1+T)\gamma) = C_v\|\gamma\|_X.\]
    It follows from Lemma \ref{Lemma: Unique linear extension} that $\omega_v$ is bounded, i.e., $\omega_v\in X^*$. It  now follows that
    \begin{equation}
        \omega_v(\gamma) = \sum_n \lambda_nv(\psi_n,\psi_n) \text{ for all }\gamma\in X^+.\label{Equation: sum of forms}
    \end{equation}  
Next, we show how elements of $\cF(T)$ are constructed from $X^*$. Suppose that $\omega\in X^*$, and define
\begin{equation}\label{Hermitian form}
    v_\omega(\psi,\varphi) := \frac{1}{4}\sum_{k=0}^3 i^k\omega(|\psi+(-i)^k\varphi\rangle\langle \psi+(-i)^k\varphi|).
\end{equation} One can check that 
   \begin{equation}\label{Equation: form equals phi}
       v_\omega(\psi, \psi)=\omega(|\psi\rangle \langle \psi|)\text{ for all }\psi\in Q(T). 
   \end{equation}
    It is easy to check that $v$ is Hermitian sesquilinear and that $|v_\omega(\psi,\psi)|\leq C_\omega \|\psi\|_T^2$.  
    Hence $v_\omega\in \cF(T)$. 
    
    We next show that $v_{\omega_v}=v$ and $\omega_{v_\omega}=\omega$. For the first equality, notice that for all $\psi,\varphi\in Q(T)$, we have by \eqref{Equation: form is equal to dual element on the diagonal} that
    \[\omega_v(|\psi+(-i)^k\varphi\rangle \langle \psi + (-i)^k\varphi|) = v(\psi + (-i)^k\varphi,\psi + (-i)^k\varphi).\]
    We thus get
    \[v_{\omega_v}(\psi,\varphi)=\frac{1}{4}\sum_{k=0}^3 i^kv(\psi + (-i)^k\varphi,\psi + (-i)^k\varphi),\] which equals $v(\psi,\varphi)$ by the polarization identity. For the second equality, let $\omega\in X^*$, and let $\gamma\in X^+$. By \eqref{Equation: sum of forms}, we have 
    \[\omega_{v_\omega}(\gamma) = \sum_n \lambda_nv_\omega(\psi_n,\psi_n).\] By \eqref{Equation: form equals phi},  \[\sum_n \lambda_nv_\omega(\psi_n,\psi_n) = \omega(\gamma).\] We thus have $\omega_{v_\omega}(\gamma)=\omega(\gamma)$ for all $\gamma\in X^+$. By Lemma \ref{Lemma: Unique linear extension}, it follows that $\omega_{v_\omega}(\gamma)=\omega(\gamma)$ for all $\gamma\in X$. 

    Finally, we show that norms are preserved under this one-to-one correspondence. Let $v\in \cF(T)$. Then, 
    \[\|\omega_v\|_{X^*} = \sup_{\substack{\gamma\in X^+ \\\|\gamma\|_X=1}} |\omega_v(\gamma)| \geq  \sup_{\substack{\psi\in Q(T)\\ \|\psi\|_T=1}}|\omega_v(|\psi\rangle\langle \psi|)|=\sup_{\substack{\psi\in Q(T)\\ \|\psi\|_T=1}}|v(\psi,\psi)| =\|v\|_{\cF(T)}.\]
    Conversely, if $\omega\in X^*$ and $\gamma\in X^+$, then we have \[|\omega(\gamma)| = \left| \sum_n \lambda_n \omega(|\psi_n\rangle\langle \psi_n|)\right|=  \left| \sum_n \lambda_n v_\omega(\psi_n, \psi_n)\right|\leq \sum_n\lambda_n|v(\psi_n,\psi_n)|\]\[\leq \|v_\omega\|_{\cF(T)}\sum_n\lambda_n\leq \|v_\omega\|_{\cF(T)}\|\gamma\|_X.\]
   Dividing by $\|\gamma\|_X$ and taking the supremum over all $\gamma\in X^+$, it follows by Lemma \ref{Lemma: Unique linear extension} that $\|v_\omega\|_{\cF(T)}\geq \|\omega\|_{X^*}$. 
\end{proof}

\begin{remark}
    Because of the one-to-one correspondence in Proposition \ref{Proposition: one-to-one correspondence}, we shall identify elements of $\cF(T)$ and $X^*$. We thus regard any element of $X^*$ as representing a sesquilinear form through the relation $v(\psi,\varphi)\equiv v(|\varphi\rangle \langle \psi|)$. 
\end{remark}

\begin{remark}\label{Remark: Positive operator representation}
Elements of $X^*$ are \emph{a priori} only defined as linear functionals on $X$, which by Proposition \ref{Proposition: one-to-one correspondence} correspond to sesquilinear forms with domain $Q(T)$. There is in general no canonical way of representing elements of $X^*$ as operators on $\cH$. However, if $v\in \cR$, we have that $\Tr(T\,\cdot\,) + v$ is lower semibounded, and so the KLMN Theorem~\ref{Theorem: KLMN} tells us that there exists a unique lower semibounded self-adjoint operator $H_0(v)$ satisfying $\Tr(H_0(v) \gamma)=\Tr(T\gamma)+v(\gamma)$ for all $\gamma\in X$. The operator $H_0(v)$ is then given by the form sum $T\dot + v$. We thus regard $\cR$ as a \emph{generalized set of potentials} for our theory consisting of elements of $X^*$ giving rise to lower-semibounded Hamiltonians. 
\end{remark}

Next, we introduce the notion of infinitesimal bounds. We say that $w\in X^*$ has \emph{infinitesimal $T$-bound} if it has $T$-bound $\epsilon$ for all $\epsilon >0$\footnote{This property is also called ``$w$ is relatively form $T$-bounded with relative $T$-bound zero'' \cite[Section 10.7]{Schmudgen2012}}. Note that if $w$ has infinitesimal $T$-bound, then $w\in \cR$. We denote the set of infinitesimally $T$-bounded elements of $X^*$ by $\cR_\epsilon\subset \cR$. It is not hard to show that $\cR_\epsilon$ is a linear subspace of $X^*$. Interestingly, $\cR_\epsilon$ is closed in $X^*$ and is hence itself a Banach space with respect to the $X^*$-norm.

\begin{lemma}\label{Lemma: R inf is closed}
    $\cR_\epsilon$ is closed in $X^*$. 
\end{lemma}
\begin{proof}
    Let $\{w_n\}_{n\in \N}$ be a sequence in $\cR_\epsilon$, that converges in norm to $w\in X^*$. We will show that $w\in \cR_\epsilon$. Let $\epsilon >0$. 
    Since $w_n\in \cR_\epsilon$, there exists a $b_n>0$ such that 
    \[|w_n(\gamma)|\leq \frac{\epsilon}{2}\Tr(T\gamma) + b_n\Tr(\gamma) \text{ for all }\gamma\in X^+. \] Furthermore, we have that
    \[|w(\gamma)|\leq |w_n(\gamma)| + |w_n(\gamma)-w(\gamma)|\leq \frac{\epsilon}{2}\Tr(T\gamma) + b_n\Tr(\gamma) + \|w_n-w\|_{X^*}\|\gamma\|_X\] for all $\gamma\in X^+$. Since $w_n\to w$, we have for large enough $n$ that $\|w_n-w\|_{X^*}<\epsilon/2$. Hence, using the fact that $\|\gamma\|_X = \Tr((1+T)\gamma)$ for positive $\gamma$, we get
\[|w(\gamma)|\leq \epsilon\Tr(T\gamma) + \left(b_n+\frac{\epsilon}{2}\right)\Tr(\gamma)\] for all $\gamma\in X^+$, which is what we wanted to show. 
\end{proof}

\begin{lemma}\label{Lemma: R and R inf}
Let $w \in \cR_\epsilon$ and $v\in\cR$. Then, the following hold: 
    \begin{enumerate}
        \item $v+w$ has $T$-bound $<1$.
        \item $v$ has $T\dot +w$-bound $<1$.
        \item $w$ has infinitesimal $T\dot +v$-bound.
    \end{enumerate}
    Furthermore, we have
    \[T\dot + (w + v) = (T\dot +v)\dot +w = (T\dot +w)\dot+v,\]
    and this operator is lower semibounded. 
\end{lemma}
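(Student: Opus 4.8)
The plan is to reduce everything to the KLMN Theorem \ref{Theorem: KLMN} after translating the hypotheses into the language of Hermitian forms on $Q(T)$ via the isometric identification $X^*\cong\cF(T)$ of Proposition \ref{Proposition: one-to-one correspondence}. First, \textbf{Part 1} is immediate from the definition \eqref{Equation: Relatively bounded}: since $w\in\cR_\epsilon$ we may pick, for $\eta:=(1-a_v)/2>0$, a constant $b_\eta$ with $|w(\gamma)|\le\eta\Tr(T\gamma)+b_\eta\Tr(\gamma)$ on $X^+$, whence $|(v+w)(\gamma)|\le(a_v+\eta)\Tr(T\gamma)+(b_v+b_\eta)\Tr(\gamma)$ with $a_v+\eta=(1+a_v)/2<1$. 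For the remaining parts I would first record the dictionary: evaluating \eqref{Equation: Relatively bounded} on $\gamma=|\psi\rangle\langle\psi|$ and using \eqref{Equation: form is equal to dual element on the diagonal} together with $\Tr(T|\psi\rangle\langle\psi|)=\|T^{1/2}\psi\|^2=t(\psi,\psi)$ shows that ``$v$ has $T$-bound $a_v$'' is the form inequality $|v(\psi,\psi)|\le a_v\,t(\psi,\psi)+b_v\|\psi\|^2$ on $Q(T)$; the converse direction uses that every $\gamma\in X^+$ diagonalizes as $\sum_n\lambda_n|\psi_n\rangle\langle\psi_n|$ with $\lambda_n\ge0$ and $\psi_n\in Q(T)$ by Lemma \ref{Lemma: the diagonalization is in D(T^(1/2))}. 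Similarly $w\in\cR_\epsilon$ means this holds with $a$ replaced by any $\eta>0$ and a matching $b_\eta$; and by Remark \ref{Remark: Positive operator representation} and \eqref{Equation: KLMN} the forms $t\dot + w$ and $t\dot + v$ of the (lower semibounded) form sums are closed lower semibounded forms on $Q(T)$, so the ``$T\dot + w$-bound'' and ``$T\dot + v$-bound'' of the statement acquire the analogous meaning.

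The key device for \textbf{Parts 2 and 3} is a one-sided comparison. From $|v(\psi,\psi)|\le a_v\,t(\psi,\psi)+b_v\|\psi\|^2$ one gets $(t\dot + v)(\psi,\psi)=t(\psi,\psi)+v(\psi,\psi)\ge(1-a_v)t(\psi,\psi)-b_v\|\psi\|^2$, hence $t(\psi,\psi)\le(1-a_v)^{-1}[(t\dot + v)(\psi,\psi)+b_v\|\psi\|^2]$, and the analogous estimate with $w$ and an arbitrary $\eta$ in place of $v$ and $a_v$. For \textbf{Part 3}, insert the first comparison into the infinitesimal bound of $w$: given $\delta>0$, choose $\eta:=\delta(1-a_v)$ to obtain $|w(\psi,\psi)|\le\delta(t\dot + v)(\psi,\psi)+b'\|\psi\|^2$; since $\delta>0$ was arbitrary, $w$ has infinitesimal $T\dot + v$-bound, in particular $T\dot + v$-bound $<1$. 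For \textbf{Part 2}, apply the $w$-version of the comparison with $\eta:=(1-a_v)/2$ to bound $t$ by $t\dot + w$, then feed this into the $T$-bound of $v$; the resulting $T\dot + w$-bound of $v$ is at most $a_v/(1-\eta)=2a_v/(1+a_v)<1$. Wherever a merely lower semibounded form must be made nonnegative (to match the relative-boundedness definition) one shifts it by its lower bound $b_w$ or $b_v$, which only changes additive constants.

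Finally, for the operator identity: $T\dot + (w+v)$ exists by KLMN via Part 1; $(T\dot + w)\dot + v$ exists by first forming $T\dot + w$ (legitimate since $w\in\cR$, so $\Tr(T\,\cdot\,)+w$ is lower semibounded by \eqref{Equation: If relatively bounded then lower semibounded}) and then perturbing by $v$, legitimate by Part 2, with form domain $Q(T\dot + w)=Q(T)$; and $(T\dot + v)\dot + w$ is built symmetrically using Part 3. By \eqref{Equation: KLMN}, all three associated closed forms act on the common domain $Q(T)$ as $(\psi,\varphi)\mapsto t(\psi,\varphi)+v(\psi,\varphi)+w(\psi,\varphi)$, so the uniqueness half of Kato's First Representation Theorem forces the three self-adjoint operators to coincide; KLMN further supplies an explicit lower bound (the additive constant $-(b_v+b_\eta)$ from Part 1), giving lower semiboundedness. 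The only genuinely delicate point is the bookkeeping of lower bounds, together with the shifts needed to apply the relative-boundedness definition to the non-positive forms $t\dot + w$ and $t\dot + v$; the rest is routine.
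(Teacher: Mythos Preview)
Your proposal is correct and follows essentially the same strategy as the paper's proof: the same one-sided comparison $t\le(1-a)^{-1}[(t\dot+u)+b]$ is the engine behind Parts 2 and 3, and the operator identity is handled identically via KLMN plus the uniqueness in Kato's First Representation Theorem. The only cosmetic difference is that you phrase the estimates in the form language on $Q(T)$ while the paper stays in the $X^+$ language with $\Tr(T\gamma)$; your remark about shifting the lower-semibounded forms to make the relative-boundedness definition apply is a point the paper leaves implicit.
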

Notice that point 1.~means precisely that $\cR+\cR_\epsilon=\cR$. Furthermore, one can show that $\cR_\epsilon$ consists precisely of those elements $w$ of $\cR$ for which $\lambda w\in \cR$ for all $\lambda\in\R$. 
\begin{proof}
\noindent{\emph{Proof of 1.:}} Let $a_v<1$ denote the $T$-bound of $v$, and pick $\epsilon = (1-a_v)/2$. Then, since $w$ has $T$-bound $\epsilon$, we have
\[|(v+w)(\gamma)|\leq (a_v + (1-a_v)/2)\Tr(T\gamma) + (b_v+b_w)\Tr(\gamma)\]
for all $\gamma\in X^+$. Hence, $v+w$ has $T$-bound $(1+a_v)/2 < 1$. 

\noindent{\emph{Proof of 2.:}} Pick $0<\epsilon<1-a_v$, such that $a_v/(1-\epsilon)<1$. Since $w$ has $T$-bound $\epsilon$, we have \[-b_w\Tr(\gamma) -\epsilon\Tr(T\gamma)\leq w(\gamma) \text{ for all } \gamma\in X^+ \]
\[\Longleftrightarrow\]
\[ (1-\epsilon)\Tr(T\gamma)\leq \Tr((T\dot+w)(\gamma) )+b_w\Tr(\gamma)\text{ for all } \gamma\in X^+. \]
It follows that 
\begin{align*}
    |v(\gamma)| &\leq a_v\Tr(T\gamma) + b_v\Tr(\gamma) \\
    &=\frac{a_v}{1-\epsilon}(1-\epsilon)\Tr(T\gamma) + b_v\Tr(\gamma)\\
    &\leq \frac{a_v}{1-\epsilon}\Tr((T\dot +w)\gamma) + (b_v+b_w)\Tr(\gamma)
\end{align*}
for all $\gamma\in X^+$. Hence, $v$ has $T\dot +w$-bound $a_v/(1-\epsilon)<1$.

\noindent{\emph{Proof of 3.:}} Let $M>0$ and pick $0<\epsilon\leq(1-a_v)/M$. Since $w$ has $T$-bound $\epsilon$, we have 
\[|w(\gamma)|\leq \epsilon \Tr(T\gamma) + b_w\Tr(\gamma)\text{ for all }\gamma\in X^+.\]
Since $v$ has $T$-bound $a_v <1$, we have
\[-b_v\Tr(\gamma) -a_v\Tr(T\gamma)\leq v(\gamma) \text{ for all }\gamma\in X^+\]
\[\Longleftrightarrow\]
\[(1-a_v)\Tr(T\gamma)\leq \Tr((T\dot +v)\gamma) + b_v\Tr(\gamma)\text{ for all }\gamma\in X^+.\]
It follows that
\begin{align*}
    |w(\gamma)|&\leq \epsilon \Tr(T\gamma) + b_w\Tr(\gamma) \\ 
    &\leq\frac{1-a_v}{M}\Tr(T\gamma)+b_w\Tr(\gamma)\\
    &\leq \frac{1}{M}\Tr((T\dot +v)\gamma) + (b_w+b_v)\Tr(\gamma)
\end{align*}
for all $\gamma\in X^+$. Since $M>0$ was arbitrary, it follows that $w$ has infinitesimal $T\dot + v$-bound. 

The KLMN theorem now gives three self-adjoint operators $H_1 = T \dot{+} (w + v)$, $H_2 = (T \dot{+} w) \dot{+}v$, and $H_3 = (T \dot{+} v)\dot{+}w$, which all three coincide, since they coincide with the unique self-adjoint operator associated with the closed form $h = t +v + w$ with domain $Q(T)$. Furthermore, since $v+w\in \cR$, it follows from equation \ref{Equation: If relatively bounded then lower semibounded}, that $h$ is lower semibounded. 
\end{proof}

\begin{example} \label{Example: The coulomb example !}
    Suppose $\cH=L^2(\R^3)$, $T = -\nabla^2$ and let $v$ denote the sesquilinear form \[
    v(\psi,\varphi)=-Z\int dx \frac{\overline{\psi(x)}\varphi(x)}{|x|}. \]
    The operator $T$ has form domain $H^{1}(\R^3)$, and it is well-known that $v$ has infinitesimal $T$-bound, meaning that $v\in \cR_\epsilon$ \cite{katoFundamentalPropertiesHamiltonian1951}. Thus, the Hamiltonian $H_0(v) = T \dot{+} v$ is self-adjoint with form domain $H^{1}(\R^3\times\Z_2)$. It represents the element $\Tr(H_0(v)\, \cdot) = \Tr(T\,\cdot)+ v\in X^*$
\end{example}

We end this section with a technical lemma which will be useful for later proofs. 

\begin{lemma}\label{Lemma: v is lower semicontinuous}
    Let $v\in (X^*)^+$, and let $\{\gamma_n\}_{n\in \N}$ be a sequence in $X^+$ which is bounded in the $X$-norm and satisfies ${\gamma_n\xrightharpoonup{*}\gamma\in X^+}$. Then, 
    \[v(\gamma)\leq \liminf_{n\to\infty}v(\gamma_n)\]
\end{lemma}

\begin{proof}
    Define $U:Q(T)\to \cH$ by $U\psi:=(1+T)^{1/2}\psi$. This is well-defined, since $\|U\psi\|^2=\|\psi\|_T^2<\infty$ for all $\psi\in Q(T)$. Consider the sequence $\{\tilde\gamma_n\}_{n\in\N}$, defined by
    \[\tilde\gamma_n := U\gamma_n U.\]
    $\tilde\gamma_n$ is trace-class for every $n$ since $\Tr(\tilde\gamma_n)=\Tr((1+T)\gamma_n)<\infty$. Since $\{\gamma_n\}_{n\in\N}$ is bounded in the $X$-norm, $\{\tilde\gamma_n\}_{n\in\N}$ is bounded in trace-norm and hence has a weak-$*$ convergent subsequence by the Banach-Alaoglu theorem. Restrict  $\{\tilde\gamma_n\}_{n\in\N}$ to this subsequence and denote the weak-$*$ limit by $\tilde\gamma$. For any $\varphi,\psi\in Q(T)$, we have \[\langle \varphi, U\gamma U \psi \rangle = \lim_{n\to\infty}\langle \varphi, U\gamma_n U \psi \rangle = \lim_{n\to\infty}\langle \varphi, \tilde\gamma_n\psi\rangle = \langle \varphi,\tilde\gamma\psi\rangle,  \] implying that $\tilde\gamma = U\gamma U$. Hence, we have $\tilde\gamma_n\xrightharpoonup{*}\tilde\gamma = U\gamma U$.

    Next, define a sesquilinear form $\tilde v$ on $\cH$ by \[\tilde v(\varphi,\psi) := v(U^{-1}\varphi,U^{-1}\psi). \]
    Then, $\tilde v$ is bounded since \[|\tilde v(\varphi,\psi)|\leq C_v\|U^{-1}\varphi\|_T\|U^{-1}\psi\|_T = C_v\|\varphi\|\|\psi\|. \] Hence, there exists a bounded operator $B\in \cB^+(\cH)$ satisfying 
    \[\langle \varphi,B\psi\rangle = \tilde v(\varphi,\psi)\] for all $\varphi,\psi\in \cH$. We will show that  \begin{equation}\label{Equation: v = tr B}
v(\gamma)=\Tr(B\tilde\gamma) \text{ and }v(\gamma_n)=\Tr(B\tilde \gamma_n) \text{ for all } n\in\N.
   \end{equation} Assuming that \eqref{Equation: v = tr B} holds, we can conclude by Lemma \ref{Lemma: Trace of unbounded operator is lower semicontinuous} that
    \[v(\gamma)=\Tr(B\tilde\gamma )\leq \liminf_{n\to\infty}\Tr(B\tilde\gamma_n) = \liminf_{n\to\infty}v(\gamma_n).\]
It remains to prove \eqref{Equation: v = tr B}. Let $\gamma\in X^+$ be arbitrary, and let $\gamma = \sum_k\lambda_k|\psi_k\rangle\langle \psi_k|$ be its spectral decomposition. By Lemma \ref{Lemma: the diagonalization is in D(T^(1/2))}, we know that $\psi_k\in Q(T)$ for all $k$. As a consequence, the sum $\sum_k\lambda_k|\psi_k\rangle\langle \psi_k|$ converges to $\gamma$ in the $X$-norm: For any partial sum $\gamma_m:= \sum_{k=1}^m\lambda_k|\psi_k\rangle \langle \psi_k|$, we have
\[\|\gamma-\gamma_m\|_X = \left \|\sum_{k=m+1}^\infty \lambda_k|\psi_k\rangle \langle \psi_k| \right\|_X = \sum_{k=m+1}^\infty \lambda_k\|\psi_k\|_T^2\xrightarrow{m\to \infty} 0.\] Since $\gamma\mapsto v(\gamma)$ is continuous in the $X$-norm, we have
\[v(\gamma)=\sum_k\lambda_kv(\psi_k,\psi_k)=\sum_k\lambda_k\langle U\psi_k,BU\psi_k\rangle = \sum_k \lambda_k\Tr(B U|\psi_k\rangle\langle \psi_k|U).\] 
For any decomposition $\gamma = \gamma^+-\gamma^-$, we have
\[\|U\gamma U\|_1 \leq \|U\gamma^+ U\|_1 + \|U\gamma^- U\|_1 = \Tr((1+T)\gamma^+) + \Tr((1+T)\gamma^-),\]
implying $\|U\gamma U\|_1 \leq \|\gamma\|_X$, which means that $(\gamma\mapsto U\gamma U):  X \to \cS_1(\cH)$ is bounded. It follows that $\Tr(B \; \cdot\;):X \to \R$ is continuous, so \[\sum_k \lambda_k\Tr(B U|\psi_k\rangle\langle \psi_k|U) = \Tr(B (U\gamma U)).\]
We hence have $v(\gamma)=\Tr(B (U\gamma U))$ for all $\gamma\in X^+$, and so equation \eqref{Equation: v = tr B} holds in particular. 
\end{proof}

\section{Surjectivity of the partial trace map}\label{Section: Surjectivity of the partial trace map}

In this section, we introduce the $N$-particle equivalent of $X$, denoted by $X^N$, which is a Banach space containing the \emph{($N$-particle) density matrices}. Furthermore, we prove the surjectivity and boundedness of the partial trace map $\Pi$ introduced in Section \ref{Section: Preliminaries} as a map from $X^N$ to $X$. 

Let $\hat{T}$ denote the second-quantization of $T$ introduced in Section \ref{Section: Preliminaries} and define
\[X^N := \left\{\Gamma^+-\Gamma^- \;\middle|\; \Gamma^\pm\in \cS_1^+(\cH^N), \Tr(\hat T\Gamma^\pm)<+\infty\right\},\]
equipped with the norm 
\[\|\Gamma\|_{X^N} :=   \inf \left\{ \Tr((1+\hat T)\Gamma^+)+\Tr((1+\hat T)\Gamma^-) \;\middle|\; \Gamma = \Gamma^+ - \Gamma^-, \Gamma^\pm \geq 0\right\}.\]

In an analogous manner as for $X$, the space $X^N$ is a real Banach space, and its dual space is given by  
\[(X^N)^* := \left\{\hat \omega: X^N\to \R\;\middle|\; \hat \omega \text{ is linear and } \exists C>0 : |\hat \omega(\Gamma)|\leq C \|\Gamma\|_{X^N}  \text{ for all } \Gamma\in X^N\right\}.\] Analogously as in Proposition \ref{Proposition: one-to-one correspondence}, we have the isometric equivalence \[(X^N)^*\cong \left\{\hat v : Q(\hat T)\times Q(\hat T)\to \C \text{ Hermitian } \;\middle|\;\exists C_{\hat v}: \forall \Psi\in Q(\hat T): |\hat v(\Psi,\Psi)|\leq C_{\hat v}\|\Psi\|^2_{Q(\hat T)}\right\},\] the right hand side of which we also denote by $\cF(\hat T)$. The $N$-particle generalization of $\cR\subset X^*$ is defined as
\begin{equation}
    \cR^N := \{ \hat v \in (X^N)^* \mid \text{$\hat v$ has $\hat{T}$-bound $<1$} \} \label{eq: definition of cRN},
\end{equation}
while the $N$-particle generalization of $\cR_\epsilon\subset \cR$ is defined as 
\[\cR_\epsilon^N := \{\hat v \in (X^N)^* \mid \text{$\hat v$ has infinitesimal $\hat{T}$-bound} \}. \]
We introduce the convex sets
\begin{align}\begin{split}
    \cC &:= \{\gamma\in \cS_1(\cH)| \Tr(\gamma)=N, 0\leq \gamma\leq 1\},\\
    \cD &:= \cC\cap X,\\
    \cC^N &:= \{\gamma\in \cS_1(\cH^N)| \Tr(\gamma)=1, 0\leq \gamma\leq 1\},\\
    \cD^N &:= \cC^N\cap X^N.\label{eq: cD definition}
\end{split}\end{align}
The sets $\cC^N$ and $\cC$ contain, respectively, all mixed $N$-particle states and all corresponding reduced density matrices. By Lemma~\ref{Lemma: the diagonalization is in D(T^(1/2))}, the set $\cD^N$ consists of precisely finite-kinetic energy mixed $N$-particle states. The extremal points of this set are pure-state finite-kinetic energy density matrices, i.e., if $\Psi\in Q(\hat T)$, then $|\Psi\rangle\langle \Psi|\in \cD^N$. Finally, we regard $\cD$ as the set of reduced density matrices, which consists precisely of finite-kinetic energy mixed one-particle states.
\begin{theorem}\label{Theorem: Tr_{-1} is surjective}
The partial trace map $\Pi:\cS_1^\mathrm{sa}(\cH^N)\to \cS_1^\mathrm{sa}(\cH)$ is surjective, positive, satisfies $\|\Pi(\Gamma)\|_1 \leq N\|\Gamma\|_1$, and the following: \begin{enumerate}
    \item $\Pi$ maps $\cC^N$ surjectively onto $\cC$,
    \item $\Pi$ maps $\cD^N$ surjectively onto $\cD$,
    \item $\|\Pi(\Gamma)\|_{X}\leq N\|\Gamma\|_{X^N}$ for all $\Gamma\in X^N$. 
    \item $\Pi(X^N)$ is dense in $X$. 
\end{enumerate}
\end{theorem}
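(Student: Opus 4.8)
The plan is to dispose of the structural properties and the norm estimate (3) first, then prove surjectivity onto $\cS_1^{\mathrm{sa}}(\cH)$ together with the density statement (4), and finally the two $N$-representability statements (1) and (2). Positivity of $\Pi$ and the identity $\Tr(\Pi\Gamma)=N\Tr(\Gamma)$ for $\Gamma\geq 0$ are immediate from $\Pi=N\Tr_{N-1}^N$ (equivalently from Lemma \ref{Lemma: second quantization expectation value} applied with $V=\Id$ and with $V=|\varphi\rangle\langle\varphi|$, using $\hat\Id=N\Id$); Jordan-decomposing $\Gamma=\Gamma^+-\Gamma^-$ and using the triangle inequality gives $\|\Pi\Gamma\|_1\le\|\Pi\Gamma^+\|_1+\|\Pi\Gamma^-\|_1=N\|\Gamma\|_1$, so $\Pi$ is bounded $\cS_1^{\mathrm{sa}}(\cH^N)\to\cS_1^{\mathrm{sa}}(\cH)$. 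For (3), pick an optimal decomposition $\Gamma=\Gamma^+-\Gamma^-$ with $\|\Gamma\|_{X^N}=\Tr((1+\hat T)\Gamma^+)+\Tr((1+\hat T)\Gamma^-)$ (the $X^N$-analogue of Lemma \ref{Corollary: Unique decomposition}). Then $\Pi\Gamma^\pm\geq 0$, and Lemma \ref{Lemma: second quantization expectation value} together with monotonicity of second quantization (if $V'\in\cK^+(\cH)$ and $V'<T$ then $\widehat{V'}\le\hat T$) and \eqref{Equation: T-trace is supremum over compacts} give $\Tr(T\Pi\Gamma^\pm)=\sup_{V'}\Tr(\widehat{V'}\Gamma^\pm)\le\Tr(\hat T\Gamma^\pm)<\infty$, so $\Pi\Gamma^\pm\in X^+$. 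By Lemma \ref{Lemma: Trace inequality}, $\|\Pi\Gamma^\pm\|_X=N\Tr(\Gamma^\pm)+\Tr(T\Pi\Gamma^\pm)\le N\,\Tr((1+\hat T)\Gamma^\pm)$, and the triangle inequality for $\|\cdot\|_X$ yields (3); in particular $\Pi(X^N)\subseteq X$.

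For surjectivity onto $\cS_1^{\mathrm{sa}}(\cH)$, every $\gamma\in\cS_1^{\mathrm{sa}}(\cH)$ has a spectral expansion $\sum_i\lambda_i|\varphi_i\rangle\langle\varphi_i|$ with $\sum_i|\lambda_i|<\infty$, so it suffices to build, for each unit $\varphi\in\cH$, a self-adjoint $\Gamma_\varphi\in\cS_1(\cH^N)$ with $\Pi\Gamma_\varphi=|\varphi\rangle\langle\varphi|$ and $\|\Gamma_\varphi\|_1$ bounded independently of $\varphi$, and then sum. Extend $\varphi=e_0$ to an orthonormal system $e_0,e_1,\dots,e_N$, put $\Psi_S:=\bigwedge_{i\in S}e_i$ for size-$N$ subsets $S\subseteq\{0,\dots,N\}$, note the direct computation $\Pi|\Psi_S\rangle\langle\Psi_S|=\sum_{i\in S}|e_i\rangle\langle e_i|$, and use the elementary set identity $N\,\mathbf{1}_{\{0\}}=\sum_{S\ni 0}\mathbf{1}_S-(N-1)\mathbf{1}_{\{1,\dots,N\}}$ (the sum over the $N$ size-$N$ subsets of $\{0,\dots,N\}$ containing $0$) to get that
\[\Gamma_\varphi:=\tfrac1N\Big(\textstyle\sum_{S\ni 0}|\Psi_S\rangle\langle\Psi_S|-(N-1)|\Psi_{\{1,\dots,N\}}\rangle\langle\Psi_{\{1,\dots,N\}}|\Big)\]
satisfies $\Pi\Gamma_\varphi=|\varphi\rangle\langle\varphi|$ with $\|\Gamma_\varphi\|_1\le(2N-1)/N<2$. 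If moreover $\varphi\in Q(T)$, choosing $e_1,\dots,e_N\in Q(T)$ makes $\Gamma_\varphi\in X^N$; hence $\R Q(T)\subseteq\Pi(X^N)$, and since $\R Q(T)$ is dense in $X$ by Corollary \ref{Corollary: RD is dense in X} and $\Pi(X^N)\subseteq X$ by (3), this proves (4).

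For (1) and (2), the inclusions $\Pi(\cC^N)\subseteq\cC$ and $\Pi(\cD^N)\subseteq\cD$ are the easy direction: $\Pi\Gamma\geq 0$ with $\Tr\Pi\Gamma=N$, the bound $\Pi\Gamma\le 1$ is the Pauli principle (for unit $\varphi$, $\langle\varphi,\Pi\Gamma\varphi\rangle=\Tr(\widehat{|\varphi\rangle\langle\varphi|}\Gamma)\le\Tr\Gamma=1$, as the mode-number operator $\widehat{|\varphi\rangle\langle\varphi|}$ is an orthogonal projection), and $\Pi\Gamma\in X$ follows from (3). For surjectivity, diagonalize $\gamma=\sum_i\lambda_i|\varphi_i\rangle\langle\varphi_i|$ with $\lambda_i\in[0,1]$, $\sum_i\lambda_i=N$, and express the occupation sequence as a countable convex combination of indicators of $N$-element index sets, $\lambda_i=\sum_k c_k\mathbf{1}_{S_k}(i)$, $c_k>0$, $\sum_k c_k=1$ --- the classical $N$-representability of occupation numbers, provable by a wrap-around/scheduling argument (in finite dimensions: the vertices of the hypersimplex $\{x\in[0,1]^M:\sum_i x_i=N\}$ are precisely the $0$–$1$ vectors with exactly $N$ ones, so apply Carathéodory). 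Then $\Gamma:=\sum_k c_k|\Psi_{S_k}\rangle\langle\Psi_{S_k}|$ with $\Psi_{S_k}=\bigwedge_{i\in S_k}\varphi_i$ satisfies $\Pi\Gamma=\sum_i\big(\sum_{k:i\in S_k}c_k\big)|\varphi_i\rangle\langle\varphi_i|=\gamma$, is positive with trace $\sum_k c_k=1$, and is $\le 1$ as a $\sigma$-convex combination of rank-one projections, so $\Gamma\in\cC^N$: this is (1). For (2), if $\gamma\in\cD$ then Lemma \ref{Lemma: the diagonalization is in D(T^(1/2))} gives $\varphi_i\in Q(T)$ whenever $\lambda_i>0$, so every $\Psi_{S_k}\in Q(\hat T)$ and $\|\Gamma\|_{X^N}\le\sum_k c_k\big(1+\sum_{i\in S_k}\|T^{1/2}\varphi_i\|^2\big)=1+\Tr(T\gamma)<\infty$, whence $\Gamma\in\cD^N$.

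The genuinely substantive point is the surjective direction of (1)--(2): the $N$-representability of the occupation numbers, and --- critical in this infinite-dimensional, unbounded-$T$ setting --- the convergence of the Slater expansion $\Gamma=\sum_k c_k|\Psi_{S_k}\rangle\langle\Psi_{S_k}|$ in the correct topology, namely trace norm for $\cC^N$ and the $X^N$-norm for $\cD^N$. This forces the combinatorial decomposition to reproduce each $\lambda_i$ \emph{exactly} with only countably many terms, and it is why one keeps the bookkeeping identity $\sum_k c_k\Tr((1+\hat T)|\Psi_{S_k}\rangle\langle\Psi_{S_k}|)=1+\Tr(T\gamma)$; the wrap-around construction is what supplies such a decomposition. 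Everything else --- the algebraic identities for $\Pi$, the Pauli bound, and the norm estimates --- is routine.
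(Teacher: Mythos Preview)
Your proof is correct, and several pieces are organized more cleanly than the paper's. The genuine departure is in the surjectivity construction for $\cS_1^{\mathrm{sa}}(\cH)$: the paper proves an auxiliary combinatorial lemma expressing $N|\psi\rangle\langle\psi|$ as a signed sum of $2N-1$ rank-$N$ projections built from $N^2-N+1$ auxiliary basis vectors, and then invokes Coleman's uniqueness of Slater preimages for rank-$N$ projections. Your construction is shorter --- only $N+1$ auxiliary vectors, $N+1$ Slater determinants, and the one-line set identity $N\mathbf 1_{\{0\}}=\sum_{S\ni 0}\mathbf 1_S-(N-1)\mathbf 1_{\{1,\dots,N\}}$ --- yet lands on the same bound $\|\Gamma_\varphi\|_1\le(2N-1)/N$. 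For~(1) the paper simply cites Coleman, while you sketch the wrap-around decomposition of occupation numbers; your outline is accurate and you correctly flag the countable, $X^N$-norm--convergent decomposition as the one nontrivial analytic point. For~(2) the paper takes a slicker route: it observes that for \emph{any} $\Gamma\in\cC^N$ with $\Pi\Gamma=\gamma$ one has the identity $\|\Gamma\|_{X^N}=\Tr\Gamma+\Tr(\hat T\Gamma)=\tfrac1N\Tr\gamma+\Tr(T\gamma)$, so no properties of the particular preimage are used; your argument instead verifies the bound directly on the explicit Slater mixture via $\sum_k c_k\sum_{i\in S_k}\|T^{1/2}\varphi_i\|^2=\Tr(T\gamma)$, which is just as valid but more tied to the construction. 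Positivity, the trace-norm bound, and items~(3)--(4) follow the same lines as the paper.
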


\begin{proof}
Let $\cK:=\mathcal H^{N-1}$. We can identify $\mathcal H^N$ with a closed subspace of $\mathcal H\otimes \cK$. The partial trace over $\cK$ is the unique linear map
\[
\Tr_\cK:\; \mathcal S_1(\mathcal H\otimes \cK)\to\mathcal S_1(\mathcal H)
\]
characterized by
\[
\Tr\big(V\,\Tr_\cK(\Gamma)\big) = \Tr\big((V\otimes \mathbbm{1}_\cK)\,\Gamma\big),
\qquad
\forall\, V\in\mathcal B(\mathcal H),\; \Gamma\in\mathcal S_1(\mathcal H\otimes \cK).
\]
Restricting this map to $\mathcal S_1^\mathrm{sa}(\mathcal H^N)\subset \mathcal S_1(\mathcal H\otimes \cK)$ gives the operator $\Pi := N\Tr_{\cK}\restriction_{\cS^{\mathrm{sa}}_1(\cH^N)}$.

We first prove that $\Pi$ is positive. For this, it is enough to show that $\Tr_{\cK} : \cS_1(\cH \otimes \cK) \to \cS_1(\cH)$ is positive. Suppose $\Gamma\in \cS_1^+(\cH \otimes \cK)$. Then for every $\varphi\in\mathcal H$,
\[
\langle \varphi,\,\Tr_\cK(\Gamma)\,\varphi\rangle
= \Tr\!\big((|\varphi\rangle\langle\varphi|\otimes \mathbbm{1}_\cK)\,\Gamma\big)\ge0,
\]
because $|\phi\rangle\langle\phi|\otimes \mathbbm{1}_\cK\ge0$. Hence $\Tr_\cK(\Gamma)\ge0$. Thus $\Pi$ is positive.

We next show $\|\Pi(\Gamma)\|_1 \leq N\|\Gamma\|_1$.
By the duality between $\mathcal S_1$ and $\cB(\cH)$,
\begin{align*}
\|\Tr_\cK(\Gamma)\|_1
&= \sup_{\|B\|\le1}
\big| \Tr(B\,\Tr_\cK(\Gamma))\big| \\
&= \sup_{\|B\|\le1} \big|\Tr\big((B\otimes \mathbbm{1}_\cK)\Gamma\big)\big|,
\end{align*}
where the supremum is taken over $B \in \cB(\cH)$ in both cases.
Since $\|B\otimes \mathbbm{1}_\cK\|=\|B\|\le1$, each term is bounded by $\|\Gamma\|_1$. Taking the supremum yields
\[
\|\Tr_\cK(\Gamma)\|_1 \le \|\Gamma\|_1.
\]
Hence, $\|\Pi(\Gamma)\|_1\leq N\|\Gamma\|_1$. 

In order to prove surjectivity, we need the two following technical lemmas: 
\begin{lemma}\label{lemma: affine combinations in D}
For every $\psi \in \cH$ with $\|\psi\|=1$, there exist rank-$N$ projections $q_0,q_1,\cdots q_{N-1},r_1,\cdots, r_{N-1}$ such that
\[
    N |\psi\rangle\langle \psi|  = q_0 + \sum_{k=1}^{N-1} q_k-r_k.
\]
\end{lemma}
\begin{proof}
Let $\psi_1 := \psi$, and let $\{\psi_n\}_{n\in\N}$ be a completion of $\psi$ to an orthonormal basis of $\cH$. Denote $P[n]:= |\psi_n\rangle \langle \psi_n|$, and $P[N,M] = P[N]+P[N+1]+\cdots + P[M-1]  +P[M]$ whenever $M>N$. Notice that $\Tr(P[N,M])= M-N +1. $
    Next, pick
    \begin{align*}
        q_0 &:= P[1,N], \\
        q_k & := P[1] + P[k(N-1)+2,(k+1)N -k] \text{, and } \\
        r_k &:= P[(k-1)N+2,kN+1]
    \end{align*}
for all $k=1,2,3,\cdots N-1$. We see that $q_0, q_k$ and $r_k$ are rank-$N$ projections for every $k$. Notice that 
\begin{equation}\label{Equation: q equals sum of Ps}
    q_0 + \sum_{k=1}^{N-1} q_k = N P[1] + P[2, N^2-N+1],
\end{equation}
and 
\begin{equation}\label{Equation: r equals sum of Ps}
    \sum_{k=1}^{N-1}r_k = P[2,N^2-N+1].
\end{equation}
Hence, \begin{equation}\label{Equation: Projection telescope sum}
    N |\psi\rangle\langle\psi| = NP[1]= q_0 + \sum_{k=1}^{N-1} q_k-r_k.
\end{equation}

\begin{figure}[h]
    \centering
    \includegraphics[width=0.4\textwidth]{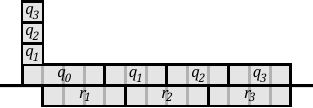}
    \caption{This figure illustrates the proof of Lemma \ref{lemma: affine combinations in D} for $N=4$. Each square represents a projection, starting from $P[1]$ at the leftmost side to $P[4^2 - 4 + 1]=P[13]$ on the rightmost side. The squares above and below the middle line represent positive and negative summands in \eqref{Equation: Projection telescope sum} respectively. We see that the sum mostly cancels and that we are left with $4P[1]$.}
    \label{Figure: Proof of sum of projections}
\end{figure}

\end{proof}
\begin{lemma}\label{Lemma: Unique slater determinant preimage}
    Let $\psi_1,\psi_2,...,\psi_N\in \cH$ be orthonormal. Then the rank-$N$ projection
    \[\gamma := |\psi_1\rangle\langle \psi_1|+\cdots +|\psi_N\rangle\langle \psi_N|\in \cC\] has a unique preimage in $\cC^N$ under $\Pi$, which is given by $$\Gamma = |\psi_1\wedge\cdots\wedge\psi_N\rangle\langle\psi_1\wedge\cdots\wedge\psi_N|.$$
\end{lemma}
\begin{proof} This was shown by Coleman \cite[Section 9]{Coleman1963}.
\end{proof}

We are ready to prove that $\Pi:\cS_1^\mathrm{sa}(\cH^N)\to \cS_1^\mathrm{sa}(\cH)$ is surjective. Let $\gamma\in \cS_1^+(\cH)$, and let 
\[\gamma = \sum_{n=1}^\infty \lambda_n|\psi_n\rangle \langle \psi_n|\] be its spectral decomposition. Let 
\[N|\psi_n\rangle\langle \psi_n| = q_0^n + \sum_{k=1}^{N-1}q_k^n -r_k^n\]
denote the rank-$N$ projections given in Lemma \ref{lemma: affine combinations in D}, and let $Q_0^n,Q_1^n,\cdots Q_{N-1}^n, R_1^n,\cdots R_{N-1}^n$ denote the corresponding Slater determinants given in Lemma \ref{Lemma: Unique slater determinant preimage}. Then, define \begin{equation}\label{Equation: Gamma preimage in terms of slater determinants}
    \Gamma := \frac{1}{N} \sum_{n=1}^\infty \lambda_n\left(Q_0^n + \sum_{k=1}^{N-1}Q_k^n -R_k^n\right).
\end{equation}
This indeed defines an element of $\cS_1(\cH^N)$ since the sum is absolutely convergent in the trace-norm: 
\[\|\Gamma\|_1\leq \frac{1}{N}\sum_{n=1}^\infty \lambda_n\left(\|Q_0^n\|_1 + \sum_{k=1}^{N-1}\|Q_k^n\|_1 + \|R_k^n\|_1\right) = \frac{2N-1}{N}\sum_{n=1}^\infty\lambda_n =\frac{2N-1}{N}\|\gamma\|_1. \] 
Furthermore, we have
\[\Pi(\Gamma) = \frac{1}{N}\sum_{n=1}^\infty \lambda_n\left(q_0^n + \sum_{k=1}^{N-1}q_k^n -r_k^n\right) = \sum_{n=1}^\infty \lambda_n|\psi_n\rangle\langle \psi_n| = \gamma. \]
Next, suppose that $\gamma\in \cS_1^\mathrm{sa}(\cH)$ is not necessarily positive. Then, for any decomposition $\gamma = \gamma^+ -\gamma^- $ with $\gamma^\pm\in \cS_1^+(\cH)$, we can find preimages $\Gamma^\pm\in \cS_1^\mathrm{sa}(\cH^N)$ such that $\Pi(\Gamma^+-\Gamma^-)=\gamma^+-\gamma^- = \gamma$. Hence, $\Pi$ is surjective. We now proceed with proving Points 1--4 of the theorem:

\noindent{\emph{Proof of 1.}} was proven by Coleman in 1963 \cite{Coleman1963}. 

\noindent{\emph{Proof of 2.}} Let $\gamma\in \cD$.  Since $\cD\subset \cC$, we can find some $\Gamma\in \cC^N$ by point (1) such that $\Pi(\Gamma)=\gamma$. It follows from Lemmas \ref{Lemma: second quantization expectation value} and \ref{Lemma: Trace inequality} and by positivity of $\gamma$ and $\Gamma$ that
\begin{equation}\label{Equation: Positive norm equality}
    \|\Gamma\|_{X^N} = \Tr(\Gamma) + \Tr\big(\hat T\Gamma\big)=\frac{1}{N}\Tr(\gamma) + \Tr(T\gamma)\leq \|\gamma\|_X,
\end{equation} for any $\Gamma\in\Pi^{-1}(\gamma)\cap \cC^N$. This proves that $\Gamma\in \cD^N$, and hence that $\Pi$ maps $\cD^N$ surjectively onto $ \cD$. 

\noindent{\emph{Proof of 3.}} Let $\Gamma\in X^N$, and $\gamma:=\Pi(\Gamma)$, and let $\Gamma= \Gamma^+ -\Gamma^-$ be any decomposition with $\Gamma^\pm\in X^+$. Denote $\gamma^\pm := \Pi(\Gamma^\pm)$. Then, 
\[\|\gamma\|_X\leq \Tr\big((1+T)\gamma^+\big)+\Tr\big((1+T)\gamma^-\big) \leq N\Tr\big((1+\hat T)\Gamma^+\big)+N\Tr\big((1+\hat T)\Gamma^-\big). \]
Taking the infimum over decompositions, we get \begin{equation}\label{Equation: gamma norm is smaller than Gamma norm}
    \|\gamma\|_X\leq N\|\Gamma\|_{X^N},
\end{equation} 
implying that $\gamma\in X$, and also point 3. 

\noindent{\emph{Proof of 4.}} It is easy to show that any finite-rank operator in $X$ has a preimage in $X^N$ under $\Pi$. Indeed, if $\gamma\in X^+$ has finite rank, then using the construction \eqref{Equation: Gamma preimage in terms of slater determinants}, we get 
\[\|\Gamma\|_{X^N}\leq \frac{1}{N}\sum_{n=1}^r \lambda_n\left(\|Q_0^n\|_X + \sum_{k=1}^{N-1}\|Q_k^n\|_X + \|R_k^n\|_X\right).\] We can pick a basis for this construction such that each of the terms is finite, and hence $\|\Gamma\|_{X^N}<+\infty$ since the sum is finite. The fact that $\Pi(\Gamma)$ is dense then follows from the fact that finite-rank operators are dene in $X$.
\end{proof}

\begin{remark}\label{Remark: second quantization map for X*}
    The adjoint of $\Pi$ when regarded as a map $\Pi|_{X^N}:X^N\to X$ is a map $(\Pi|_{X^N})^* : X^* \to (X^N)^*$ which is injective. This follows from $\Pi|_{X^N}$ being bounded and $\Pi(X^N)$ being dense in $X$. When $\cB^\mathrm{sa}(\cH)$ is regarded as a subspace of $X^*$, $(\Pi|_{X^N})^*$ restricts to the second quantization map $\hat{(\cdot)}$ defined in \eqref{Equation: Second quantization}. We therefore keep the terminology \emph{second quantization map} for $(\Pi|_{X^N})^*$ and the notation $(\Pi|_{X^N})^*(v)=\hat v$ for $v\in X^*$. The second quantization map satisfies $\|\hat v\|_{(X^N)^*} \leq N\|v\|_{X^*}$ for all $v\in X^*$. 
\end{remark}

The significance of Theorem \ref{Theorem: Tr_{-1} is surjective} is that every one-particle reduced density matrix $\gamma$ represents a class of $N$-particle density matrices $\Gamma$ through the partial trace map. Furthermore, since $\Pi$ is bounded, the class $\Pi^{-1}(\gamma)$ is closed in $X^N$. 

\begin{remark}\label{Remark: relatively bounded sets}
As a consequence of Remark \ref{Remark: second quantization map for X*}, we have $v\in \cR \Leftrightarrow \hat v\in \cR^N$. Similarly, we have $v\in \cR_\epsilon\Leftrightarrow \hat v\in \cR_\epsilon^N$.
\end{remark}

\section{Reduced Density Matrix Functional Theory}\label{Section: Reduced Density Matrix Functional Theory}

In this section, we introduce Reduced Density Functional Theory (RDMFT) using the framework introduced in sections \ref{Section: Reduced Density Matrices}, \ref{Section: Single-particle Hamiltonians} and \ref{Section: Surjectivity of the partial trace map}. We prove that the universal density matrix functional is lower semicontinuous in this setting. 

For a system of $N$ interacting fermions in an external potential $V$, the total Hamiltonian is given by 
\begin{equation}
    \begin{split}
    \hat{H} &= \hat{H}_{\hat W} + \hat{V} \\ &=\hat{T} + \hat{W} + \hat{V},
    \end{split}
\end{equation}
where $\hat{T}$ is a fixed one-body operator, typically kinetic energy, and $\hat{w}$ is the inter-particle interaction potential, a two-or-higher body operator. For $\hat T$ and $\hat V$, the carets indicate that the operators are second quantizations of single-particle operators $T$ and $V$. Presently,  $\hat{H}$ is a \emph{formal} operator; the domain must be defined and self-adjointness must be proven. 
\begin{example}
    We exemplify this with the following Hamiltonian of a molecule in the clamped-nuclei approximation \cite{katoFundamentalPropertiesHamiltonian1951}: $\cH = L^2(\R^3 \times \Z_2)$, with
\begin{equation}\label{Equation: Hamiltonian example}
    \hat T = - \sum_i\nabla^2_i, \quad  \hat V =  -\sum_a \sum_i \frac{Z_a}{|R_a-x_i|}, \quad \text{and} \quad  \hat W = \sum_{i<j}\frac{1}{|x_i-x_j|},
\end{equation}
with the Hamiltonian $\hat{H}$ acting on sufficiently regular elements of the $N$-electron Hilbert space $\cH^N = \bigwedge^N L^2(\R^3\times \Z_2)$. In a famous paper~\cite{katoFundamentalPropertiesHamiltonian1951}, T.~Kato proved that $\hat{H}$ has a self-adjoint realization with domain $H^2(\R^{3N}\times \Z_2^N) \cap \bigwedge^N L^2(\R^3\times \Z_2)$ and form-domain
\[Q(\hat H) = H^1(\mathbb{R}^{3N}\times \Z_2^N) \cap \bigwedge^N L^2(\mathbb{R}^3\times \Z_2).\]
Key here was that $\hat{w}$ has infinitesimal $\hat{T}$-bound.

As for reduced density matrices, any $\gamma \in \cD \subset X \subset \cS_1^\text{sa}(L^2(\R^3\times\Z_2))$ is now a \emph{spin-density matrix} with finite kinetic energy, and elements $v \in \cR \subset X^*$ are intuitively spin dependent potentials that are not too singular, such that they have finite interaction with any $\gamma\in X$. The nuclear potential $V$ given above is an example; it has $T$-bound $<1$ which guarantees $\hat{H}$ to have finite ground-state energy by Lemma \ref{Lemma: R and R inf} and Remark \ref{Remark: relatively bounded sets}.
\end{example}

In RDMFT one studies the ground-state energy as function of $v$, which then is a parameter of the problem. We appeal to the KLMN theorem to define a class of parameter-dependent Hamiltonians $\hat{H}(v) \in (X^*)^N$.
For the remainder of this article, let $\hat w\in(X^*)^+$, and suppose that $\hat w$ is infinitesimally $T$-bounded. By Remark \ref{Remark: Positive operator representation}, there is a unique self-adjoint operator $\hat H_{\hat{w}}\geq 0$ defined by
$$
    \hat{H}_{\hat w} := \hat{T} \dot{+} \hat{w},
$$
the form sum of $\hat T$ and $\hat w$, which satisfies $Q(\hat H_{\hat{w}})=Q(\hat{T})$ and \[\Tr(\hat H_{\hat w}\Gamma)=\Tr(\hat T\Gamma)+\hat w(\Gamma)\] for all $\Gamma\in X^N$ (see \cite[Proposition 10.22]{Schmudgen2012}). By Lemma \ref{Lemma: R and R inf}, any $v\in \cR$, has $\hat T \dot + \hat w$-bound $<1$, which implies that the Hamiltonian
\[\hat H(v):= \hat H_{\hat w} \dot +\hat v\]
is well-defined as a lower-semibounded self-adjoint operator with form domain $Q(\hat T)$. 

We define the set of finite-kinetic energy wave functions as
\[\cW^N := \left\{\Psi \in \cH^N \middle| \|\Psi\|= 1, \|\hat T^{1/2}\Psi\|^2<+\infty\right\}. \]
The ground state energy map $E_\RDM: X^* \to \R \cup \{-\infty\}$ is defined by 
\[E_\RDM(v) :=   
    \inf_{\Psi \in \cW^N} \bigl(\hat h_{\hat w}( \Psi,\Psi) + \hat{v}(\Psi,\Psi) \bigr),
\]
where $\hat h_{\hat w}$ is the sesquilinear form associated to $\hat H_{\hat w}$. For $v \in \cR$, it follows by the variational characterization of the bottom of the spectrum of self-adjoint operators that $E_\RDM(v)=\inf\operatorname{spec}(\hat{H}(v))$. If $v \in X^*\setminus \cR$, the ground-state energy $E_\RDM(v)$ is still defined in the above variational sense, even if there is no self adjoint operator representing the form sum $\hat H_{\hat w} \dot{+} \hat v$. An alternative definition for $v\in \cR$ can be given as 
\[E_\RDM'(v) = 
    \inf_{\Gamma \in \cD^N}\Tr(\hat{H}(v)\Gamma) .
\]
$E_\RDM'$ and $E_\RDM$ coincide in $\cR$:
\begin{lemma}\label{Lemma: E=E'}
    $E_\RDM(v)=E_\RDM'(v)$ for all $v \in \cR$. 
\end{lemma}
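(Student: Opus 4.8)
The plan is to prove the two inequalities $E_\RDM'(v)\le E_\RDM(v)$ and $E_\RDM(v)\le E_\RDM'(v)$ separately and uniformly in $v\in X^*$, reading $\Tr(\hat H(v)\Gamma)$ throughout as the bounded functional $\Tr(\hat H_{\hat W}\Gamma)+\hat v(\Gamma)$ on $X^N$. The first inequality is immediate: for every $\Psi\in\cW^N$ the rank-one projection $|\Psi\rangle\langle\Psi|$ is a finite-kinetic-energy pure state, hence an extremal element of $\cD^N$ (as noted before Theorem~\ref{Theorem: Tr_{-1} is surjective}), and $\Tr(\hat H_{\hat W}|\Psi\rangle\langle\Psi|)=\|\hat T^{1/2}\Psi\|^2+\hat w(\Psi,\Psi)=\hat h_{\hat W}(\Psi,\Psi)$ together with $\hat v(|\Psi\rangle\langle\Psi|)=\hat v(\Psi,\Psi)$ (the form/functional identification behind Proposition~\ref{Proposition: one-to-one correspondence}) gives $\Tr(\hat H(v)|\Psi\rangle\langle\Psi|)=\hat h_{\hat W}(\Psi,\Psi)+\hat v(\Psi,\Psi)$. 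Since the infimum defining $E_\RDM'(v)$ runs over the larger set $\cD^N\supseteq\{|\Psi\rangle\langle\Psi|\mid\Psi\in\cW^N\}$, one obtains $E_\RDM'(v)\le E_\RDM(v)$.

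For the reverse inequality I would first dispose of the case $E_\RDM(v)=-\infty$, where the bound just proved forces $E_\RDM'(v)=-\infty$ as well. Assuming $E_\RDM(v)>-\infty$, fix $\Gamma\in\cD^N$ and use its spectral decomposition $\Gamma=\sum_n p_n|\Psi_n\rangle\langle\Psi_n|$ with $p_n\ge0$, $\sum_n p_n=1$ and $\{\Psi_n\}$ orthonormal. Because $\Gamma\in X^N$ is positive, Lemma~\ref{Lemma: the diagonalization is in D(T^(1/2))} applied to $\hat T$ on $\cH^N$ gives $\Psi_n\in Q(\hat T)$ for every $n$ with $p_n>0$, so $\Psi_n\in\cW^N$. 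The crux is then the termwise identity
\[\Tr(\hat H(v)\Gamma)=\Tr(\hat T\Gamma)+\Tr(\hat W\Gamma)+\hat v(\Gamma)=\sum_n p_n\bigl(\|\hat T^{1/2}\Psi_n\|^2+\hat w(\Psi_n,\Psi_n)+\hat v(\Psi_n,\Psi_n)\bigr).\]
Here $\Tr(\hat T\Gamma)=\sum_n p_n\|\hat T^{1/2}\Psi_n\|^2$ and $\Tr(\hat W\Gamma)=\sum_n p_n\hat w(\Psi_n,\Psi_n)$ are series of nonnegative terms, finite since $\Gamma\in\cD^N$, by the definition of the trace of a positive possibly unbounded operator and \eqref{Equation: T-trace is supremum over compacts}; while $\hat v(\Gamma)=\sum_n p_n\hat v(\Psi_n,\Psi_n)$ is an absolutely convergent series by the $N$-particle analogue of \eqref{Equation: sum of forms}, since $\sum_n p_n\|\Psi_n\|_{Q(\hat T)}^2=\Tr((1+\hat T)\Gamma)<+\infty$ and $|\hat v(\Psi_n,\Psi_n)|\le C_{\hat v}\|\Psi_n\|_{Q(\hat T)}^2$. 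Recognising $\|\hat T^{1/2}\Psi_n\|^2+\hat w(\Psi_n,\Psi_n)=\hat h_{\hat W}(\Psi_n,\Psi_n)$, each summand equals $\hat h_{\hat W}(\Psi_n,\Psi_n)+\hat v(\Psi_n,\Psi_n)\ge E_\RDM(v)$; since the $p_n$ sum to $1$, the convex combination satisfies $\Tr(\hat H(v)\Gamma)\ge E_\RDM(v)$, and taking the infimum over $\Gamma\in\cD^N$ yields $E_\RDM'(v)\ge E_\RDM(v)$.

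The step I expect to be the main obstacle is exactly this termwise decomposition of $\Tr(\hat H(v)\Gamma)$. For $v\notin\cR$ there need be no self-adjoint operator representing the form sum $\hat H_{\hat W}\dot{+}\hat v$, so one cannot invoke spectral calculus for $\hat H(v)$ directly; $\Tr(\hat H(v)\,\cdot\,)$ is available only as the functional $\Tr(\hat H_{\hat W}\,\cdot\,)+\hat v\in(X^N)^*$. The remedy is to split into the three pieces $\hat T$, $\hat W$, $\hat v$ and expand each separately over the eigenbasis of $\Gamma$ — the first two by the monotone convergence built into the definition of the $\hat T$- and $\hat W$-traces, the third by linearity and boundedness of $\hat v$ on $X^N$ — and only afterwards recombine the three convergent series (two with nonnegative terms, one absolutely convergent) into the single series above. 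With that in hand the remainder is the standard fact that a $p_n$-weighted average over pure states cannot dip below the infimum over those pure states.
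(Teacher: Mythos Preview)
Your proof is correct and follows essentially the same approach as the paper: both directions are proved by observing that pure states $|\Psi\rangle\langle\Psi|$ with $\Psi\in\cW^N$ lie in $\cD^N$, and that any $\Gamma\in\cD^N$ spectrally decomposes as a convex combination of such pure states (via Lemma~\ref{Lemma: the diagonalization is in D(T^(1/2))}), so its energy is a convex combination of pure-state energies and hence bounded below by $E_\RDM(v)$. Your treatment is in fact more careful than the paper's, which simply writes $\Tr(\hat H(v)\Gamma)=\sum\Gamma_i\Tr(\hat H(v)|\Psi_i\rangle\langle\Psi_i|)$ citing \eqref{Equation: Split the T} without explicitly justifying the termwise expansion for the $\hat v$-piece or disposing of the $E_\RDM(v)=-\infty$ case; you correctly identify and address both of these points.
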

\begin{proof}
     Indeed, let $v\in X^*$ and $\Psi\in Q(\hat T)$. Since $|\Psi\rangle \langle \Psi|\in \cD^N$, and $\Tr(\hat H(v) |\Psi\rangle\langle \Psi|)=\hat h_{\hat w}( \Psi,\Psi) + \hat{v}(\Psi,\Psi)$, we have $E_\RDM'(v)\leq E_\RDM(v)$ by definition. Since $\Gamma $ is a convex combination of pure states, we have by Lemma \ref{Lemma: the diagonalization is in D(T^(1/2))} and \eqref{Equation: Split the T} that
\[\Tr(\hat H(v)\Gamma) = \sum \Gamma_i \Tr(\hat H(v) |\Psi_i\rangle\langle \Psi_i|) \geq \inf_i \Tr(\hat H(v) |\Psi_i\rangle\langle \Psi_i|)\rangle\geq E_\RDM(v).\] Taking the infimum over all $\Gamma\in \cD^N$, we get $E_\RDM(v)\leq E_\RDM'(v)$. 
\end{proof}

For elements $v,v'\in X^*$ we write $v \geq v'$ if $v(\gamma) \geq v'(\gamma)$ for every $\gamma \in \cS^+_1(\cH)$.
The following lemma is analogous to Theorem 3.1 of \cite{Lieb1983}.
\begin{lemma}
We have the following facts about $E_\RDM$: 
    \begin{enumerate}
        \item $E_\RDM$ is concave, that is, $E_\RDM(v)\geq \alpha E_\RDM(v_1)+ (1-\alpha)E_\RDM(v_2)$ for \mbox{$v=\alpha v_1+(1-\alpha)v_2$,} $0\leq \alpha\leq 1$, and $v_1,v_2\in X^*$.
        \item $E_\RDM$ is monotone decreasing, that is, $E_\RDM(v_1)\geq E_\RDM(v_2)$ if $v_1\geq v_2$ 
        \item $E_\RDM(v)$ is finite for all $v\in \cR$
        \item $E_\RDM$ is upper-semicontinuous with respect to $\|\cdot\|_{X^*}$.
        \item $E_\RDM $ is locally Lipschitz in $\cR$. 
        \end{enumerate} 
\end{lemma}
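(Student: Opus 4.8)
The plan is to read off properties (1)--(4) almost for free from the structure of $E_\RDM$ as a pointwise infimum, over $\Psi\in\cW^N$, of the functions $v\mapsto \hat h_{\hat W}(\Psi,\Psi)+\hat v(\Psi,\Psi)$, and to spend essentially all the effort on (5), whose new ingredient is a uniform a priori kinetic-energy bound on near-minimizers.

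For (1) and (4) I would argue as follows. For each fixed $\Psi\in\cW^N$ the map $v\mapsto \hat v(\Psi,\Psi)=\hat v(|\Psi\rangle\langle\Psi|)$ is linear in $v$, because $v\mapsto\hat v=(\Pi|_{X^N})^*(v)$ is linear (Remark \ref{Remark: second quantization map for X*}), and $\|\cdot\|_{X^*}$-continuous, because $|\hat v(\Psi,\Psi)|\le \|\hat v\|_{(X^N)^*}\,\| \lvert\Psi\rangle\langle\Psi\rvert \|_{X^N}\le N\|v\|_{X^*}\bigl(1+\|\hat T^{1/2}\Psi\|^2\bigr)$. Hence every $v\mapsto\hat h_{\hat W}(\Psi,\Psi)+\hat v(\Psi,\Psi)$ is affine and continuous, so the infimum $E_\RDM$ is concave and upper semicontinuous for $\|\cdot\|_{X^*}$. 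For (2): if $v_1\ge v_2$, then since $\Pi$ is positive (Theorem \ref{Theorem: Tr_{-1} is surjective}) and $\hat v_i(\Psi,\Psi)=v_i\bigl(\Pi(|\Psi\rangle\langle\Psi|)\bigr)$ with $\Pi(|\Psi\rangle\langle\Psi|)\in\cS_1^+(\cH)$, we get $\hat v_1(\Psi,\Psi)\ge\hat v_2(\Psi,\Psi)$ for all $\Psi\in\cW^N$, and taking infima gives $E_\RDM(v_1)\ge E_\RDM(v_2)$. For (3): if $v\in\cR$ then $\hat H(v)$ is lower semibounded (Lemma \ref{Lemma: R and R inf} and the discussion preceding the definition of $E_\RDM$), so $E_\RDM(v)=\inf\operatorname{spec}(\hat H(v))>-\infty$, while $E_\RDM(v)\le \hat h_{\hat W}(\Psi_0,\Psi_0)+\hat v(\Psi_0,\Psi_0)<+\infty$ for any fixed $\Psi_0\in\cW^N$, since $\hat h_{\hat W}$ is finite on $Q(\hat T)$ and $\hat v\in(X^N)^*$.

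For (5) I would fix $v_0\in\cR$, with $T$-bound $a_0<1$ and constant $b_0$, and use that $\cR$ is open (Lemma \ref{Lemma: R is open}) to pick $r\in(0,1-a_0)$ with $\overline B_{\|\cdot\|_{X^*}}(v_0,r)\subset\cR$. Re-running the estimate in the proof of Lemma \ref{Lemma: R is open}, every $v$ in this ball satisfies $|v(\gamma)|\le a\Tr(T\gamma)+b\Tr(\gamma)$ for all $\gamma\in X^+$, with the \emph{uniform} constants $a:=a_0+r<1$ and $b:=b_0+r$. Transporting this via $\hat v(\Gamma)=v(\Pi(\Gamma))$, $\Tr(T\Pi(\Gamma))=\Tr(\hat T\Gamma)$ and $\Tr(\Pi(\Gamma))=N\Tr(\Gamma)$ (cf.\ Lemmas \ref{Lemma: second quantization expectation value} and \ref{Lemma: Trace inequality}) gives $|\hat v(\Gamma)|\le a\Tr(\hat T\Gamma)+Nb\Tr(\Gamma)$ for all positive $\Gamma\in X^N$ and all $v$ in the ball; in particular $E_\RDM\ge -Nb$ there, and a fixed test $\Psi_0$ gives $E_\RDM\le M_0<+\infty$ there. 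The key point is then that any $\Psi\in\cW^N$ with $\hat h_{\hat W}(\Psi,\Psi)+\hat v(\Psi,\Psi)\le E_\RDM(v)+1\le M_0+1$ obeys $(1-a)\|\hat T^{1/2}\Psi\|^2\le M_0+1+Nb$, because $\hat W\ge 0$ forces $\hat h_{\hat W}(\Psi,\Psi)\ge\|\hat T^{1/2}\Psi\|^2$ while $\hat v(\Psi,\Psi)\ge-a\|\hat T^{1/2}\Psi\|^2-Nb$; hence $\| \lvert\Psi\rangle\langle\Psi\rvert \|_{X^N}=1+\|\hat T^{1/2}\Psi\|^2\le K$ with $K$ depending only on the ball. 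Finally, for $v,v'$ in the ball and $\epsilon\in(0,1]$, choosing $\Psi$ with $\hat h_{\hat W}(\Psi,\Psi)+\hat v'(\Psi,\Psi)\le E_\RDM(v')+\epsilon$ gives
\[E_\RDM(v)\le \hat h_{\hat W}(\Psi,\Psi)+\hat v(\Psi,\Psi)\le E_\RDM(v')+\epsilon+|(\hat v-\hat v')(\Psi,\Psi)|\le E_\RDM(v')+\epsilon+NK\|v-v'\|_{X^*},\]
and letting $\epsilon\to0$ and swapping $v\leftrightarrow v'$ yields $|E_\RDM(v)-E_\RDM(v')|\le NK\|v-v'\|_{X^*}$, i.e.\ local Lipschitz continuity on $\cR$. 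I expect the main obstacle to be exactly this a priori bound: one must upgrade the openness of $\cR$ to a \emph{uniform} $T$-bound strictly below $1$ over a whole ball, and then exploit positivity of $\hat W$ to turn the resulting coercivity of $\hat h_{\hat W}+\hat v$ into control of $\|\hat T^{1/2}\Psi\|$ along near-minimizing sequences; the rest is the standard ``compare at a common near-minimizer'' argument.
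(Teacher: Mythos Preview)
Your proof is correct. Points (1)--(4) essentially match the paper's argument; the paper works with the density-matrix formulation $E_\RDM(v)=\inf_{\Gamma\in\cD^N}\Tr(\hat H(v)\Gamma)$ (via Lemma~\ref{Lemma: E=E'}) rather than the wavefunction formulation, but this is a cosmetic difference.

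For (5) your route is genuinely different. The paper argues as follows: it shows that $E_\RDM(v)\ge -N/2$ on the ball $B_{\|\cdot\|_{X^*}}(0,1/2)$ (by a direct estimate using $\|v\|_{X^*}<1/2$ and positivity of $\hat W$), and then invokes two black-box convex-analysis facts from van Tiel: a proper concave function that is bounded below on some open set is continuous on the interior of its effective domain, and continuity there implies local Lipschitz continuity. Since $\cR\subset\mathrm{int}(\mathrm{dom}(E_\RDM))$, this gives the result. Your argument is instead fully hands-on: you extract a \emph{uniform} $T$-bound $a=a_0+r<1$ on a whole ball around $v_0$ by re-running the proof of Lemma~\ref{Lemma: R is open}, use positivity of $\hat W$ to turn the resulting coercivity of $\hat h_{\hat W}+\hat v$ into an a~priori bound $\|\,|\Psi\rangle\langle\Psi|\,\|_{X^N}\le K$ on near-minimizers, and then compare two potentials at a common near-minimizer. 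The paper's route is shorter if one is willing to cite the convex-analysis results; yours is self-contained, yields an explicit Lipschitz constant $NK$ that depends transparently on the local data $(a_0,b_0,r)$, and isolates exactly where the positivity assumption $\hat W\ge 0$ is used.
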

\begin{proof}

\noindent{\emph{Proof of 1.}} Let $v_1,v_2\in X^*$ and $v=\alpha v_1+(1-\alpha)v_2$, $0\leq \alpha\leq 1$. We then have 
    \[\Tr(\hat H(V)\Gamma) = \alpha \Tr(\hat H(v_1)\Gamma) + (1-\alpha)\Tr(\hat H(v_2)\Gamma) \geq \alpha E_\RDM(v_1) + (1-\alpha)E_\RDM(v_2)\] for all $\Gamma \in \cD^N$. Taking the infimum over all $\Gamma \in \cD^N$, concavity of $E_\RDM$ is shown.

\noindent{\emph{Proof of 2.}} If $v_1\geq v_2$, then we automatically have
\[\Tr(\hat H(v_1)\Gamma)\geq \Tr(\hat H(v_2)\Gamma)\] for all $\Gamma\in \cD^N$, and hence $E_\RDM(v_1)\geq E_\RDM(v_2)$. 

\noindent{\emph{Proof of 3.}} Let $v\in \cR$ which by Remark \ref{Remark: relatively bounded sets} means that $\hat v\in \cR^N$. By Lemma~\ref{Lemma: R and R inf}, $H(v)$ is lower semibounded, which means that there exists a $c>0$ such that  
\[\Tr(\hat H(v)\Gamma)\geq -c\Tr(\Gamma)\]
for all $\Gamma\in (X^N)^+$. It follows by definition that $E_\RDM(v)>-\infty$.

\noindent{\emph{Proof of 4.}} It is a standard result that the infimum of a family of continuous functionals is upper semicontinuous. See e.g. \cite[Theorem 10.3]{RooijSchikhof1982}.

\noindent{\emph{Proof of 5.}}
One can check that $E_\RDM$ is bounded from below when restricted to the ball $B_{\|\cdot\|_{X^*}}(0,1/2)$. Indeed, if $v\in B_{\|\cdot\|_{X^*}}(0,1/2)$, then for any $\Gamma\in \cD^N$ and $\gamma := \Pi(\Gamma)$, we have
\[-\frac{1}{2}\left(\Tr(T\gamma)+\Tr(\gamma)\right)\leq v(\gamma) \quad \Longleftrightarrow\quad -\frac{1}{2}\Tr(\gamma)\leq \Tr((T\dot+v)\gamma).\] Since $\hat w$ is positive, we have 
\[\Tr(\hat  H(v)\Gamma)=\Tr((\hat T\dot + \hat w)\Gamma) + \hat v(\Gamma)\geq \Tr((T \dot + v)\gamma)\geq \frac{1}{2}\Tr(T\gamma) -\frac{1}{2}\Tr(\gamma)\geq -\frac{1}{2}\Tr(\gamma)=-\frac{N}{2}\]
for all $\Gamma\in (X^N)^+.$ Hence $E(v)\geq -N/2$. A classical result in convex analysis says that since $E_\RDM$ is a proper concave function, it follows that $E_\RDM$ is continuous on all of $\mathrm{int}(\mathrm{dom}(E_\RDM))$ \cite[5.20]{vanTielConvexAnalysis}. In particular, since $\cR\subset \mathrm{int}(\mathrm{dom}(E_\RDM))$, $E_\RDM$ is continuous on $\cR$. Another result then tells us then that $E_\RDM$ is locally Lipschitz in $\cR$ \cite[5.21]{vanTielConvexAnalysis}.
\end{proof}

We next introduce \emph{the reduced density matrix functional} \cite{GibneyBoynMazziotti2022}
\begin{equation}\label{Equation: F_RDM}
F_\RDM : X \to \R \cup \{+\infty\}, \quad F_\RDM(\gamma):=
    \inf_{\Gamma\mapsto \gamma} \hat w(\Gamma)
\end{equation}
where ``$\Gamma\mapsto \gamma$'', which is standard notation in the DFT literature, means that the infimum is taken over all $\Gamma\in \cD^N$ such that $\Pi(\Gamma)=\gamma$. If $\gamma\notin \cD$, we have $F_\RDM(\gamma)=+\infty$ by definition. 
\begin{lemma}\label{Lemma: FDM is convex}
    $F_\RDM$ is convex. Furthermore, we have 
    \[ |F_\RDM(\gamma)| \leq \|\hat{w}\|_{(X^N)^*} \|\gamma\|_X \text{ for all }\gamma\in X, \]
    i.e. $F_\RDM$ bounded with respect to the $X$-norm on $\cD$.
\end{lemma}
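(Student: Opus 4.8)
The plan is to derive convexity from the usual constrained-search (``lifting'') argument and to read the norm estimate directly off Theorem~\ref{Theorem: Tr_{-1} is surjective}. For convexity, I would take $\gamma_1,\gamma_2\in X$ and $\alpha\in[0,1]$ and set $\gamma=\alpha\gamma_1+(1-\alpha)\gamma_2$. If one of the $\gamma_i$ lies outside $\cD$, then $\alpha F_\RDM(\gamma_1)+(1-\alpha)F_\RDM(\gamma_2)=+\infty$ and nothing is to be shown; otherwise $\gamma_1,\gamma_2$ belong to the convex set $\cD=\cC\cap X$, hence so does $\gamma$. Given $\epsilon>0$, I would use point (2) of Theorem~\ref{Theorem: Tr_{-1} is surjective} to pick $\Gamma_i\in\cD^N$ with $\Pi(\Gamma_i)=\gamma_i$ and $\Tr(\hat W\Gamma_i)\le F_\RDM(\gamma_i)+\epsilon$. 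Since $\cD^N=\cC^N\cap X^N$ is convex, $\Gamma:=\alpha\Gamma_1+(1-\alpha)\Gamma_2\in\cD^N$, and $\Pi(\Gamma)=\gamma$ by linearity of $\Pi$; thus $F_\RDM(\gamma)\le\Tr(\hat W\Gamma)=\alpha\Tr(\hat W\Gamma_1)+(1-\alpha)\Tr(\hat W\Gamma_2)\le\alpha F_\RDM(\gamma_1)+(1-\alpha)F_\RDM(\gamma_2)+\epsilon$, and letting $\epsilon\downarrow0$ finishes convexity.

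For the bound, I would first note that $\hat W\ge0$ and every $\Gamma\in\cD^N$ is positive, so $\Tr(\hat W\Gamma)\ge0$ and hence $F_\RDM\ge0$ identically; since also $F_\RDM\equiv+\infty$ on $X\setminus\cD$, the asserted inequality is in substance a statement about $\gamma\in\cD$, so fix such a $\gamma$. By point (2) of Theorem~\ref{Theorem: Tr_{-1} is surjective} there is some $\Gamma\in\cD^N$ with $\Pi(\Gamma)=\gamma$, and since $\Gamma,\gamma$ are positive the chain of identities in \eqref{Equation: Positive norm equality} (which rests on Lemmas~\ref{Lemma: second quantization expectation value} and~\ref{Lemma: Trace inequality}) yields $\|\Gamma\|_{X^N}=\tfrac{1}{N}\Tr(\gamma)+\Tr(T\gamma)\le\Tr(\gamma)+\Tr(T\gamma)=\|\gamma\|_X$. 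Combining this with the definition of the dual norm on $(X^N)^*$ gives $0\le F_\RDM(\gamma)\le\Tr(\hat W\Gamma)=\hat w(\Gamma)\le\|\hat w\|_{(X^N)^*}\|\Gamma\|_{X^N}\le\|\hat w\|_{(X^N)^*}\|\gamma\|_X$, which is the claim.

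I do not anticipate a real obstacle: the proof is essentially an assembly of facts already established. The two points that warrant a little care are (i) that the preimages appearing in the constrained search can be taken inside $\cD^N$, so that both convexity of the feasible set and the norm identity \eqref{Equation: Positive norm equality} are available --- this is exactly Theorem~\ref{Theorem: Tr_{-1} is surjective}(2) --- and (ii) the bookkeeping around the value $+\infty$ of $F_\RDM$ off $\cD$, which makes the displayed estimate a statement about $\gamma\in\cD$. Everything else comes down to positivity of $\hat W$ and the definition of the $(X^N)^*$-norm.
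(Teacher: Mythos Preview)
Your proof is correct and follows essentially the same route as the paper: convexity via the constrained-search lifting (the paper compresses your $\epsilon$-approximation into a direct infimum-over-smaller-set inequality), and the norm bound via the identity \eqref{Equation: Positive norm equality} combined with the definition of the $(X^N)^*$-norm. The only cosmetic difference is that you invoke $\hat W\ge 0$ to get the lower bound $F_\RDM\ge 0$, whereas the paper writes the two-sided dual-norm estimate $-\|\hat w\|_{(X^N)^*}\|\gamma\|_X\le\Tr(\hat W\Gamma)\le\|\hat w\|_{(X^N)^*}\|\gamma\|_X$ directly; your handling of the $+\infty$ case and the restriction to $\gamma\in\cD$ is in fact more careful than the paper's.
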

\begin{proof}
Convexity follows simply from the linearity of $\Gamma \mapsto \hat w( \Gamma)$ and $\Gamma\mapsto \Pi(\gamma)$:
    \begin{equation}
        \begin{split}
        F_\RDM (\lambda\gamma_1 &+(1-\lambda)\gamma_2) = \inf \left\{\hat w(\Gamma) \;\middle|\; \Gamma\in\cD^N,  \Gamma \mapsto \lambda\gamma_1 +(1-\lambda)\gamma_2\right\} \\
    &\leq \inf \left\{\hat w((\lambda \Gamma_1 + (1-\lambda)\Gamma_2)) \;\middle|\; \Gamma_1,\Gamma_2\in\cD^N, \Gamma_1 \mapsto \gamma_1 , \Gamma_2 \mapsto \gamma_2\right\} \\
    &=\inf \left\{ \lambda \hat w( \Gamma_1) | \Gamma_1\in \cD^N, \Gamma_1\mapsto \gamma_1\right \} +\inf \left\{ \lambda \hat w( \Gamma_2) | \Gamma_2\in \cD^N, \Gamma_2\mapsto \gamma_2\right\}\\
    &=\lambda F_\RDM (\gamma_1) +(1-\lambda)F_\RDM(\gamma_2).  
        \end{split}
    \end{equation}
The inequality above comes from the fact that the infimum on the second line is taken over a smaller set. 

Since $\hat w$ is bounded with respect to the $(X^N)^*$-norm, we have
\[- \|\hat w\|_{(X^N)^*}\|\gamma\|_X \leq - \|\hat w\|_{(X^N)^*}\|\Gamma\|_{X^N} \leq \hat w(\Gamma)\leq \|\hat w\|_{(X^N)^*}\|\Gamma\|_{X^N}\leq  \|\hat w\|_{X^*} \|\gamma\|_X \text{ for all }\gamma \in \cD, \]
where we used \eqref{Equation: Positive norm equality}. Taking the infimum over $\Gamma\mapsto \gamma$, we obtain the bound on $F_\RDM(\gamma)$.
\end{proof}
\begin{remark}
      Since $F_\RDM$ is locally bounded in $\cD$, it is in particular locally Lipschitz in the $X$-norm in $\ri(\cD)$ by Theorem 5.20 in \cite{vanTielConvexAnalysis}. However, if $\cH$ is infinite-dimensional, $\ri(\cD)$ can be shown to be empty, and hence this result does not help our analysis.
\end{remark}

\begin{lemma}\label{Lemma: Variational Principle}
    The following variational principle holds:
    \begin{equation}\label{Equation: Lieb-Gilbert variation principle}
    E_\RDM(v) = \inf_{\gamma \in \cD}\left( F_\RDM (\gamma) + \Tr(T\gamma) +  v(\gamma)\right). 
    \end{equation}
\end{lemma}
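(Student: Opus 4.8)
The plan is to recast \eqref{Equation: Lieb-Gilbert variation principle} as a constrained search over the fibres of the partial trace map $\Pi$. The first step is to pass from pure states to mixed states: by Lemma~\ref{Lemma: E=E'} we have $E_\RDM(v)=E_\RDM'(v)=\inf_{\Gamma\in\cD^N}\Tr(\hat H(v)\Gamma)$. The second step is to expand the energy. Since $\hat H(v)=\hat H_{\hat W}\dot+\hat v=\hat T\dot+\hat W\dot+\hat v$ has form domain $Q(\hat T)$ and every $\Gamma\in\cD^N$ is, by Lemma~\ref{Lemma: the diagonalization is in D(T^(1/2))}, a convex combination of rank-one projections onto vectors in $Q(\hat T)$, one obtains term by term (using the additivity \eqref{Equation: Split the T} on the diagonalization of $\Gamma$)
\[
\Tr(\hat H(v)\Gamma)=\Tr(\hat T\Gamma)+\Tr(\hat W\Gamma)+\hat v(\Gamma),
\]
each summand being finite because $\Gamma\in X^N$, $\hat v,\hat w\in (X^N)^*$ and $\Tr(\hat T\Gamma)<+\infty$.

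Next I would transport the one-body and potential contributions to the reduced density matrix $\gamma:=\Pi(\Gamma)$. By Theorem~\ref{Theorem: Tr_{-1} is surjective}(2) we have $\gamma\in\cD$, and conversely every element of $\cD$ is of this form. The identity $\Tr(\hat T\Gamma)=\Tr(T\gamma)$ holds by \eqref{Equation: Positive norm equality} (equivalently, Lemma~\ref{Lemma: second quantization expectation value} extended from bounded operators to the positive unbounded $T$ via the cut-offs $T_\epsilon$), and $\hat v(\Gamma)=v(\Pi(\Gamma))=v(\gamma)$ because, by Remark~\ref{Remark: second quantization map for X*}, the second-quantization map $v\mapsto\hat v$ is the adjoint of $\Pi|_{X^N}$. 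Therefore
\[
\Tr(\hat H(v)\Gamma)=\Tr(T\gamma)+v(\gamma)+\Tr(\hat W\Gamma),
\]
and here only the last term depends on which $\Gamma$ in the fibre $\Pi^{-1}(\gamma)\cap\cD^N$ one chooses.

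The final step is to split the infimum along the fibres. Since $\Pi$ maps $\cD^N$ \emph{onto} $\cD$ (Theorem~\ref{Theorem: Tr_{-1} is surjective}(2)), the fibres $\{\Pi^{-1}(\gamma)\cap\cD^N\}_{\gamma\in\cD}$ partition $\cD^N$, so
\[
E_\RDM(v)=\inf_{\gamma\in\cD}\ \inf_{\Gamma\mapsto\gamma}\bigl(\Tr(T\gamma)+v(\gamma)+\Tr(\hat W\Gamma)\bigr)=\inf_{\gamma\in\cD}\Bigl(\Tr(T\gamma)+v(\gamma)+\inf_{\Gamma\mapsto\gamma}\Tr(\hat W\Gamma)\Bigr),
\]
and the inner infimum is precisely $F_\RDM(\gamma)$ by \eqref{Equation: F_RDM}. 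This is \eqref{Equation: Lieb-Gilbert variation principle}; restricting the right-hand side to $\gamma\in\cD$ costs nothing since $F_\RDM\equiv+\infty$ off $\cD$, and the whole argument is valid with values in $\R\cup\{-\infty\}$.

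I expect the main obstacle to be the careful justification of the expansion $\Tr(\hat H(v)\Gamma)=\Tr(\hat T\Gamma)+\Tr(\hat W\Gamma)+\hat v(\Gamma)$ when $v\in X^*\setminus\cR$, in which case $\hat H(v)$ is merely a form on $Q(\hat T)$ rather than a self-adjoint operator: one must interpret $\Tr(\hat H(v)\Gamma)$ through the defining form $\hat h_{\hat W}+\hat v$ and the spectral decomposition of $\Gamma$, checking that the (possibly sign-indefinite) series rearranges correctly, and likewise justify $\Tr(\hat T\Gamma)=\Tr(T\Pi(\Gamma))$ for unbounded $T$. Once those identities are in place, the remainder is the bookkeeping above, resting entirely on the surjectivity of $\Pi$ onto $\cD$.
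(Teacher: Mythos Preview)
Your proposal is correct and follows essentially the same route as the paper: pass to mixed states via Lemma~\ref{Lemma: E=E'}, expand the energy into its three pieces, transport the one-body and potential terms to $\gamma=\Pi(\Gamma)$ via the duality between $\Pi$ and second quantization, and then split the infimum along the fibres using the surjectivity of $\Pi:\cD^N\to\cD$. If anything, you are more explicit than the paper about the need to justify $\Tr(\hat T\Gamma)=\Tr(T\gamma)$ for unbounded $T$ (the paper simply cites Lemma~\ref{Lemma: second quantization expectation value}, which is stated for bounded operators) and about the interpretation of $\Tr(\hat H(v)\Gamma)$ when $v\notin\cR$.
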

\begin{proof}
    By definition of $E_\RDM$, we have 
    \[E_\RDM(v) = \inf_{\Gamma\in D^N}\Tr(\hat H(v)\Gamma) = \inf_{\Gamma\in D^N}\left(\hat w(\Gamma) + \Tr\big(\hat T \Gamma\big) + \hat v(\Gamma)\right).\]
    By Lemma \ref{Lemma: second quantization expectation value}, we have $\Tr(\hat T\Gamma) + \hat v(\Gamma)=\Tr( T\gamma) + v(\gamma)$, where $\Gamma\mapsto \gamma$. Using the fact that $\Pi:\cD^N\to\cD$ is surjective by Theorem \ref{Theorem: Tr_{-1} is surjective}, we have
\[E_\RDM(v)=\inf_{\gamma\in \cD}\left(\inf_{\Gamma\to\gamma}\hat w(\Gamma)+\Tr\big( T\gamma\big) + v(\gamma)\right).\]
\end{proof}
The following theorem is adapted from Lieb's Theorem 4.4 in \cite{Lieb1983}. 
\begin{theorem}
Let $\{\gamma_n\}_{n\in\N}$ be a sequence in $\cD$ which is bounded in the $X$-norm, and suppose that $\gamma_n\xrightharpoonup{*}\gamma\in \cD$. Then, there exists a $\Gamma\in \cD^N$, such that $\Pi(\Gamma)=\gamma$ and
\begin{equation}\label{Equation: The best lim inf equation}
        \hat w( \Gamma)\leq \liminf_n F_\RDM(\gamma_n).\end{equation}\label{Theorem: The best lim inf theorem}
\end{theorem}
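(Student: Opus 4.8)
The plan is to follow the classical scheme of Lieb's Theorem~4.4: attach to each $\gamma_n$ a nearly optimal $N$-body state, extract a weak-$*$ convergent subsequence of these states, and identify the limit as the desired $\Gamma$. First I would dispose of the trivial case: if $\liminf_n F_\RDM(\gamma_n)=+\infty$, then since $\Pi$ maps $\cD^N$ onto $\cD$ by Theorem~\ref{Theorem: Tr_{-1} is surjective}, any $\Gamma\in\cD^N$ with $\Pi(\Gamma)=\gamma$ satisfies \eqref{Equation: The best lim inf equation} vacuously. So assume $L:=\liminf_n F_\RDM(\gamma_n)<+\infty$ and, after passing to a subsequence, that $F_\RDM(\gamma_n)\to L$. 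For each $n$, the definition \eqref{Equation: F_RDM} lets me pick $\Gamma_n\in\cD^N$ with $\Pi(\Gamma_n)=\gamma_n$ and $\Tr(\hat W\Gamma_n)\le F_\RDM(\gamma_n)+1/n$. Since $\Gamma_n\in\cD^N\subset\cC^N$, Lemma~\ref{Lemma: weak-$*$ compact Ds} lets me pass to a further subsequence with $\Gamma_n\xrightharpoonup{*}\Gamma$ for some $\Gamma\in\cC^N$; in particular $0\le\Gamma\le 1$ and $\Tr\Gamma=1$.

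The heart of the proof is to show $\Pi(\Gamma)=\gamma$. The map $\Pi$ is \emph{not} weak-$*$ continuous (the partial trace does not see mass escaping into the orthogonal complement of finite-rank subspaces), so I can only expect the one-sided bound $\Pi(\Gamma)\le\gamma$ directly. To get it, fix $\psi\in\cH$ and set $A:=|\psi\rangle\langle\psi|\in\cB(\cH)$. By Lemma~\ref{Lemma: second quantization expectation value}, $\langle\psi,\Pi(\Gamma)\psi\rangle=\Tr(\hat A\,\Gamma)$, and since $\hat A\ge0$ is bounded, Lemma~\ref{Lemma: Trace of unbounded operator is lower semicontinuous} yields $\langle\psi,\Pi(\Gamma)\psi\rangle\le\liminf_n\Tr(\hat A\,\Gamma_n)=\liminf_n\langle\psi,\gamma_n\psi\rangle=\langle\psi,\gamma\psi\rangle$, the last equality because $|\psi\rangle\langle\psi|$ is compact and $\gamma_n\xrightharpoonup{*}\gamma$. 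Hence $0\le\Pi(\Gamma)\le\gamma$. The normalizations now close the gap: $\Tr\Pi(\Gamma)=N\Tr\Gamma=N=\Tr\gamma$ (this is where $\gamma\in\cD$ enters crucially), so $\gamma-\Pi(\Gamma)$ is a positive trace-class operator of vanishing trace, i.e.\ $\Pi(\Gamma)=\gamma$. (If one preferred not to invoke Lemma~\ref{Lemma: weak-$*$ compact Ds}, the no-loss-of-mass conclusion could be obtained by hand: $\Tr\gamma=\Tr\gamma_n=N$ for all $n$ allows one to choose a finite-rank projection $P$ on $\cH$ with $\sup_n\Tr((1-P)\gamma_n)$ arbitrarily small, and the operator inequality between $1_{\cH^N}-\bigwedge^N P$ and $\widehat{1-P}$, together with Lemma~\ref{Lemma: second quantization expectation value}, propagates the tightness to the $\Gamma_n$, forcing $\Tr\Gamma=1$.)

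It remains to check $\Gamma\in\cD^N$ and to read off the energy bound. Since $\Gamma\ge0$ and $\Pi(\Gamma)=\gamma\in X$, the identity $\Tr(\hat T\Gamma)=\Tr\big(T\,\Pi(\Gamma)\big)$ for positive operators — the one underlying \eqref{Equation: Positive norm equality} in the proof of Theorem~\ref{Theorem: Tr_{-1} is surjective}, obtained by applying Lemma~\ref{Lemma: second quantization expectation value} to the bounded cutoffs $T_\epsilon$ and letting $\epsilon\to 0$ — gives $\Tr(\hat T\Gamma)=\Tr(T\gamma)<+\infty$; hence $\Gamma\in X^N$, and combined with $\Gamma\in\cC^N$ this gives $\Gamma\in\cC^N\cap X^N=\cD^N$. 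Finally $\hat W\ge0$, so Lemma~\ref{Lemma: Trace of unbounded operator is lower semicontinuous} applies once more to $\Gamma_n\xrightharpoonup{*}\Gamma$: $\Tr(\hat W\Gamma)\le\liminf_n\Tr(\hat W\Gamma_n)\le\liminf_n\big(F_\RDM(\gamma_n)+1/n\big)=L=\liminf_n F_\RDM(\gamma_n)$, which is \eqref{Equation: The best lim inf equation}.

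The main obstacle is precisely the identification $\Pi(\Gamma)=\gamma$: because $\Pi$ fails to be weak-$*$ continuous, one must split it into the soft semicontinuity estimate $\Pi(\Gamma)\le\gamma$ (from the Fatou-type Lemma~\ref{Lemma: Trace of unbounded operator is lower semicontinuous}) and a genuine no-vanishing statement, and the latter rests entirely on the hypothesis $\gamma\in\cD$ pinning the limiting trace at $N$ — this is the content packaged into Lemma~\ref{Lemma: weak-$*$ compact Ds}. Everything else (Banach--Alaoglu, lower semicontinuity of traces of positive operators, and the bookkeeping relating $\Tr(\hat T\,\cdot\,)$ on $X^N$ to $\Tr(T\,\cdot\,)$ on $X$) is routine.
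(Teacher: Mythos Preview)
Your proof is correct and follows essentially the same approach as the paper's: choose nearly optimal $\Gamma_n\in\cD^N$, extract a weak-$*$ limit $\Gamma\in\cC^N$ via Lemma~\ref{Lemma: weak-$*$ compact Ds}, use the Fatou-type Lemma~\ref{Lemma: Trace of unbounded operator is lower semicontinuous} (applied to second quantizations of rank-one/compact positive operators) to get $\Pi(\Gamma)\le\gamma$, close the gap with the trace constraint $\Tr\Pi(\Gamma)=N=\Tr\gamma$, and then invoke the same lemma for $\hat W$. Your ordering---establishing $\Pi(\Gamma)=\gamma$ \emph{before} concluding $\Gamma\in\cD^N$---is in fact cleaner than the paper's, which appeals to \eqref{Equation: Positive norm equality} (an identity requiring $\Pi(\Gamma)=\gamma$) prior to verifying that hypothesis.
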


\begin{proof}
Since $F_\RDM$ is defined as an infimum over the preimage under $\Pi$, we can pick a $\Gamma_n \in \cD^N$ for every $n$ such that  $\Pi(\Gamma_n)=\gamma_n$ and 
\[F_\RDM(\gamma_n)\leq \hat w(\Gamma_n)\leq F_\RDM(\gamma_n)+ \frac{1}{n}.\]
Since $\|\Gamma_n\|_1=\Tr(\Gamma_n)=1$, the sequence $\{\Gamma_n\}_{n\in \N}$ is bounded in trace-norm. By the Banach-Alaoglu theorem, there is a subsequence converging to some $\Gamma\in \cS_1(\cH^N)$ in the weak-$*$ topology. Since $\hat w$ is positive, we have by equation \eqref{Equation: Positive norm equality} that $\|\Gamma_n\|_{X^N}\leq \|\gamma_n\|_X$, so $\{\Gamma_n\}_{n\in \N}$ is bounded in the $X^N$-norm. By Lemma \ref{Lemma: v is lower semicontinuous}, it follows that $\hat w(\Gamma)\leq \liminf_n \hat w(\Gamma_n)=\liminf F_\RDM(\gamma_n)$.  

Denote $\tilde\gamma := \Pi(\Gamma)$. For any $ K\in \cK(\cH)$, we have $\hat K\in \cK^+(\cH^N)$, and so 
\[\Tr(K\tilde\gamma) = \Tr\big(\hat K\Gamma\big)=\lim_{n\to \infty} \Tr\big(\hat K\Gamma_n\big) = \lim_{n\to \infty} \Tr(K\gamma_n) =\Tr(K\gamma).\] Since $K\in \cK(\cH)$ was arbitrary, we hence conclude that $\gamma=\tilde\gamma$.

It remains to show that $\Gamma\in \cD^N$. Since $|\Psi\rangle \langle \Psi|$ is a compact operator, we have $\langle \Psi,\Gamma\Psi\rangle =\Tr(|\Psi\rangle\langle \Psi|\Gamma) = \lim_{n\to \infty}\Tr(|\Psi\rangle \langle \Psi|\Gamma_n) = \lim_{n\to \infty}\langle \Psi,\Gamma_n\Psi\rangle \in [0,1]$, so $0\leq \Gamma\leq 1$. Furthermore, since $\Gamma \mapsto \gamma$, we have that $\Tr(\Gamma)=\Tr(\gamma)=1$. By equation \eqref{Equation: Positive norm equality}, we have $\|\Gamma\|_{X^N} \leq \|\gamma\|_X<+\infty$, and we conclude that $\Gamma\in \cD^N$.
\end{proof}

We obtain the following corollary, analogous to Corollary 4.5 in \cite{Lieb1983}.
\begin{corollary}\label{Corollary: FDM is lower semi-continuous.}
\leavevmode
     \begin{enumerate}
    \item $F_\RDM $ is norm and weak-$*$ lower semi-continuous with respect to the $\cS_1(\cH)$-topology. 
    \item Given $\gamma\in \cD$, there there exists a $\Gamma\mapsto \gamma$ such that $\hat w(\Gamma)=F_\RDM(\gamma)$.
\end{enumerate}
\end{corollary}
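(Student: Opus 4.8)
The plan is to deduce both assertions directly from Theorem~\ref{Theorem: The best lim inf theorem}, which already packages the compactness arguments; the remaining work is only to feed the theorem the right sequence and then compare its output with the infimum defining $F_\RDM$ in \eqref{Equation: F_RDM}. For the second statement I would apply Theorem~\ref{Theorem: The best lim inf theorem} to the constant sequence $\gamma_n := \gamma$, which lies in $\cD$ and satisfies $\gamma_n \xrightharpoonup{*} \gamma \in \cD$ trivially. The theorem then produces $\Gamma \in \cD^N$ with $\Pi(\Gamma) = \gamma$ and $\Tr(\hat W\Gamma) \le \liminf_n F_\RDM(\gamma_n) = F_\RDM(\gamma)$; since this $\Gamma$ is admissible in the infimum \eqref{Equation: F_RDM}, we also have $F_\RDM(\gamma) \le \Tr(\hat W\Gamma)$, so $\Tr(\hat W\Gamma) = F_\RDM(\gamma)$ and the infimum is attained.

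For the first statement, note that norm convergence in $\cS_1(\cH)$ implies weak-$*$ convergence, so it suffices to prove weak-$*$ sequential lower semicontinuity and then norm lower semicontinuity is automatic. Let $\gamma_n \xrightharpoonup{*} \gamma$ and set $L := \liminf_n F_\RDM(\gamma_n)$; if $L = +\infty$ there is nothing to prove, so I would pass to a subsequence (not relabelled) along which $F_\RDM(\gamma_n) \to L < +\infty$, which still weak-$*$ converges to $\gamma$ and, being eventually finite-valued, may be taken to satisfy $\gamma_n \in \cD$ for all $n$. The crucial point is then to verify that $\gamma$ itself lies in $\cD$, so that Theorem~\ref{Theorem: The best lim inf theorem} applies: here one uses $\gamma_n \in \cD \subset \cC$ together with the weak-$*$ compactness of $\cC$ (Lemma~\ref{Lemma: weak-$*$ compact Ds}) to put $\gamma \in \cC$, and the weak-$*$ lower semicontinuity of $\|\cdot\|_X$ (Theorem~\ref{Theorem: The X-norm is lower semicontinuous with respect to the weak-$*$ topology}) to keep $\gamma$ in $X$, whence $\gamma \in \cC \cap X = \cD$. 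Granting this, Theorem~\ref{Theorem: The best lim inf theorem} gives $\Gamma \in \cD^N$ with $\Pi(\Gamma) = \gamma$ and $\Tr(\hat W\Gamma) \le \liminf_n F_\RDM(\gamma_n) = L$, and since $\Gamma \mapsto \gamma$ the definition of $F_\RDM$ forces $F_\RDM(\gamma) \le \Tr(\hat W\Gamma) \le L$, which is the desired semicontinuity inequality.

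The genuinely hard analysis --- extracting a weak-$*$ convergent subsequence of near-optimal $N$-particle preimages $\Gamma_n$, passing the $\liminf$ through the unbounded positive operator $\hat W$ by Lemma~\ref{Lemma: Trace of unbounded operator is lower semicontinuous}, and identifying the partial trace of the weak-$*$ limit with $\gamma$ --- has already been carried out once and for all inside Theorem~\ref{Theorem: The best lim inf theorem}, so I expect the only real obstacle to lie in the reduction step for the first statement: ensuring that a weak-$*$ limit of elements of $\cD$ along which $F_\RDM$ stays finite in the $\liminf$ is again an element of $\cD$, which is precisely where Lemma~\ref{Lemma: weak-$*$ compact Ds} and the weak-$*$ lower semicontinuity of $\|\cdot\|_X$ are needed.
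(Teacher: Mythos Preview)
Your approach is exactly the paper's: both parts follow by feeding Theorem~\ref{Theorem: The best lim inf theorem} the appropriate sequence (the constant one for part~2, the given one for part~1) and comparing its output with the infimum defining $F_\RDM$. One minor point: your appeal to Theorem~\ref{Theorem: The X-norm is lower semicontinuous with respect to the weak-$*$ topology} to place the limit $\gamma$ in $X$ would require $\liminf_n\|\gamma_n\|_X<\infty$, which finiteness of $F_\RDM(\gamma_n)$ alone does not guarantee; but since $F_\RDM$ is defined as a map $X\to\R\cup\{+\infty\}$, membership $\gamma\in X$ is already part of the hypothesis of lower semicontinuity, so this step is superfluous rather than genuinely needed (and the paper's own proof simply applies Theorem~\ref{Theorem: The best lim inf theorem} without dwelling on these reductions).
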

\begin{proof}[Proof of 1.]
Let $\gamma_n\xrightharpoonup{*} \gamma$. By Theorem \ref{Theorem: The best lim inf theorem}, there is a $\Gamma \mapsto \gamma$, and we have \[F_\RDM(\gamma) = \inf_{\Gamma'\mapsto \gamma}\hat w(\Gamma')  \leq \hat w( \Gamma)\leq \liminf_n F_\RDM(\gamma_n),\] which shows that $F_\RDM$ is weak-$*$ lower semi-continuous.

\noindent\emph{Proof of 2.} Let $\gamma_n = \gamma$. Then $\gamma_n\rightharpoonup\gamma$ trivially, and so by Theorem \ref{Theorem: The best lim inf theorem}, there is a $\Gamma\mapsto \gamma$ such that
\[F_\RDM(\gamma)\leq \hat w(\Gamma) \leq \lim \inf F_\RDM(\gamma_n)=F_\RDM(\gamma).\]

\end{proof}

The function $E_\RDM$ is seen to be a modified Legendre--Fenchel transform of $F_\RDM$. The latter is defined as the map $F_\RDM^* : X^* \to \R\cup\{+\infty\}$ with values
$$
        F_\RDM^*(v) := \sup_{\gamma\in X} \bigl( v(\gamma) - F_\RDM(\gamma) \bigr),
$$
and is, by virtue of being the supremum of a family of linear functionals, a convex lower semicontinuous function. It is readily seen that $E_\RDM(v) = -F_\RDM^*(-(T\dot{+} v))$. On the other hand, the biconjugate $F_\RDM^{**} = (F_\RDM^*)^* : X \to \R\cup\{+\infty\}$ is also convex lower semicontinuous, and The Fenchel--Moreau Theorem~\cite[Theorem 6.15]{vanTielConvexAnalysis} states that a convex lower semicontinuous function equals its own biconjugate if and only if the original function is convex lower semicontinuous. These observations lead to the following:
\begin{corollary}\label{Corollary: FRDM equals sup.}
    $F_\RDM : X \to\R \cup\{+\infty\}$ and $E_\RDM : X^* \to \R\cup\{-\infty\}$ are conjugate functions in the sense that
    \begin{align}
        E_\RDM(v) &= \inf_{\gamma\in X} \bigl( F_\RDM(\gamma) + \Tr((T\dot{+} v)\gamma) \bigr), \\
        F_\RDM(\gamma) &= \sup_{v \in X^*} \bigl( E_\RDM(v) - \Tr((T\dot{+} v)\gamma) \bigr). \label{Equation: F as sup}
    \end{align}
\end{corollary}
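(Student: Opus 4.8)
The plan is to obtain both identities from general facts about the Legendre--Fenchel transform on the dual pair $(X,X^*)$, after fixing the convention that for arbitrary $v\in X^*$ the symbol $\Tr((T\dot{+}v)\gamma)$ denotes the value at $\gamma$ of the bounded linear functional $\tau+v\in X^*$, where $\tau:=\Tr(T\,\cdot\,)$ lies in $X^*$ by Lemma \ref{Lemma: Trace inequality}. When $v\in\cR$ this agrees with the trace against the operator form sum, and for general $v\in X^*$ it is the natural extension.

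\emph{First identity.} Since $F_\RDM\equiv+\infty$ on $X\setminus\cD$ while $(\tau+v)(\gamma)$ is finite for every $\gamma\in X$, the infimum in Lemma \ref{Lemma: Variational Principle} is unchanged when the domain is enlarged from $\cD$ to all of $X$. Hence
\[
E_\RDM(v)=\inf_{\gamma\in X}\bigl(F_\RDM(\gamma)+\Tr((T\dot{+}v)\gamma)\bigr)=-\sup_{\gamma\in X}\bigl(-(\tau+v)(\gamma)-F_\RDM(\gamma)\bigr)=-F_\RDM^*\!\bigl(-(\tau+v)\bigr),
\]
which is the first displayed formula and simultaneously records the relation $E_\RDM(v)=-F_\RDM^*(-(T\dot{+}v))$ used in the text preceding the corollary.

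\emph{Biconjugation.} Next I would apply the Fenchel--Moreau theorem to conclude $F_\RDM=F_\RDM^{**}$ as functions on $X$. This needs $F_\RDM$ to be proper (it is nonnegative, since $\hat W\geq 0$ and density matrices are positive, and finite on the nonempty set $\cD$ by Lemma \ref{Lemma: FDM is convex}, hence never $-\infty$ and not identically $+\infty$), convex (Lemma \ref{Lemma: FDM is convex}), and lower semicontinuous for the norm topology of $X$. For the last point, Corollary \ref{Corollary: FDM is lower semi-continuous.} gives lower semicontinuity with respect to the $\cS_1(\cH)$-norm; since $\|\cdot\|_1\le\|\cdot\|_X$ by Lemma \ref{Lemma: The X-norm}, the $\cS_1$-norm topology on $X$ is coarser than the $X$-norm topology, so $\cS_1$-lower semicontinuity upgrades to $X$-norm lower semicontinuity (equivalently, being convex and norm-lsc, $F_\RDM$ is $\sigma(X,X^*)$-lower semicontinuous). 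Therefore $F_\RDM(\gamma)=F_\RDM^{**}(\gamma)=\sup_{w\in X^*}\bigl(w(\gamma)-F_\RDM^*(w)\bigr)$.

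\emph{Change of variables.} Finally I would substitute $w=-(\tau+v)$, an affine bijection of $X^*$ onto itself, and use $F_\RDM^*(-(\tau+v))=-E_\RDM(v)$ from the first step to obtain
\[
F_\RDM(\gamma)=\sup_{v\in X^*}\bigl(-(\tau+v)(\gamma)+E_\RDM(v)\bigr)=\sup_{v\in X^*}\bigl(E_\RDM(v)-\Tr((T\dot{+}v)\gamma)\bigr),
\]
the second displayed formula. The only step that is more than bookkeeping is the topological comparison in the biconjugation step: one must check that the lower semicontinuity delivered by Corollary \ref{Corollary: FDM is lower semi-continuous.} (for the $\cS_1$-topology) is strong enough to feed into the Fenchel--Moreau theorem on $X$, and this is exactly what the inequality $\|\cdot\|_1\le\|\cdot\|_X$ provides; everything else is a routine manipulation of the Legendre transform.
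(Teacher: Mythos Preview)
Your proof is correct and follows essentially the same route as the paper: identify $E_\RDM(v)=-F_\RDM^*(-(T\dot{+}v))$, invoke Fenchel--Moreau to get $F_\RDM=F_\RDM^{**}$, and then perform the affine change of variable $v'=-(T\dot{+}v)$ in the supremum. You are in fact more explicit than the paper about one point the paper leaves implicit, namely the passage from $\cS_1$-lower semicontinuity (Corollary~\ref{Corollary: FDM is lower semi-continuous.}) to $X$-norm lower semicontinuity via $\|\cdot\|_1\le\|\cdot\|_X$, which is what Fenchel--Moreau on the pair $(X,X^*)$ actually requires.
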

\begin{proof}
    We exhibit the right-hand side of \eqref{Equation: F as sup} as the biconjugate of $F_\RDM$. Since $E_\RDM(v) = -F_\RDM^*(-v)$, and using $T \in X^*$, we have for the right-hand side
    $$
        \sup_{v\in X^*} \bigl( -F_\RDM^*(-(T\dot{+} v)) - \Tr((T\dot{+} v)\gamma)\bigr) = \sup_{v'\in X^*} \bigl( v' (\gamma) - F^*_\RDM(v') \bigr) = F_\RDM^{**}(\gamma).
    $$
    where the substituion $v' = -(T\dot{+}v)$ was made.
\end{proof}

\section{Density Functional Theory}\label{Section: Relation to DFT}

In this section, we propose a general framework for DFT. We consider densities $\rho$ as integrable real-valued functions on some $\sigma$-finite measure space $\Omega$, which in most applied settings will be $\R^3$. We shall denote the set of measurable subsets of $\Omega$ by $\Sigma(\Omega)$, and the set of real-valued integrable functions on $\Omega$ by $L^1_\R(\Omega)$. The Hilbert space $\cH$ can be any separable Hilbert space, but the conventional choice for electrons with spin is $\cH=L^2(\R^3\times \Z_2)$. As in the previous section, $T$ is a positive operator on $\cH$, and $\hat w$ models the inter-particle interaction. We will relate reduced density matrices to densities by a construction which we will call the \emph{diagonal map}, denoted by $D(\gamma)=\rho$. For this construction, we shall need a projection-valued measure $P$ on $\Omega$. In the conventional setting, $P$ maps measurable subsets $S\subset \Omega$ to the operator on $L^2(\R^3\times \Z_2)$ given by the restriction to $S\times \Z_2$, i.e. $ [P(S) f](x,\sigma)=\mathbbm{1}_S(x)f(x,\sigma)$, where $\mathbbm{1}_S$ is the indicator function of $S$. However, this conventional setting is only one out of many possible use cases, such as DFT with spin-dependent densities (spin-DFT), or models with discrete densities. We frame many of our results in this general setting, but we will subsequently refer to the ``Conventional setting'' as the one described here. We make the distinction clear in the table below.  

\begin{center}
\begin{tabular}{ |c|c| } 
\hline
General setting & Conventional setting \\
\hline
 $\cH$ is a separable Hilbert space & $\cH = L^2(\R^3\times\Z_2)$ \\ 
 $\Omega$ is a $\sigma$-finite measure space & $\Omega = \R^3$ \\ 
$P:\Sigma(\Omega) \to \cB(\cH)^+$ is a projection-valued measure &  $[P(S) f](x,\sigma)=\mathbbm{1}_S(x)f(x,\sigma)$ \\
 $T$ is a positive operator on $\cH$ & $T= -\nabla^2$ \\ $w$ is positive and infinitesimally $T$-bounded & $\hat w = \sum_{i<j}|x_i-x_j|^{-1}$\\
\hline
\end{tabular}
\end{center}

By a careful construction of the diagonal map $D$, our mathematical framework for RDMFT will carry over to that of DFT in a natural way. In this framework, we will prove that the resulting universal density functional is lower semi-continuous and expectation-valued in the general setting. A classic result by Lieb \cite{Lieb1983} follows as a special case, namely that that the universal density functional is lower semi-continuous and expectation-valued.

\subsection{The diagonal map {\itshape D}}
We introduce the \emph{diagonal map} $D$ which relates $\gamma \in \cS^+(\cH)$ to $\rho \in L^1_\R(\Omega)$, i.e., $\rho = D(\gamma)$, and can be thought of as the predual mapping of $P$ in the sense that $D^*(\mathbbm{1}_S)=P(S)$ for all $S\in \Sigma(\Omega)$. In the conventional setting, the diagonal map is usually expressed as $\rho(x)=D\gamma(x):=\sum_{\sigma}\gamma(x,\sigma; x,\sigma)$. However, as explained by Lieb in \cite{Lieb1983}, the diagonal map cannot be defined pointwise formally, but needs to be defined in terms of some averaging over measurable neighborhoods.  We will obtain, that in the conventional setting 
\[D: \cS_1^\mathrm{sa}(L^2(\R^3\times \Z_2)) \to L^1_\R(\R^3),\] 
is the map that reproduces the usual density of an $N$-electron state.
When no confusion can arise we denote 
\[\|f\|_1 \equiv \|f\|_{L^1_\R(\Omega)}:=\int_\Omega |f|,\]
which is the same notation as used for the trace-norm on operators. For the definition of the diagonal map, we will use the notion of \emph{projection-valued measures}. We recall the definition in the following. For the purposes of this, we will denote the intrinsic measure on $\Omega$ by $\mu$, which we otherwise suppress for ease of notation.  

\begin{definition}
    Let $(\Omega,\mu)$ be a $\sigma$-finite measure space and let $\cH$ be a Hilbert space. A \emph{projection-valued measure on $\Omega$} is a map $P:\Sigma(\Omega,\mu)\to \cB(\cH)^+$ satisfying
    \begin{enumerate}
        \item $P(S)$ is an orthogonal projection for all $S\in\Sigma(\Omega,\mu)$,
        \item $P(\varnothing)=0$, and $P(\Omega)=1$,
        \item $P\left(\bigcup S_i\right)=\sum P(S_i)$ for a countable collection of disjoint sets $\{S_i\}_{i\in\N}$, 
        \item $P(S\cap S')=P(S)P(S')$ for all $S,S
    '\in \Sigma(\Omega,\mu)$.
    \end{enumerate}
    Furthermore, we say that $P$ is \emph{faithful} if $P(S)=0\Rightarrow \mu(S)=0$. 
\end{definition}
Notice that as a consequence of point 4 above, we have $P(S)P(S') = P(S')P(S)=0$ whenever $S$ and $S'$ are disjoint. This means that the subspaces corresponding to $P(S)$ and $P(S')$ are mutually orthogonal. 
With these notions in mind, we introduce the diagonal map $D$ by the following theorem: 

\begin{theorem}\label{Theorem: Diagonal map}
    Given a projection-valued measure $P:\Sigma(\Omega) \to \cB(\cH)^+$ and $\gamma\in \cS_1^\mathrm{sa}(\cH)$, let $D(\gamma)$ be the Radon--Nikodym derivative of the signed measure $S \mapsto \Tr(P(S)\gamma)$ on $\Omega$. Then, $D: \cS_1^\mathrm{sa}(\cH)\to L^1_\R(\Omega)$ is the unique linear map satisfying: 
\begin{enumerate}
    \item $\|D(\gamma)\|_{1}\leq \|\gamma\|_1 $ for all $\gamma\in \cS_1^\mathrm{sa}(\cH)$, 
    \item $\gamma\geq 0\Rightarrow D(\gamma)\geq 0 $ almost everywhere,
    \item $\Tr(\gamma) = \int_{\Omega} D(\gamma)$ for all $\gamma\in \cS_1^\mathrm{sa}(\cH)$,
    \item $\Tr(P(S)\gamma)=\int_S D(\gamma)$ for all $S\in \Sigma(\Omega)$
\end{enumerate}
\end{theorem}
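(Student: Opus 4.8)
The plan is to build $D$ from the Radon--Nikodym theorem, read off properties 1--4 almost by inspection, and then obtain surjectivity by dualising rather than by an explicit construction.

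\textbf{Construction, properties 1--4, and uniqueness.} Fix $\gamma \in \cS_1^{\mathrm{sa}}(\cH)$ and write its spectral decomposition $\gamma = \sum_k \lambda_k |\varphi_k\rangle\langle\varphi_k|$ with $\{\varphi_k\}$ orthonormal and $\sum_k |\lambda_k| = \|\gamma\|_1$. For each unit vector $\varphi$, the map $S \mapsto \langle\varphi, P(S)\varphi\rangle$ is a probability measure on $\Omega$ — the spectral measure of $P$ at $\varphi$ — and it is absolutely continuous with respect to the reference measure $m$ on $\Omega$, since $m(S)=0$ forces $P(S)=0$. Hence $\mu_\gamma := \Tr(P(\cdot)\gamma) = \sum_k \lambda_k \langle\varphi_k, P(\cdot)\varphi_k\rangle$ is, as a total-variation-convergent series of scaled finite positive measures, a finite real signed measure with $\mu_\gamma \ll m$ and $|\mu_\gamma|(\Omega) \le \sum_k|\lambda_k| = \|\gamma\|_1$. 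As $m$ is $\sigma$-finite, the Radon--Nikodym theorem produces a unique (up to a.e.\ equality) element $D(\gamma) := d\mu_\gamma/dm \in L^1_\R(\Omega)$, and $D$ is linear because $\gamma\mapsto\mu_\gamma$ and the Radon--Nikodym derivative are. Property 4 is then precisely the defining identity $\int_S D(\gamma)\,dm = \mu_\gamma(S) = \Tr(P(S)\gamma)$; property 3 is its special case $S = \Omega$ together with $P(\Omega) = I$; property 1 is $\|D(\gamma)\|_1 = |\mu_\gamma|(\Omega) \le \|\gamma\|_1$; and property 2 holds because $\gamma \ge 0$ makes $\mu_\gamma(S) = \Tr(P(S)\gamma P(S)) \ge 0$ for all $S$, so $\mu_\gamma$ is a positive measure and $D(\gamma)\ge 0$ a.e. For uniqueness, note that any linear $D'$ satisfying property 4 has, for each $\gamma$, the same integral as $D(\gamma)$ over every measurable set, forcing $D'(\gamma)=D(\gamma)$ a.e.; thus $D'=D$ (property 4 alone suffices here).

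\textbf{Surjectivity when $P$ is faithful.} Here I would pass to adjoints. The map $D$ is bounded by property 1, and since $m$ is $\sigma$-finite while $\cS_1^{\mathrm{sa}}(\cH)$ has dual $\cB^{\mathrm{sa}}(\cH)$, the adjoint is a map $D^* : L^\infty_\R(\Omega) \to \cB^{\mathrm{sa}}(\cH)$. For simple $f$ one computes $\langle D^* f, \gamma\rangle = \int_\Omega f\,d\mu_\gamma = \Tr\big(\big(\int_\Omega f\,dP\big)\gamma\big)$, and by density of simple functions in $L^\infty$ this gives $D^* f = \int_\Omega f\,dP$ for all $f$ (in particular $D^*\mathbbm{1}_S = P(S)$, as claimed in the text). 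By the spectral calculus, $\| \int_\Omega f\,dP \| = \inf\{t \ge 0 : P(\{|f| > t\}) = 0\}$; faithfulness of $P$ means each such $t$ also satisfies $m(\{|f| > t\}) = 0$, so $\| D^* f \| \ge \|f\|_{L^\infty(\Omega)}$. Thus $D^*$ is bounded below, and by the closed-range theorem (equivalently, a bounded operator with bounded-below adjoint is surjective) $D$ is surjective.

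\textbf{Main obstacle.} Everything through uniqueness is routine Radon--Nikodym bookkeeping; the one genuinely nontrivial step is surjectivity. Attempting to build an explicit preimage of a given density $\rho$ out of vectors in the ranges $P(S)\cH$ runs into the absence of ``flat'' vectors for a general projection-valued measure and the need to treat the atomic and non-atomic parts of $m$ separately. Routing through the adjoint, and reducing the claim to the spectral-norm inequality $\| \int_\Omega f\,dP \| \ge \|f\|_\infty$, sidesteps all of this and is where I expect the real work to lie.
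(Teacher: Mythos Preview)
Your proof is correct. For the construction and properties 1--4 it follows the same Radon--Nikodym strategy as the paper; one small expository point is that the absolute continuity you invoke ($m(S)=0 \Rightarrow P(S)=0$) is the \emph{converse} of faithfulness, not faithfulness itself, and is really an implicit hypothesis baked into the theorem's phrase ``let $D(\gamma)$ be the Radon--Nikodym derivative'' --- the paper glosses over this as well.

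For surjectivity, both you and the paper dualize to $D^*: L^\infty_\R(\Omega) \to \cB^{\mathrm{sa}}(\cH)$, but the arguments differ. The paper argues that $D^*$ is injective by testing simple functions $v = \sum_i v_i^\pm \mathbbm{1}_{S_i^\pm}$ with disjoint $S_i^\pm$ of positive measure: faithfulness makes the projections $P(S_i^\pm)$ nontrivial and mutually orthogonal, and pairing against rank-one operators built from vectors in those subspaces forces each coefficient to vanish. You instead establish the stronger fact that $D^*$ is bounded below, via the spectral-calculus identity $\|\int f\,dP\| = \inf\{t : P(\{|f|>t\})=0\}$ combined with faithfulness, and then appeal to the closed range theorem. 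Your route is tighter: injectivity of $D^*$ by itself only gives that $D$ has dense range, whereas your bounded-below estimate is exactly what the closed range theorem needs to upgrade dense range to surjectivity.
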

\begin{proof}
    For every $\gamma\in\cS_1^\mathrm{sa}(\cH) $, the map $S \mapsto \Tr(P(S)\gamma)$ defines a signed measure on $\Omega$. By the Radon-Nykodim Theorem, there is a unique $D(\gamma)\in L^1_\R(\Omega,\mu)$ such that 
    \[\int_S D(\gamma)d\mu = \Tr(P(S)\gamma)\text{ for all }S\in \Sigma(\Omega,\mu). \]
Hence, the map $D$ is uniquely defined by point 4 of the theorem. 

We next show that $D$ is linear. Let $\gamma,\gamma'\in \cS_1^\mathrm{sa}(\cH)$. Then, by linearity of $\Tr$, we have for all $S\in \Sigma(\Omega,\mu)$ that  
\[\int_S D(\gamma+\gamma')d\mu=\Tr(P(S)(\gamma+\gamma'))=\Tr(P(S)\gamma)+\Tr(P(S)\gamma')=\int_S(D(\gamma)+D(\gamma'))d\mu, \] and hence $D(\gamma + \gamma')=D(\gamma)+D(\gamma')$. The equality $D(\lambda\gamma)=\lambda D(\gamma)$ also follows as a consequence of the linearity of $\Tr$ and $\int$. 

Point 3 of the theorem is a special case of point 4, and point 2 is a consequence of the unsigned Radon-Nykodim theorem. For point 1, observe that for the Jordan decomposition $\gamma^\pm = (|\gamma|\pm\gamma)/2$, we have 
\[\int_\Omega |D(\gamma)| = \int_\Omega|D(\gamma^+)-D(\gamma^-)|\leq \int_\Omega D(\gamma^+) +D(\gamma^-)=\Tr(\gamma^+ ) + \Tr(\gamma^-) = \|\gamma\|_1.\]
\end{proof}
We see that in the conventional setting, the above definition indeed reproduces the ``pointwise'' diagonal map. Indeed, if $\gamma(x,\sigma; y,\sigma')=\overline{\psi(x,\sigma)}\psi(y,\sigma')$ for some $\psi\in L^2(\R^3\times \Z_2)$, then point 4 implies in particular that 
\[\int_S \rho = \sum_\sigma \int_S |\psi(x,\sigma)|^2\]
for any measurable $S\subset \R^3$, implying that $\rho(x)=\sum_\sigma |\psi(x,\sigma)|^2$. Since $D$ is linear, we obtain for a general $\gamma$ with spectral decomposition $\gamma = \sum_n \lambda |\psi_n\rangle\langle\psi_n|$ that 
\[\rho(x) = \sum_{n,\sigma}\lambda_n|\psi_n(x,\sigma)|^2,\]
which is a more usual definition of the relation between $\gamma$ and $\rho$. 

\begin{remark}
    We claim that the construction in Theorem \ref{Theorem: Diagonal map} is in fact a one-to-one correspondence if $P$ is generalized to be a positive operator-valued measure: Given a linear map $D:\cS_1^\mathrm{sa}(\cH)\to L^1_\R(\Omega)$ satisfying points 1--3, there is a unique operator-valued measure $P$ satisfying point 4. The proof of this uses a generalized Radon-Nykodim theorem due to Pedersen and Takesaki \cite{PedersenTakesaki1973}. One can also generalize the domains of definition to $\cS_1(\cH)$ and $L^1(\Omega)$ if the complex Radon-Nykodim theorem is used. These results can also be generalized further, but we relegate details on these results to a future communication.   
\end{remark}

As $D : \cS_1^\mathrm{sa}(\cH)\to L^1_\R(\Omega)$ is bounded, its adjoint $D^* : L^\infty_\R(\Omega ) \to \cB^\mathrm{sa}(\cH)$ is also well-defined and bounded. The dual map $D^*$ is the natural map relating a potential $v$ defined on $\Omega$ to the corresponding operator $D^*(v)$ acting on $\cH$. By point 4 of Theorem \ref{Theorem: Diagonal map}, we have
\[\Tr(D^*(\mathbbm{1}_S)\gamma)=\int_\Omega \mathbbm{1}_SD(\gamma) = \int_SD(\gamma) = \Tr(P(S)\gamma)\]
for all $\gamma\in \cS_1^\mathrm{sa}(\cH)$ and all measurable $S\subset \Omega$, implying that $D^*(\mathbbm{1}_S)=P(S)$. As a consequence of this, we have that \begin{lemma}\label{Lemma: D* is a homomorphism of algebras}
    $D^*$ is a homomorphism of algebras, i.e., $D^*(v_1)D^*(v_2)=D^*(v_1v_2)$ for all $v_1,v_2\in L^\infty_\R(\cH)$. 
\end{lemma}

\begin{proof}
    It is enough to show this for simple functions, since they are dense and $D^*$ is bounded. Let $v_1,v_2\in L^\infty(\R)$ be simple functions, and let 
\[v_1 = \sum v_1^i \mathbbm{1}_{S_1^i}, \text{ and } v_2 = \sum v_2^i \mathbbm{1}_{S_2^i},\]
where $\{S_1^i\}_i$ and $\{S_2^i\}_i$ are mutually disjoint respectively. Since $D^*(\mathbbm 1_S)=P(S)$ for any measurable $S\subset \Omega$, we have for any two measurable sets $S,S'\subset \Omega$ that $D^*(\mathbbm{1}_S\mathbbm{1}_{S'}) = D^*(\mathbbm{1}_{S\cap S'}) = P(S\cap S')=P(S)P(S')=D^*(S)D^*(S')$. We hence get
\[D^*(v_1v_2)=\sum_{ij} v_1^iv_2^jD^*(\mathbbm{1}_{S_1^i}\mathbbm{1}_{S_2^j})=\sum_{ij}v_1^iv_2^jD^*(\mathbbm{1}_{S_1^i})D^*(\mathbbm{1}_{S_2^j})=D^*(v_1)D^*(v_2).\]
Hence, $D^*$ is a homomorphism of algebras. 
\end{proof}

As a consequence of the above lemma, we obtain:

\begin{lemma}\label{Lemma: Nice properties of D and D*}
    The diagonal map $D$ and its dual $D^*$ satisfy the following: 
    \begin{enumerate}
        \item $D^*$ is a positive map.
        \item $D^*$ respects the continuous functional calculus, i.e., $f(D^*(v))=D^*(f(v))$ for any continuous function $f$ and any $v\in L^\infty_\R(\cH)$.
        \item\label{Lemma: Nice properties point 3} $D(D^*(v)^{1/2}\gamma D^*(v)^{1/2})=vD(\gamma)$ for any $v\in L^\infty(\Omega)^+$ and $\gamma\in \cS_1^\mathrm{sa}(\cH)$.  
    \end{enumerate}
\end{lemma}

\begin{proof}
We first demonstrate point 2, that $D^*$ respects the continuous functional calculus. By Lemma \ref{Lemma: D* is a homomorphism of algebras}, we have 
\[p(D^*(v)) = D^*(p(v))\]
for any polynomial $p$ and any $v\in L^\infty_\R(\Omega)$. Using the fact that polynomials can arbitrarily approximate continuous functions in the $L^\infty$--norm, the result follows. 

Point 1 follows since if $v$ is positive, then by point 2, $|D^*(v)| = D^*(|v|)=D^*(v)$. Next, we demonstrate point 3. Let $S\subset \Omega$ be a measurable set. Since $D^*(v^{1/2}) \gamma D^*(v^{1/2}) \in \cS_1^\mathrm{sa}(\cH)$, we get
\begin{equation*}
\begin{split}
\int_S D(D^*(v^{1/2})\gamma D^*(v^{1/2}) &= \Tr(P(S)D^*(v^{1/2})\gamma D^*(v^{1/2})) =\Tr(D^*(\mathbbm{1}_S v) \gamma )
\\ &= \int_\Omega \mathbbm{1}_SvD(\gamma)=\int_S vD(\gamma).
\end{split}
\end{equation*}
We used the cyclicity of the trace and the algebra homomorphism in the second equality.
Since $S$ was arbitrary, we conclude that $D(D^*(v^{1/2})\gamma D^*(v^{1/2})) = vD(\gamma)$ for all $v\in L^\infty_\R(\Omega)^+$ and $\gamma\in \cS_1^\mathrm{sa}(\cH)$.
\end{proof}

\begin{theorem}\label{Theorem: Surjective diagonal map}
    Suppose that $P$ is a faithful projection-valued measure on $\Omega$, meaning that $S$ has zero measure if $P(S)=0$, then the corresponding diagonal map $D$ is surjective. Furthermore, $D$ maps $\cS_1^+(\cH)$ surjectively onto $L^1_\R(\Omega)^+$ and $D^*$ is an isometry.   
\end{theorem}

In order to prove Theorem \ref{Theorem: Surjective diagonal map}, we need a couple of technical lemmas: 

\begin{lemma}\label{Lemma: D maps positive guys surjectively onto positive guys}
    Let $\rho\in \ran(D)$, and suppose that $\rho \geq 0$. Then, there exists a $\gamma\in \cS_1(\cH)^+$ such that $D(\gamma)=\rho$. 
\end{lemma}
\begin{proof}
        As $\rho \in \ran(D)$, there exists a $\tilde\gamma\in\cS_1^\mathrm{sa}(\cH)$ such that $D(\tilde\gamma)=\rho$. Let $\tilde\gamma^+ := (|\tilde\gamma| + \tilde\gamma)/2$. As $\tilde\gamma^+\geq \tilde\gamma$, and since $D$ is a positive map we have $D(\tilde\gamma^+)\geq D(\tilde\gamma)$. Define
    \[v(x):= \begin{cases}
        \frac{D(\tilde\gamma)(x)}{D(\tilde\gamma^+)(x)} &\text{ if } x\in \supp D(\tilde\gamma), \\ 
        0 &\text{ else,}
    \end{cases}\]
    and define $\gamma= D^*(v)^{1/2}\tilde\gamma^+D^*(v)^{1/2}. $ By point \ref{Lemma: Nice properties point 3} of Lemma \ref{Lemma: Nice properties of D and D*}, we have 
    \[D(\gamma)=vD(\tilde\gamma^+)=D(\tilde\gamma)=\rho,\] which concludes the proof.
\end{proof}
\begin{lemma}\label{Lemma: Positive increasing sequence}
    Let $\rho$ be an element of the $L^1(\Omega)$-closure of $\ran(D)$, and suppose that $\rho\geq 0$. Then, there exists an increasing positive sequence $\{\rho_n\}\subset \ran(D)$, meaning $\rho_n\geq \rho_{n'}$ whenever $n\geq n'$, such that $\|\rho_n-\rho\|_1 \to 0$. 
\end{lemma}

\begin{proof}
Since $\rho$ is in the $L^1(\Omega)$-closure of $\ran(D)$, there exists a sequence $\tilde\rho_n$ in $\ran(D)$ such that $\|\tilde\rho_n-\rho\|_1 \to 0$. Let $\tilde\rho_n= \tilde\rho_n^+-\tilde\rho_n^-$ be a decomposition into positive parts with disjoint supports $S^+_n$ and $S^-_n$, and let $\gamma_n$ be a preimage of $\tilde\rho_n$ under $D$. Since $D(P(S^\pm_n)\gamma_n)=\mathbbm{1}_{S^\pm_n} D(\gamma_n) = \tilde\rho^\pm_n$, we have $\tilde\rho_n^\pm \in \ran (D)$. Furthermore, it is not hard to show that $|\rho-\tilde\rho_n^+|\leq |\rho-\tilde\rho_n|$, which implies that $\|\rho-\tilde\rho_n^+\|_1 \leq \|\rho-\tilde\rho_n\|_1 \to 0.$ Indeed, this follows from the following case-by-case consideration: If $x\in \Omega\setminus S_n^-$, then
\[\rho(x) - \tilde\rho_n^+(x)=\rho(x) - \tilde\rho_n(x).\]
If $x\in S^-_n$, then 
\[\rho(x) - \tilde\rho^+_n(x)=\rho(x)\leq \rho(x) + \tilde\rho_n^-(x) = \rho(x) - \tilde\rho_n(x).\]
Furthermore, since $\rho(x)\geq 0$, we have
\[\rho(x)-\tilde\rho^+_n(x)=\rho(x)\geq -\rho(x) \geq -\rho(x)-\tilde\rho_n^-(x) = -\rho(x)+\tilde\rho_n(x). \]
In conclusion, $|\rho-\tilde\rho_n^+|\leq |\rho-\tilde\rho_n|$.

We thus have a positive sequence $\{\tilde\rho_n^+\}_{n\in\N}$ converging to $\rho$. We restrict $\{\rho^+_n\}_{n\in\N}$ to a subsequence which converges pointwise to $\rho$ almost everywhere. Let $h_n := \sum_{k=1}^n \tilde\rho_k^+$, and define \[v_n(x):= \begin{cases}
    \frac{\rho(x)}{h_n(x)} &\text{ if } h_n(x) > \frac{1}{n}, \\ 0 & \text{ else. }
\end{cases}\]
As $h_n\in \ran(D)$, there exists a $\gamma_n\in \cB(\cH)^+$ such that $D(\gamma_n)=h_n$. By point \ref{Lemma: Nice properties point 3} of Lemma \ref{Lemma: Nice properties of D and D*}, $\tilde\gamma_n := D^*(v_n)^{1/2}\gamma_nD^*(v_n)^{1/2}$ satisfies $\rho_n:=D(\tilde\gamma_n)=v_nh_n = \rho \mathbbm{1}_{\{h_n>1/n\}} \in \ran(D)$. We next show that $\rho_n $ converges to $\rho$ pointwise almost everywhere. Consider the set
\[S := \bigcup_{n=1}^\infty \left\{h_n > \frac{1}{n}\right\}.\]
Let $x\in \Omega$. If $x\in S$, then $\mathbbm{1}_{\{h_n>1/n\}}(x)=1$ for all $n$ large enough, so $\rho_n(x)\to\rho(x)$. If $x\in \{\rho = 0\} \setminus S$, then $\rho_n(x)=0$ for all $n$ and hence $\rho_n(x)=0=\rho(x)$. Now, suppose that $x\in \supp(\rho)\setminus S$. Then, $\sum_{k=1}^n\tilde\rho_n^+(x)=h_n(x)\leq  1/n$ for all $n$, and so $\tilde\rho_n^+(x)=0$ for all $n$. Hence, \[\supp(\rho)\setminus S \subset \{x \in \Omega \;|\; \tilde\rho_n^+(x)\not\to \rho(x)\}.\] Since $\{\tilde \rho_n^+\}_{n\in\N}$ converges to $\rho$ pointwise almost everywhere, the set $ \{x \in \Omega \;|\; \tilde\rho_n^+(x)\not\to \rho(x)\}$ has measure zero, and thus, we have that $\rho_n \to \rho$ pointwise almost everywhere. By the dominated convergence theorem, we conclude that $\|\rho_n - \rho\|_1 \to 0$. 
\end{proof}
\begin{lemma}\label{Lemma: Dense range}
    If $P$ is faithful, meaning that $S$ has measure zero if $P(S)=0$, then $D^*$ is an isometry. In particular, $D$ has dense range. 
\end{lemma}
\begin{proof}

Let $v\in L^\infty_\R (\Omega)$. Let $\{v_n\}_{n\in\N}$ be a sequence of simple functions converging to $v$. For every $n$, let \[v_n = \sum v^+_{n,i}\mathbbm{1}_{S_{n,i}^+} - \sum v_{n,i}^-\mathbbm{1}_{S_{n,i}^-}\] denote the decomposition of $v_n$ into constant parts, where the sets $\{S^\pm_{n,i}\}_i$ are mutually disjoint. Using the fact that $D^*(\mathbbm{1}_{S})=P(S)$ for any measurable set $S$, we have 
\[D^*(v_n)=\sum v^+_{n,i}P(S_{n,i}^+) - \sum v_{n,i}^-P(S_{n,i}^-).\]
Since the sets $\{S^\pm_{n,i}\}_i$ are mutually disjoint, the projections $P(S_{n,i}^\pm)$ are mutually orthogonal, and hence, 
\[\|D^*(v_n)\| = \max_i v_{n,i}^\pm  = \|v_n\|_\infty.\]
By continuity of $D^*$, we have \[\|D^*(v)\| = \|\lim_n D^*(v_n)\| = \lim \|D^*(v_n)\| = \lim_n\|v_n\|_\infty = \|v\|_\infty ,\] implying that $D^*$ is an isometry. In particular, $D^*$ is injective, and by a classical result in functional analysis (see e.g. \cite[Theorem 4.12]{Rudin1991}), we conclude that $D$ has dense range. 

\end{proof}
With the preceding lemmas in mind, we are now ready to prove Theorem \ref{Theorem: Surjective diagonal map}. 
\begin{proof}[Proof of Theorem \ref{Theorem: Surjective diagonal map}] Let $\rho\in L^1(\Omega)^+$. Since $P$ is faithful, $\ran D$ is dense in $L^1(\Omega)$ by Lemma \ref{Lemma: Dense range}. By Lemma \ref{Lemma: Positive increasing sequence}, there exists a positive increasing sequence $\{\rho_n\}_{n\in\N} \subset \ran D$ converging to $\rho$. Since $\{\rho_n\}_{n\in\N}$ is an increasing sequence, $\rho_{n+1}-\rho_n\geq 0$, and as $\rho_{n+1}-\rho_n\in \ran(D)$, there exists by Lemma \ref{Lemma: D maps positive guys surjectively onto positive guys} a $\gamma_n\in \cB(\cH)^+$ with $D(\gamma_n)=\rho_{n+1}-\rho_n$ for every $n$. Moreover, since $D$ is a positive map, 
\[\|\gamma_n\|_1 = \|D(\gamma_n)\|_1 =\|\rho_{n+1}-\rho_n\|_1.\] Next, set $\tilde\gamma_m := \sum_{n=1}^m \gamma_n$. Then, if $m\geq n$, we have
\[\|\tilde\gamma_m-\tilde\gamma_n\|_1 = \left\|\sum_{k=n+1}^m\gamma_k\right\|_1 = \left\|\sum_{k=n+1}^mD(\gamma_k)\right\|_1 = \|\rho_m-\rho_n\|_1.\]
Hence, $\{\tilde\gamma_n\}_{n\in\N}$ is a Cauchy sequence in $\cS_1(\cH)$. Denote $\tilde\gamma:= \lim\tilde\gamma_n$. Then, by continuity of $D$, we have
\[D(\tilde\gamma)=D(\lim\tilde\gamma_n)=\lim D(\tilde\gamma_n) = \lim \rho_{n+1} = \rho.\]

Hence, we have shown that there exists a $\tilde\gamma\in \cB(\cH)^+$ with $D(\tilde\gamma)=\rho$. Next, if $\rho\in L^1(\Omega)$, we can decompose $\rho$ into positive and negative parts, $\rho= \rho^+ -\rho^-$, and by the first part of the proof, there exist $\tilde\gamma^\pm\in \cB(\cH)^+$ with $D(\tilde\gamma^\pm)=\rho^\pm$. Hence $D(\tilde\gamma^+ -\tilde\gamma^-)=\rho$, which concludes the proof. 
    
\end{proof}

\begin{remark}\label{Remark: D maps B+ surjectively ont L1+}
    Note that since $\ran D=L^1_\R(\Omega)$, Lemma \ref{Lemma: D maps positive guys surjectively onto positive guys} says that $D$ maps $\cS_1^+(\cH)$ surjectively onto $L^1(\Omega)^+$. 
\end{remark}

\subsection{General setting for DFT}
We restrict the domain of the diagonal map to $X$. Define
\[\Xi := D(X),\]
and endow $\Xi$ with the norm
\begin{equation}\label{Equation: Xi norm definiton}
    \|\rho\|_\Xi := \inf\left\{\|\gamma\|_X \;\middle|\; \gamma\in X,\; D\gamma = \rho\right\}. 
\end{equation}
Note that the definition above ensures that 
\begin{equation}\label{Equation: Xi norm inequality}
    \|D(\gamma)\|_\Xi\leq \|\gamma\|_X \text{ for all } \gamma\in X .
\end{equation} 
The space $\Xi$ should be thought of as the space of \emph{finite-kinetic energy densities} $\rho$. 

\begin{theorem} \label{Theorem: Xi is a Banach space}
 $\|\cdot\|_{\Xi}$ defines a norm on $\Xi$, and $\Xi$ is complete with respect to this norm, i.e. $\Xi$ is a real Banach space.
 \end{theorem}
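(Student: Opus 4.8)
The plan is to recognise $\|\cdot\|_{\Xi}$ as the quotient norm induced by $D|_X$ and then invoke the standard fact that a Banach space modulo a closed subspace is again a Banach space. Concretely, first I would record that $D\colon X\to L^1_\R(\Omega)$ is bounded: by Theorem~\ref{Theorem: Diagonal map}(1) and Lemma~\ref{Lemma: The X-norm} we have $\|D\gamma\|_1\le\|\gamma\|_1\le\|\gamma\|_X$ for all $\gamma\in X$. Consequently $K:=\ker(D|_X)=(D|_X)^{-1}(\{0\})$ is a \emph{closed} linear subspace of the Banach space $X$, and therefore the quotient $X/K$, equipped with the quotient norm $\|[\gamma]\|_{X/K}:=\inf\{\|\gamma'\|_X\mid \gamma'-\gamma\in K\}$, is a norm (closedness of $K$ is exactly what makes this a norm rather than a seminorm) and $X/K$ is complete.

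Next I would observe that $D|_X$ factors as $D|_X=\overline{D}\circ\pi$, where $\pi\colon X\to X/K$ is the canonical projection and $\overline{D}\colon X/K\to\Xi$, $\overline{D}([\gamma]):=D\gamma$, is a well-defined linear bijection onto $\Xi=D(X)$ (well-definedness and injectivity are immediate since $\gamma'-\gamma\in K\iff D\gamma'=D\gamma$). Because $\gamma'-\gamma\in K$ means precisely $D\gamma'=D\gamma$, the quotient norm of $[\gamma]$ equals $\inf\{\|\gamma'\|_X\mid D\gamma'=D\gamma\}=\|D\gamma\|_{\Xi}$ by the very definition \eqref{Equation: Xi norm definiton}. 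Hence $\overline{D}$ is an isometric isomorphism of normed spaces, so $\|\cdot\|_{\Xi}$ is a norm and $(\Xi,\|\cdot\|_{\Xi})$ inherits completeness from $X/K$. The only point worth spelling out separately is positive-definiteness from the $\Xi$ side directly: if $\|\rho\|_{\Xi}=0$, then taking the infimum over $\gamma$ with $D\gamma=\rho$ in $\|D\gamma\|_1\le\|\gamma\|_X$ gives $\|\rho\|_1\le\|\rho\|_{\Xi}=0$, so $\rho=0$ a.e.

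Alternatively, and essentially as a self-contained proof of the quotient-space fact, one can establish completeness via absolutely convergent series: given $\{\rho_n\}_{n\in\N}\subset\Xi$ with $\sum_n\|\rho_n\|_{\Xi}<\infty$, choose $\gamma_n\in X$ with $D\gamma_n=\rho_n$ and $\|\gamma_n\|_X\le\|\rho_n\|_{\Xi}+2^{-n}$; then $\sum_n\gamma_n$ converges in the Banach space $X$ to some $\gamma$, and since $D$ is bounded into $\Xi$ by \eqref{Equation: Xi norm inequality}, the partial sums $\sum_{n\le N}\rho_n=D\!\left(\sum_{n\le N}\gamma_n\right)$ converge in $\Xi$ to $D\gamma$. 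Homogeneity and the triangle inequality for $\|\cdot\|_{\Xi}$ are the usual routine infimum manipulations ($\{\gamma\mid D\gamma=\lambda\rho\}=\lambda\{\gamma\mid D\gamma=\rho\}$ for $\lambda\neq0$, and adding near-optimal preimages for $\rho_1,\rho_2$). I do not anticipate a genuine obstacle here; the one thing requiring care is verifying that $K=\ker(D|_X)$ is closed, which is exactly the content of Theorem~\ref{Theorem: Diagonal map}(1) combined with $\|\cdot\|_1\le\|\cdot\|_X$.
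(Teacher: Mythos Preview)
Your proposal is correct and follows essentially the same approach as the paper: both identify $\|\cdot\|_\Xi$ with the quotient norm on $X/\ker(D|_X)$, use boundedness of $D$ (via Theorem~\ref{Theorem: Diagonal map}(1) and $\|\cdot\|_1\le\|\cdot\|_X$) to conclude $\ker(D|_X)$ is closed, and then invoke the standard fact that the quotient of a Banach space by a closed subspace is Banach. The only cosmetic difference is that the paper first verifies the norm axioms by hand and then appeals to the quotient for completeness, whereas you let the quotient identification do all the work at once; your alternative absolutely-convergent-series argument is a standard equivalent route.
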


\begin{proof}
Firstly, let us check that $\|\cdot\|_{\Xi}$ is a norm. If $\lambda\in \R$, then it follows from linearity of $D$ and $\|\lambda\gamma\|_X=|\lambda|\|\gamma\|_X$ that $\|\lambda\rho\|_\Xi=|\lambda|\|\rho\|_\Xi.$ If $\|\rho\|_\Xi=0$, there exists a sequence $\{\gamma_n\}_{n\in\N}$ such that $D\gamma_n=\rho$ and $\|\gamma_n\|_X\to 0$. In particular, by point (1) of Theorem \ref{Theorem: Diagonal map} and equation \eqref{Equation: Xi norm inequality}, we have $\|\rho\|_1 =\|D(\gamma_n)\|_1\leq \|\gamma_n\|_1\leq \|\gamma_n\|_X\to 0$, and hence $\rho=0$. Next, we check that the triangle inequality holds. Let $\rho,\rho'\in \Xi$. Then, we have
\begin{align*}
    \|\rho+\rho'\|_\Xi &= \inf\{\|\tilde\gamma\| \;|\; D(\tilde\gamma) = \rho+\rho'\} \\
    &\leq \inf \{ \|\gamma + \gamma'\| 
    \;|\; D(\gamma)= \rho,\; D(\gamma')=\rho'\} \\
    &\leq \inf \{\|\gamma\|_X+\|\gamma'\|_X  \;|\; D(\gamma)=\rho, \; D(\gamma')=\rho'\} \\
    &=\|\rho\|_\Xi+\|\rho'\|_\Xi,
\end{align*}
where the first inequality comes from the fact that the infimum is taken over a smaller set, and the second inequality comes from the triangle inequality of the $X$-norm. It remains to show that $\Xi$ is complete. Since $D$ is surjective onto its image, the first isomorphism theorem says that
\begin{align*}
    \tilde D: X/{\ker D} & \to \Xi, \\
    [\gamma] & \mapsto D(\gamma)
\end{align*}
is an isomorphism of vector spaces. Furthermore, since $D$ is bounded, $\ker D$ is closed, which makes $X/{\ker D}$ a Banach space when equipped with the quotient norm \cite[Theorem 1.41]{Rudin1991}
\[\left\|[\gamma]\right\|_{X/{\ker D}} = \inf_{\gamma'\in \ker D} \|\gamma + \gamma'\|_X.\]
It is clear that the isomorphism between $X/{\ker D}$ and $\Xi$ is an isometry: 
\[\|D(\gamma)\|_\Xi = \inf\{\|\gamma'\|_X \; |\;\gamma'\in X, \; D(\gamma')=D(\gamma)\} = \inf \{\|\gamma' + \tilde\gamma\|_X \; | \; \tilde\gamma\in \ker D\} = \left\|[\gamma]\right\|_{X/{\ker D}}.\]
Hence, $\Xi$ is isometrically isomorphic to the Banach space $X/{\ker D}$.
\end{proof}
Denote by $\Xi^+$ the set of positive functions in $\Xi$. 

\begin{lemma}\label{Lemma: Xi norm equals infimum over decompositions.}
    The $\Xi$-norm satisfies 
    \begin{align*}
        \|\rho\|_\Xi & = \inf_{\rho^\pm\geq 0, \;\rho=\rho^+-\rho^-}\left(\|\rho^+\|_\Xi + \|\rho^-\|_\Xi\right) \\ &= \inf_{\rho^\pm\geq0,\; \rho=\rho^+-\rho^-}\left(\inf_{\gamma^+\in X^+, \;D(\gamma^+)=\rho^+}\Tr((1+T)\gamma^+) + \inf_{\gamma^-\in X^+, \;D(\gamma^-)=\rho^-}\Tr((1+T)\gamma^-)\right).
    \end{align*}
\end{lemma}
\begin{proof}
Observe that by definition, 
    \[\|\rho\|_\Xi = \inf_{\gamma^\pm\in X^+,\;\rho=D(\gamma^+)-D(\gamma^-)}\left(\Tr((1+T)\gamma^+) +\Tr((1+T)\gamma^-)\right).\]
    Since $\rho = D(\gamma^+) - D(\gamma^-)$ is a decomposition of $\rho$ into positive and negative parts, we have in particular that
    \[\|\rho\|_\Xi\geq \inf_{\rho^\pm\geq0, \;\rho=\rho^+-\rho^-}\left(\inf_{\gamma^+\in X^+,\; D(\gamma^+)=\rho^+}\Tr((1+T)\gamma^+) + \inf_{\gamma^-\in X^+, \;D(\gamma^-)=\rho^-}\Tr((1+T)\gamma^-)\right).\]
    Since $\Tr((1+T)\gamma^\pm)=\|\gamma^\pm\|_X$, for positive $\gamma^\pm$, we get by expanding to non-negative $\gamma^\pm$,
    \begin{align*}
        \|\rho\|_\Xi \geq& \inf_{\rho^\pm\geq0, \;\rho=\rho^+-\rho^-}\left(\inf_{\gamma^+\in X^+,\; D(\gamma^+)=\rho^+}\Tr((1+T)\gamma^+) + \inf_{\gamma^-\in X^+, \;D(\gamma^-)=\rho^-}\Tr((1+T)\gamma^-)\right)\\
        \geq&\inf_{\rho^\pm\geq0, \;\rho=\rho^+-\rho^-}\left(\inf_{\gamma^+ \in X,\; D(\gamma^+)=\rho^+}\|\gamma^+\|_X + \inf_{\gamma^-\in X,\; D(\gamma^-)=\rho^-}\|\gamma^-\|_X\right) \\
        =&\inf_{\rho^\pm\geq 0,\; \rho=\rho^+-\rho^-}\left(\|\rho^+\|_\Xi + \|\rho^-\|_\Xi\right) \geq \|\rho\|_\Xi,
    \end{align*}
where the last inequality is simply an application of the triangle inequality. 
\end{proof}

\begin{remark}\label{Remark: Positive norm conjecture}
   In light of the second equality in Lemma \ref{Lemma: Xi norm equals infimum over decompositions.}, we conjecture that  
    \begin{equation}\label{Equation: Positive norm supremum}
        \|\rho\|_\Xi = \inf\{\Tr((1+T)\gamma) \;| \;\gamma\geq 0, \; D(\gamma)=\rho\} \text{ for all }\rho\in \Xi^+,
    \end{equation}
    although we are not able to provide a proof for this. 
\end{remark}

Next, we introduce the universal density functional $F$. It can be defined in terms of the reduced density matrix functional by 
\[F(\rho) : =\inf \{F_\RDM(\gamma) + \Tr(T\gamma) \;|\; \gamma\in \cD,\; D\gamma=\rho\}.\]
We see immediately from the definition that $F_{D}(\rho)=+\infty$ iff $\rho\notin D(\cD)$, i.e. $D(\cD)$ is the effective domain of $F.$ This domain is optimal in the following sense: $D(\cD)$ consists of precisely all densities which correspond to at least one density matrix of finite kinetic energy.

The following theorem is adapted from Lieb's Theorem 4.4 in \cite{Lieb1983}: 

\begin{theorem}\label{Theorem: The best lim inf theorem for density}
    Let $\{\rho_n\}_{n\in\N}$ be a bounded sequence in $D(\cD)$ such that $\rho_n\xrightharpoonup{L^1_\R(\Omega)} \rho\in \Xi^+$. Then, there exists a $\Gamma\in \cD^N$ such that $(D\circ\Pi)(\Gamma) = \rho$ and we have
    \begin{equation}
    \Tr(\hat H_{\hat w}\Gamma)\leq \lim \inf F(\rho_n).\end{equation}
\end{theorem}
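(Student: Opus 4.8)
The plan is to collapse the two nested infima defining $F$ into a single infimum over $\cD^N$, to extract a weak-$*$ convergent subsequence of near-minimizers $\Gamma_n\in\cD^N$ for the densities $\rho_n$, and then to show that the diagonal of the limit $\Gamma$ reproduces $\rho$ by a squeezing argument at the level of the measures $S\mapsto\Tr(P(S)\,\cdot\,)$ on $\Omega$. Throughout, Lemma~\ref{Lemma: second quantization expectation value} (used for the positive $T$ via the sup-over-compacts definition, exactly as in the proof of Lemma~\ref{Lemma: Variational Principle}) gives $\Tr(\hat{T}\Gamma)=\Tr(T\Pi(\Gamma))$ for $\Gamma\in X^N$, and combined with $\Tr(\hat{H}_{\hat W}\Gamma)=\Tr(\hat{T}\Gamma)+\Tr(\hat{W}\Gamma)$ and the definitions of $F$ and $F_\RDM$ it yields, for every $\rho'\in L^1_\R(\Omega)$,
\[
F(\rho')=\inf\left\{\,\Tr(\hat{H}_{\hat W}\Gamma)\ \middle|\ \Gamma\in\cD^N,\ D(\Pi(\Gamma))=\rho'\,\right\}.
\]

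First I would dispose of the trivial case $\liminf_n F(\rho_n)=+\infty$, where the asserted inequality holds vacuously, and set $L:=\liminf_n F(\rho_n)<+\infty$. Passing to a subsequence I arrange $F(\rho_n)\to L$; since $\rho_n\in D(\cD)$ forces $F(\rho_n)<+\infty$, I may pick $\Gamma_n\in\cD^N$ with $D(\Pi(\Gamma_n))=\rho_n$ and $\Tr(\hat{H}_{\hat W}\Gamma_n)\le F(\rho_n)+1/n$. Each $\Gamma_n$ lies in $\cC^N$ (positive, trace one), so by Lemma~\ref{Lemma: weak-$*$ compact Ds} a further subsequence satisfies $\Gamma_n\xrightharpoonup{*}\Gamma\in\cC^N$. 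Put $\gamma_n:=\Pi(\Gamma_n)$ and $\gamma:=\Pi(\Gamma)$.

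The crux is to prove $D(\gamma)=\rho$. Fix $S\in\Sigma(\Omega)$. The operator $\widehat{P(S)}$ is bounded and nonnegative on $\cH^N$, and by Theorem~\ref{Theorem: Diagonal map}(4) and Lemma~\ref{Lemma: second quantization expectation value},
\[
\int_S\rho_n=\Tr\bigl(P(S)\gamma_n\bigr)=\Tr\bigl(\widehat{P(S)}\Gamma_n\bigr).
\]
Because $\mathbbm{1}_S\in L^\infty_\R(\Omega)=(L^1_\R(\Omega))^*$, the weak $L^1$ convergence gives $\int_S\rho_n\to\int_S\rho$, while Lemma~\ref{Lemma: Trace of unbounded operator is lower semicontinuous} (for the nonnegative $\widehat{P(S)}$ and $\Gamma_n\xrightharpoonup{*}\Gamma$) together with Theorem~\ref{Theorem: Diagonal map}(4) gives
\[
\int_S D(\gamma)=\Tr\bigl(\widehat{P(S)}\Gamma\bigr)\le\liminf_n\Tr\bigl(\widehat{P(S)}\Gamma_n\bigr)=\int_S\rho.
\]
Applying the same bound to $\Omega\setminus S$ and adding, $\int_\Omega D(\gamma)\le\int_\Omega\rho$; but $\int_\Omega D(\gamma)=\Tr(\gamma)=N\Tr(\Gamma)=N$ by Theorem~\ref{Theorem: Diagonal map}(3), and $\int_\Omega\rho=\lim_n\int_\Omega\rho_n=N$, so the inequality is an equality. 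This forces $\int_S D(\gamma)=\int_S\rho$ for \emph{every} $S\in\Sigma(\Omega)$, and since $D(\gamma)$ and $\rho$ are $L^1_\R(\Omega)$ functions with equal integrals over all measurable sets they coincide a.e.; hence $D(\Pi(\Gamma))=\rho$.

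Finally, $\hat{H}_{\hat W}=\hat{T}\dot{+}\hat{W}$ is nonnegative, being the form sum of the nonnegative operators $\hat{T}$ and $\hat{W}$, so Lemma~\ref{Lemma: Trace of unbounded operator is lower semicontinuous} gives
\[
\Tr(\hat{H}_{\hat W}\Gamma)\le\liminf_n\Tr(\hat{H}_{\hat W}\Gamma_n)\le\liminf_n\bigl(F(\rho_n)+1/n\bigr)=L,
\]
which is the claimed estimate; moreover $\Tr(\hat{T}\Gamma)\le\Tr(\hat{H}_{\hat W}\Gamma)\le L<+\infty$ (using $\Tr(\hat{W}\Gamma)\ge0$), so $\Gamma\in X^N$ and therefore $\Gamma\in\cC^N\cap X^N=\cD^N$. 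I expect the crux to be the main obstacle: since the operators $\widehat{P(S)}$ are not compact, weak-$*$ convergence of $\Gamma_n$ only makes $\Gamma\mapsto\Tr(\widehat{P(S)}\Gamma)$ lower semicontinuous, so a priori one obtains only $\int_S D(\gamma)\le\int_S\rho$ for each $S$; it is exactly the matching of the total masses $N=N$ that promotes this to equality for all $S$, whence $D(\gamma)=\rho$ by the rigidity of $L^1$ functions determined by their integrals over all measurable sets.
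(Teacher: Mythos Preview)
Your proof is correct and follows essentially the same route as the paper's: pick near-minimizers $\Gamma_n\in\cD^N$ for $F(\rho_n)$, extract a weak-$*$ limit $\Gamma\in\cC^N$ via Lemma~\ref{Lemma: weak-$*$ compact Ds}, use lower semicontinuity of $\Tr(\hat H_{\hat W}\,\cdot\,)$ for the energy bound, and identify $D(\Pi(\Gamma))=\rho$ by a mass-matching argument. The only cosmetic difference is that you test against indicator functions $\mathbbm{1}_S$ and use the squeeze on $S$ and $\Omega\setminus S$, whereas the paper tests against all $v\in L^\infty(\Omega)^+$ to conclude $\rho-\tilde\rho\ge0$ and then uses $\int(\rho-\tilde\rho)=0$; these are the same argument in slightly different dress.
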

\begin{proof}
Notice that the statement holds trivially if $\liminf F(\rho_n)=+\infty$. Thus, we suppose that $\liminf F(\rho_n)<+\infty$.  Similarly as in the proof of Theorem \ref{Theorem: The best lim inf theorem}, we pick for every $n$ a $\Gamma_n \in \cD^N$ such that $\Pi(\Gamma_n)=\rho_n$ and 
    \[F(\rho_n) \leq \Tr(\hat{H}_{\hat w}\Gamma_n) \leq F(\rho_n) + \frac{1}{n},\]
where $\hat{H}_{\hat w} := \hat{T}\dot{+}\hat{w}$. Since $\|\Gamma_n\|_1 = \Tr(\Gamma_n)=1$, we can extract a weak-$*$ convergent subsequence by the Banach-Alaoglu theorem, converging to some $\Gamma\in \cS_1(\cH^N)$. Since $\hat H_{\hat w}$ is positive, we have by Lemma \ref{Lemma: Trace of unbounded operator is lower semicontinuous} and \ref{Lemma: Trace inequality} that
\begin{equation*}
\|\Gamma\|_{X^N} = \Tr\big((1+\hat{T})\Gamma\big) \leq 1 + \Tr(\hat H_{\hat w}\Gamma) \leq 1 + \liminf_{n\to\infty} \Tr(\hat H_{\hat w}\Gamma_n) = 1 + \liminf_{n\to\infty} F(\rho_n)<\infty,
\end{equation*}
such that $\Gamma\in X^N$. we next show that $\Gamma\mapsto \rho$. Let $\tilde\rho := (D\circ\Pi)(\Gamma)$, let $v\in L^\infty(\Omega)^+$ and denote $V:=D^*(v)\in \cB^+(\cH)$. We have by Lemma \ref{Lemma: Trace of unbounded operator is lower semicontinuous} that 
\[\int_\Omega v \tilde\rho=\Tr(\hat V \Gamma) \leq \liminf_{n\to \infty} \Tr(\hat V \Gamma_n) = \lim_{n\to \infty} \int_\Omega v\rho_n = \int_\Omega v\rho \]
    This means that for any $v\in L^\infty(\Omega)^+$ we have
    \[\int_\Omega v(\rho-\tilde\rho)\geq 0\]
    In particular, this implies that $\rho-\tilde\rho \geq 0.$ But $\int \rho-\tilde\rho=N-N=0$, so $\rho=\tilde\rho$. By Theorem \ref{Theorem: Diagonal map}, we have $\Tr(\Gamma)=\Tr(\Pi(\Gamma)) = \int \rho = N$, and Since $|\Psi\rangle \langle \Psi|$ is a compact operator, we have $\langle \Psi,\Gamma\Psi\rangle =\Tr(|\Psi\rangle\langle \Psi|\Gamma) = \lim_{n\to \infty}\Tr(|\Psi\rangle \langle \Psi|\Gamma_n) = \lim_{n\to \infty}\langle \Psi,\Gamma_n\Psi\rangle \in [0,1]$, so $0\leq \Gamma\leq 1$. Hence $\Gamma\in \cD^N$. 
\end{proof}

\begin{remark}
The preceding proof is general and does not need the Rellich--Kondrachov theorem used in the proof of \cite[Theorem 4.4]{Lieb1983}. The original proof by Lieb is only valid for the special case of the conventional setting $\Omega=\R^3$ and $T=-\nabla^2$. 
\end{remark}

 We infer the following corollary from Theorem \ref{Theorem: The best lim inf theorem for density} in the same way as Corollary \ref{Corollary: FDM is lower semi-continuous.} follows from Theorem \ref{Theorem: The best lim inf theorem}.

\begin{corollary}\label{F is lower semicontinuous}
\leavevmode
\begin{enumerate}
    \item $F$ is lower semicontinuous with respect to the weak topology on $L^1_\R(\Omega)$. 
    \item Given $\rho\in D(\cD)$, there exits a $\Gamma\in X^N$ such that $(D\circ \Pi)(\Gamma)=\rho$ and $\Tr(\hat{H}_{\hat w}\Gamma)=F(\rho)$. 
\end{enumerate}
\end{corollary}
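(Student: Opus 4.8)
The plan is to derive this corollary from Theorem~\ref{Theorem: The best lim inf theorem for density} exactly as Corollary~\ref{Corollary: FDM is lower semi-continuous.} was derived from Theorem~\ref{Theorem: The best lim inf theorem}. For part~1, I would take a sequence $\rho_n \xrightharpoonup{L^1_\R(\Omega)} \rho$ and prove $F(\rho)\le\liminf_{n}F(\rho_n)$. If $\liminf_{n}F(\rho_n)=+\infty$ there is nothing to show, so assume it is finite and pass to a subsequence along which $F(\rho_n)\to\liminf_n F(\rho_n)<+\infty$; discarding finitely many terms, every $\rho_n$ in this subsequence lies in $D(\cD)$, since $F$ is $+\infty$ off $D(\cD)$. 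A weakly convergent sequence in $L^1_\R(\Omega)$ is norm bounded, so $\{\rho_n\}$ is a bounded sequence in $D(\cD)$; its weak limit $\rho$ is nonnegative almost everywhere and has total mass $N$ (test against $v\in L^\infty(\Omega)^+$ and against the constant function $1\in L^\infty(\Omega)$, using $\int\rho_n=N$). The hypotheses of Theorem~\ref{Theorem: The best lim inf theorem for density} are thus in force --- and the requirement $\rho\in\Xi^+$ there is, in any case, recovered \emph{a posteriori}, since the proof of that theorem produces $\rho=(D\circ\Pi)(\Gamma)\in D(X)=\Xi$, so one may simply run that argument directly. It yields a $\Gamma\in\cD^N$ with $(D\circ\Pi)(\Gamma)=\rho$ and $\Tr(\hat H_{\hat W}\Gamma)\le\liminf_n F(\rho_n)$.

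To close part~1, set $\gamma:=\Pi(\Gamma)\in\cD$, so $D\gamma=\rho$ and $\gamma$ is an admissible competitor for $F(\rho)$. Since $\Gamma\mapsto\gamma$ we have $F_\RDM(\gamma)\le\Tr(\hat W\Gamma)$, while $\Tr(T\gamma)=\Tr(\hat T\Gamma)$ by Lemma~\ref{Lemma: second quantization expectation value}, and $\Tr(\hat H_{\hat W}\Gamma)=\Tr(\hat T\Gamma)+\Tr(\hat W\Gamma)$; hence
\[
F(\rho)\ \le\ F_\RDM(\gamma)+\Tr(T\gamma)\ \le\ \Tr(\hat W\Gamma)+\Tr(\hat T\Gamma)\ =\ \Tr(\hat H_{\hat W}\Gamma)\ \le\ \liminf_{n}F(\rho_n),
\]
which is the claimed lower semicontinuity. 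For part~2, I would apply the same reasoning (equivalently, Theorem~\ref{Theorem: The best lim inf theorem for density}) to the constant sequence $\rho_n\equiv\rho\in D(\cD)\subset\Xi^+$, which is trivially bounded and converges weakly to $\rho$: this produces a $\Gamma\in\cD^N\subset X^N$ with $(D\circ\Pi)(\Gamma)=\rho$ and $\Tr(\hat H_{\hat W}\Gamma)\le\liminf_n F(\rho_n)=F(\rho)$. Running the displayed chain of inequalities with this $\Gamma$ gives $F(\rho)\le\Tr(\hat H_{\hat W}\Gamma)$, so $\Tr(\hat H_{\hat W}\Gamma)=F(\rho)$, as required.

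The \textbf{main obstacle} is essentially none, since the genuine analytic work --- weak-$*$ compactness of $\cC^N$, the Fatou-type lower semicontinuity of $\Gamma\mapsto\Tr(\hat H_{\hat W}\Gamma)$, and the recovery of $\Gamma\mapsto\rho$ by testing against $L^\infty(\Omega)^+$ --- is already packaged into Theorem~\ref{Theorem: The best lim inf theorem for density}. What requires a little care is (i) checking that a $\liminf$-realizing subsequence may be taken inside $D(\cD)$ and is norm bounded, so that the theorem applies (or, equivalently, observing that its $\Xi^+$ hypothesis is self-correcting), and (ii) noting that the resulting $\Gamma$ only needs to furnish the \emph{upper} bound $F(\rho)\le F_\RDM(\Pi\Gamma)+\Tr(T\,\Pi\Gamma)\le\Tr(\hat H_{\hat W}\Gamma)$, which is the inequality in the direction actually used, because $\Pi\Gamma$ is an admissible competitor in the definition of $F(\rho)$. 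Since the paper's notion of lower semicontinuity is the sequential one, no additional topological argument is needed in part~1.
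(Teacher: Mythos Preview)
Your proposal is correct and follows essentially the same route as the paper: both parts are deduced directly from Theorem~\ref{Theorem: The best lim inf theorem for density}, with part~2 obtained by applying it to the constant sequence $\rho_n\equiv\rho$. The paper's proof is terser---it simply writes $F(\rho)=\inf_{\Gamma'\mapsto\rho}\Tr(\hat H_{\hat W}\Gamma')\le\Tr(\hat H_{\hat W}\Gamma)\le\liminf_n F(\rho_n)$ without explicitly verifying the hypotheses of the theorem---whereas you spell out the reduction to a subsequence in $D(\cD)$, the norm boundedness, and the chain $F(\rho)\le F_\RDM(\Pi\Gamma)+\Tr(T\,\Pi\Gamma)\le\Tr(\hat H_{\hat W}\Gamma)$; these are exactly the details implicit in the paper's one-line argument.
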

\begin{proof}[Proof of (1):] Let $\rho_n\rightharpoonup \rho$. By Theorem \ref{Theorem: The best lim inf theorem for density}, there is a $\Gamma \mapsto \rho$, and we have \[F(\rho) = \inf_{\Gamma'\mapsto \rho}\Tr(\hat H_{\hat w}\Gamma')  \leq\Tr(\hat H_{\hat w} \Gamma)\leq \lim \inf F(\rho_n),\] which shows that $F$ is weakly lower semi-continuous. 
\textit{Proof of (2):} Let $\rho_n = \rho$. Then $\rho_n\rightharpoonup\rho$ trivially, and so by Theorem \ref{Theorem: The best lim inf theorem for density}, there is a $\Gamma\mapsto \rho$ such that
\[F(\rho)\leq\Tr(\hat H_{\hat w}\Gamma) \leq \lim \inf F(\rho_n)=F(\rho).\]
\end{proof}
Since the $\Xi$-norm is stronger than the $L^1_\R(\Omega)$-norm, it follows that $F$ is also lower semicontinuous with respect to the weak $\Xi$-topology and the $\Xi$-norm. As a consequence, we have that
\begin{theorem}\label{Theorem: Lots of cool properties of FD}
The density functional satisfies the following:
\begin{enumerate}
    \item $F$ is convex,
    \item $F(\rho)$ is finite for all $\rho\in D(\cD)$, 
    \item $F$ is locally Lipschitz in $\ri D(\cD)$,  
    \item $F$ is subdifferentiable in $\ri D(\cD)$. 
\end{enumerate}
\end{theorem}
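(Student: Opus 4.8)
The plan is to derive all four properties from the structural formula
$F(\rho) = \inf\{\,F_\RDM(\gamma) + \Tr(T\gamma) \;\mid\; \gamma\in\cD,\ D\gamma = \rho\,\}$
together with what has already been established about $F_\RDM$ (convex, bounded on $\cD$, norm and weak-$*$ lower semicontinuous by Lemma~\ref{Lemma: FDM is convex} and Corollary~\ref{Corollary: FDM is lower semi-continuous.}) and about $D$ (linear, positive, norm-decreasing). Throughout write $g(\gamma) := F_\RDM(\gamma) + \Tr(T\gamma)$, which is a convex functional on the convex set $\cD$: indeed $F_\RDM$ is convex by Lemma~\ref{Lemma: FDM is convex} and $\Tr(T\,\cdot\,)$ is linear on $X$ by Lemma~\ref{Lemma: Trace inequality}.

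For claim~(1) I would run the standard marginal-function argument: for $\rho_0,\rho_1\in D(\cD)$ and $\lambda\in[0,1]$, pick near-minimizers $\gamma_i\in\cD$ with $D\gamma_i=\rho_i$ and $g(\gamma_i)\le F(\rho_i)+\epsilon$; then $\gamma_\lambda:=(1-\lambda)\gamma_0+\lambda\gamma_1\in\cD$ by convexity of $\cD$, $D\gamma_\lambda=(1-\lambda)\rho_0+\lambda\rho_1$ by linearity of $D$, and $g(\gamma_\lambda)\le(1-\lambda)g(\gamma_0)+\lambda g(\gamma_1)$ by convexity of $g$; letting $\epsilon\to0$ gives convexity of $F$ (the cases with some $\rho_i\notin D(\cD)$ being trivial). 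For claim~(2), the lower bound $F\ge0$ is immediate since $\hat W\ge0$ forces $\Tr(\hat W\Gamma)\ge0$ for $\Gamma\in\cD^N$, hence $F_\RDM\ge0$ on $\cD$, while $T\ge0$ and $\gamma\ge0$ give $\Tr(T\gamma)\ge0$, so $g\ge0$ on $\cD$; for the upper bound, any $\rho\in D(\cD)$ has some $\gamma\in\cD$ with $D\gamma=\rho$, and then $F(\rho)\le F_\RDM(\gamma)+\Tr(T\gamma)\le(\|\hat w\|_{(X^N)^*}+1)\|\gamma\|_X<+\infty$ by Lemma~\ref{Lemma: FDM is convex} and $\Tr(T\gamma)\le\|\gamma\|_X$ (Lemma~\ref{Lemma: Trace inequality}).

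For claims~(3) and~(4) I would invoke the convex-analytic facts already used for $E_\RDM$: a proper convex function that is locally bounded above at a point is locally Lipschitz there, and is subdifferentiable there \cite[5.20, 5.21]{vanTielConvexAnalysis}. The remaining work is to establish local boundedness above of $F$ near every point of $\ri D(\cD)$. Since an explicit bound analogous to the one available for $F_\RDM$ is not at hand when $\cH$ is infinite-dimensional, I would instead extract boundedness from lower semicontinuity: $F$ is lower semicontinuous for the $\Xi$-norm (the strengthening of Corollary~\ref{F is lower semicontinuous} recorded immediately after its proof), so passing to the closed affine subspace $A:=\overline{\Aff}\,D(\cD)\subseteq\Xi$ — itself a Banach space — the sublevel sets $\{\rho\in A : \|\rho-\rho_0\|_\Xi\le r,\ F(\rho)\le n\}$ are closed and, for $r$ small enough that the ball lies in $D(\cD)$, exhaust $\overline{B}_{\|\cdot\|_\Xi}(\rho_0,r)\cap A$; by Baire one of them has nonempty interior in $A$, and a standard convexity/reflection argument then bounds $F$ above on a full neighborhood of $\rho_0$. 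Subdifferentiability (claim~4) follows from claim~(3) by Hahn--Banach separation of the epigraph of $F$ (which has nonempty interior at continuity points), carried out in $A$; alternatively one cites \cite[5.21]{vanTielConvexAnalysis} directly, as was done for $E_\RDM$.

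The main obstacle is claim~(3), and specifically the relative-interior bookkeeping: one must verify that $\ri D(\cD)$ — the interior of $D(\cD)$ inside its affine hull — is contained in the interior of $D(\cD)$ within a \emph{complete} affine subspace, so that Baire's theorem legitimately produces local boundedness of $F$, and that ``locally Lipschitz in $\ri D(\cD)$'' is understood with respect to the metric induced from $\Xi$. Everything else reduces to routine bookkeeping with the definition of $F$ and the convexity, boundedness and lower-semicontinuity properties of $F_\RDM$ and $D$ obtained above.
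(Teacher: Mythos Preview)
Your treatment of (1), (2) and (4) matches the paper's: convexity by the marginal construction (exactly as in Lemma~\ref{Lemma: FDM is convex}), finiteness directly from the definition of $F$ (your explicit bound $F(\rho)\le(\|\hat w\|_{(X^N)^*}+1)\|\gamma\|_X$ is more detailed than the paper's one-liner ``$\dom(F)=D(\cD)$''), and subdifferentiability as a standard convex-analytic consequence of local Lipschitz continuity (the paper cites \cite[5.35]{vanTielConvexAnalysis} rather than separating the epigraph by hand).

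For (3) you take a genuinely different route. The paper does not use lower semicontinuity of $F$ at all here: it argues that $F$ is locally bounded on $\ri D(\cD)$ by pushing forward the local boundedness of $F_\RDM$ on $\ri\cD$ (Lemma~\ref{Lemma: FDM is convex}) through the surjective map $D$, and then invokes \cite[5.21]{vanTielConvexAnalysis}. You instead attempt to manufacture local boundedness from the lower semicontinuity of $F$ (Corollary~\ref{F is lower semicontinuous}) via a Baire-category argument on sublevel sets in the closed affine hull $A=\overline{\Aff}\,D(\cD)$.

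The obstacle you flag is a genuine gap in your argument, not just bookkeeping. Baire needs a complete space, which forces you to pass from $\Aff D(\cD)$ to its closure $A$; but for the sublevel sets $\{F\le n\}$ to exhaust a ball in $A$ about $\rho_0\in\ri D(\cD)$, that $A$-ball must lie inside $D(\cD)$. The hypothesis $\rho_0\in\ri D(\cD)$ only guarantees $B_{\|\cdot\|_\Xi}(\rho_0,r)\cap\Aff D(\cD)\subset D(\cD)$. If $\Aff D(\cD)\subsetneq A$, then since $\Aff D(\cD)$ is dense in $A$ every $A$-ball about $\rho_0$ contains points of $A\setminus\Aff D(\cD)\subset A\setminus D(\cD)$, where $F=+\infty$; the exhaustion fails and Baire yields nothing. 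Your argument therefore hinges on the unproved claim that $\Aff D(\cD)$ is closed in $\Xi$. The paper sidesteps this issue entirely by transporting boundedness directly through the surjective (hence open) map $D:X\to\Xi$ rather than working inside the possibly non-complete affine hull.
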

\begin{proof}
    Point 1: Convexity follows from the definition of $F$ in the same way as for $F_\RDM$ in Lemma \ref{Lemma: FDM is convex}. {Point 2} follows by definition, since $\dom(F)=D(\cD)$. By 5.21 in \cite{vanTielConvexAnalysis}, Point 3 follows from Point 1 if we can show that $f$ is locally bounded at some point in $\ri D(\cD)$. Since $D$ is surjective and $F_\RDM$ is locally bounded in $\ri(\cD)$, it follows that $F$ is locally bounded in $\ri(\cD)$. Point 4 is a consequence of 5.35 in \cite{vanTielConvexAnalysis}.  
\end{proof}
\begin{remark}
    The set $\ri D(\cD)$ might be empty. It turns out that $\ri D(\cD)$ is non-empty in certain special cases, for instance if $\Omega $ is one-dimensional and compact \cite{sutterSolutionVrepresentabilityProblem2024}.
\end{remark}

For $v\in \Xi^*$, the energy can be expressed as
\[E(v) = \inf_{\rho \in D(\cD)}\left[F(\rho) + \int v\rho\right]\]
Since
\[\int v\rho = \Tr(D^*(v)\gamma)\]
for all $\gamma \mapsto \rho$, we can also write
\[E(v)=\inf_{\gamma\in \cD}\left[ F_\RDM(\gamma) + \Tr((T+D^*(v))\gamma)\right] = E_\RDM(D^*(v)).\]
\begin{corollary}\label{Corollary: F equals sup.}
    $F : \Xi \to\R \cup\{+\infty\}$ and $E : \Xi^* \to \R\cup\{-\infty\}$ are conjugate functions in the sense that
    \begin{align}
        E(v) &= \inf_{\rho\in \Xi} \left( F(\rho) + \int_\Omega v\rho \right), \\
        F(\rho) &= \sup_{v \in \Xi^*} \left( E(v) - \int_\Omega v\rho \right). \label{Equation: F density as sup}
    \end{align}
\end{corollary}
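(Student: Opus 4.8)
The plan is to follow the proof of Corollary~\ref{Corollary: FRDM equals sup.} in spirit, exhibiting the right-hand side of \eqref{Equation: F density as sup} as the Fenchel--Moreau biconjugate of $F$. First I would observe that the first identity of the corollary is precisely the formula $E(v)=\inf_{\rho\in D(\cD)}\left(F(\rho)+\int_\Omega v\rho\right)$ derived immediately above: because $F(\rho)=+\infty$ for $\rho\notin D(\cD)=\dom(F)$ and $\int_\Omega v\rho$ is the (finite) $\Xi^*$--$\Xi$ pairing, extending the infimum to all of $\Xi$ leaves it unchanged. I would then introduce the Legendre--Fenchel conjugate $F^*:\Xi^*\to\R\cup\{+\infty\}$, $F^*(w):=\sup_{\rho\in\Xi}\left(\int_\Omega w\rho-F(\rho)\right)$, taken with respect to that pairing; the first identity then states exactly that $E(v)=-F^*(-v)$ for every $v\in\Xi^*$.

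Next I would check that $F$ satisfies the hypotheses of the Fenchel--Moreau theorem. It is convex by Theorem~\ref{Theorem: Lots of cool properties of FD}(1); it is proper, being finite (in particular $>-\infty$) on the nonempty set $D(\cD)=\dom(F)$ by Theorem~\ref{Theorem: Lots of cool properties of FD}(2); and it is lower semicontinuous with respect to $\sigma(\Xi,\Xi^*)$. For the last point, Corollary~\ref{F is lower semicontinuous}(1) gives lower semicontinuity for the weak topology of $L^1_\R(\Omega)$; since the $\Xi$-norm dominates the $L^1_\R(\Omega)$-norm (by point~1 of Theorem~\ref{Theorem: Diagonal map} together with the definition of $\|\cdot\|_\Xi$), $\Xi$-norm convergence implies weak-$L^1_\R(\Omega)$ convergence, so $F$ is $\Xi$-norm lower semicontinuous, hence --- being convex --- also $\sigma(\Xi,\Xi^*)$-lower semicontinuous, as already remarked after Corollary~\ref{F is lower semicontinuous}.

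With these verified, \cite[Theorem 6.15]{vanTielConvexAnalysis} yields $F=F^{**}$, i.e.
\[
F(\rho)=\sup_{v\in\Xi^*}\left(\int_\Omega v\rho-F^*(v)\right)=\sup_{v\in\Xi^*}\left(\int_\Omega v\rho+E(-v)\right),
\]
and the change of variable $v\mapsto -v$ turns the right-hand side into $\sup_{v\in\Xi^*}\left(E(v)-\int_\Omega v\rho\right)$, which is \eqref{Equation: F density as sup}.

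I expect the only point requiring care to be the duality bookkeeping: one must make sure that the bilinear pairing appearing in $E(v)=\inf_\rho(F(\rho)+\int_\Omega v\rho)$ really is the canonical duality between $\Xi^*$ and $\Xi$, so that $F^*$ and $F^{**}$ above are genuinely the conjugate and biconjugate of $F$ with respect to that duality. This is guaranteed by the identities $\int_\Omega v\rho=\Tr(D^*(v)\gamma)$ (for any $\gamma\mapsto\rho$) and $E(v)=E_\RDM(D^*(v))$ recorded just before the corollary, which together present $E$ on $\Xi^*$ as a modified Legendre transform of $F$ on $\Xi$. Everything else is the routine Legendre-duality manipulation already carried out for the pair $(F_\RDM,E_\RDM)$ in Corollary~\ref{Corollary: FRDM equals sup.}.
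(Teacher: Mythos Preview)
Your proposal is correct and follows essentially the same approach as the paper, which simply refers back to the proof of Corollary~\ref{Corollary: FRDM equals sup.} and to Lieb~\cite[(3.17), (3.20)]{Lieb1983}. You have spelled out the Fenchel--Moreau argument in detail (convexity, properness, lower semicontinuity of $F$, then $F=F^{**}$ and the change of variable $v\mapsto -v$), which is exactly what the paper's terse proof is pointing to.
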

\begin{proof}
    This is essentially the same as Corollary \ref{Corollary: FRDM equals sup.}. Cf. \cite[equations (3.17) and (3.20)]{Lieb1983}.
\end{proof}

\subsection{Conventional setting for DFT}
Recall the conventional setting for DFT described in the beginning of section \ref{Section: Relation to DFT}. In this setting, we have $\cH=L^2(\R^3\times \Z_2)$, $\Omega =\R^3$ and $T=-\nabla^2$, and $P$ mapping measurable $S\subset \R^3$ to the corresponding indicator function $\mathbbm{1}_S$. The dual $D^*$ of $D$ maps a potential $v\in \Xi^*$ to the corresponding possibly unbounded operator on $\cH$ given by pointwise multiplication.

\begin{lemma}\label{Lemma: D maps X+ surjectively onto Xi+}
$D$ maps $\cD$ surjectively onto $\{\rho\in \Xi^+ | \int \rho = N\}$, i.e.
\[ D(\cD) = \left\{\rho\in \Xi^+ \middle| \int \rho = N\right\}.\]
\end{lemma}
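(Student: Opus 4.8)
The plan is to establish the two inclusions separately. The inclusion $D(\cD)\subseteq\{\rho\in\Xi^+\mid\int\rho=N\}$ is immediate from the properties of the diagonal map: if $\gamma\in\cD$, then $\gamma\geq0$ forces $D(\gamma)\geq0$ almost everywhere by point 2 of Theorem \ref{Theorem: Diagonal map}, while $D(\gamma)\in D(X)=\Xi$ and $\int_\Omega D(\gamma)=\Tr(\gamma)=N$ by point 3 of that theorem; hence $D(\gamma)\in\Xi^+$ with total mass $N$. The work is in the reverse inclusion, so fix $\rho\in\Xi^+$ with $\int_{\R^3}\rho=N$ and aim to produce $\gamma\in\cD$ with $D(\gamma)=\rho$.

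First I would extract regularity of $\rho$. By Lemma \ref{Lemma: First Xi inequality} we have $\|\rho\|_1+\|\nabla\sqrt\rho\|_2^2\leq\|\rho\|_\Xi<+\infty$, so $\sqrt\rho\in H^1(\R^3)$ (and, via the Sobolev embedding $H^1(\R^3)\hookrightarrow L^6(\R^3)$, also $\rho\in L^1(\R^3)\cap L^3(\R^3)$). This places $\rho$ in exactly the class covered by Lieb's construction: by Theorem 1.2 of \cite{Lieb1983} there is a normalized antisymmetric wavefunction $\Psi\in\cH^N=\bigwedge^N L^2(\R^3\times\Z_2)$ with finite kinetic energy, $\|\hat T^{1/2}\Psi\|^2<+\infty$, whose associated single-particle density is $\rho$; in particular $\langle\Psi,\widehat{P(S)}\,\Psi\rangle=\int_S\rho$ for every $S\in\Sigma(\R^3)$, since $\widehat{P(S)}$ is the second quantization of the projection $P(S)$.

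Next I would package this into density-matrix language and descend to the one-particle level. Set $\Gamma:=|\Psi\rangle\langle\Psi|$. This is a rank-one projection with $\Tr\Gamma=\|\Psi\|^2=1$, hence $\Gamma\in\cC^N$, and $\Tr(\hat T\Gamma)=\|\hat T^{1/2}\Psi\|^2<+\infty$ gives $\Gamma\in X^N$, so $\Gamma\in\cD^N$. Put $\gamma:=\Pi(\Gamma)$; by Theorem \ref{Theorem: Tr_{-1} is surjective} (point 2) we get $\gamma\in\cD$. It remains to identify $D(\gamma)$. For any $S\in\Sigma(\R^3)$, Lemma \ref{Lemma: second quantization expectation value} applied with $V=P(S)$ and $A=\Gamma$ yields
\[
\Tr(P(S)\gamma)=\Tr\big(\widehat{P(S)}\,\Gamma\big)=\langle\Psi,\widehat{P(S)}\,\Psi\rangle=\int_S\rho .
\]
Combining this with point 4 of Theorem \ref{Theorem: Diagonal map} gives $\int_S D(\gamma)=\int_S\rho$ for every measurable $S$, so $D(\gamma)=\rho$ almost everywhere. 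Thus $\rho\in D(\cD)$, which completes the reverse inclusion and hence the proof.

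I expect the only genuinely nontrivial step to be the appeal to Theorem 1.2 of \cite{Lieb1983}: the essential input is the Harriman--Lieb construction of a finite-kinetic-energy wavefunction realizing a prescribed density under only the constraints $\sqrt\rho\in H^1(\R^3)$ and $\int\rho=N$. Everything else is routine once the earlier results of this article are available, namely $N$-representability through $\Pi$ (Theorem \ref{Theorem: Tr_{-1} is surjective}), the adjointness of second quantization and the partial trace (Lemma \ref{Lemma: second quantization expectation value}), and the characterization of $D$ (Theorem \ref{Theorem: Diagonal map}). A minor point that must be handled carefully is the compatibility of the various notions of ``density''---that the usual $N$-particle density of $\Psi$, the diagonal $D(\Pi(|\Psi\rangle\langle\Psi|))$, and the Radon--Nikodym derivative of $S\mapsto\Tr(P(S)\gamma)$ all agree---which is exactly what the displayed chain of equalities records.
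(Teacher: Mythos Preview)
Your proof is correct but takes a genuinely different route from the paper's. The paper works entirely at the one-particle level: given $\rho$, it writes down explicitly the rank-two kernel $\gamma(x,\sigma;y,\sigma')=\tfrac{1}{2}\sqrt{\rho(x)}\sqrt{\rho(y)}\,\delta_{\sigma,\sigma'}$, checks $D(\gamma)=\rho$, and uses Lemma~\ref{Lemma: First Xi inequality} to bound $\Tr((1+T)\gamma)=\int(\rho+(\nabla\sqrt\rho)^2)\leq\|\rho\|_\Xi$. You instead pass through the $N$-particle level, invoking the Harriman--Lieb construction (Theorem~1.2 of \cite{Lieb1983}) to obtain a Slater-type $\Psi\in Q(\hat T)$ with density $\rho$, then descend via $\gamma=\Pi(|\Psi\rangle\langle\Psi|)$ and Theorem~\ref{Theorem: Tr_{-1} is surjective}. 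Your route is heavier in that it imports an external construction, but it buys something the paper's argument does not: because $\gamma$ arises as the reduced density matrix of a genuine $N$-fermion state, the Pauli constraint $\gamma\leq 1$ required for membership in $\cD=\cC\cap X$ is automatic. The paper's explicit $\gamma$, by contrast, has operator norm $N/2$, so as written it lies in $\cD$ only for $N\leq 2$; your argument works for all $N$.
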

\begin{proof}
By Remark \ref{Remark: D maps B+ surjectively ont L1+}, we know that $D(X^+)=\Xi^+$, and so it is clear that ${D(\cD)\subset\{\rho\in \Xi^+|\int \rho=N\}}$. 
For the opposite inclusion, suppose that $\rho \in \Xi^+$ satisfies $\int \rho=N$. By \cite[Theorem 1.2]{Lieb1983}, there exists a $\Psi\mapsto \rho$ with $|\Psi\rangle\langle \Psi|\in \cD^N$. In particular, $\Pi(|\Psi\rangle \langle \Psi|)$ is an element of $\cD$ and maps to $\rho$ under $D$. 
\end{proof}

\begin{lemma}\label{Lemma: Xi-inequalities}
   The $\Xi$-norm satisfies    \begin{equation}\label{Equation: norm inequality}
 \|\rho\|_1 + 3(\pi/2)^{4/3}\|\rho\|_3 \leq\|\rho\|_\Xi\leq   \|\rho\|_1 + (4\pi)^2N\left\|\nabla\sqrt{|\rho|}\right\|_2^2
   \end{equation} 
for all $\rho\in\Xi$. 
\end{lemma}
\begin{proof}
We first prove the second inequality in \eqref{Equation: norm inequality}. Suppose first that $\rho\in \Xi^+$, and let $\tilde\rho= N \rho/\|\rho\|_1$. Following Lieb's proof of Theorem 1.2 in \cite{Lieb1983}, there exists a $\Psi\in Q(\hat T)$ satisfying $\|\Psi\|_1^2=N$ such that $|\Psi\rangle \langle \Psi|$ maps to $\tilde \rho$, and $\|\hat T^{1/2}\Psi\|^2\leq (4\pi)^2N^2\left\|\nabla \sqrt{\tilde \rho}\right\|_2^2$. We thus have 
\[\frac{N}{\|\rho\|_1}\|\rho\|_\Xi = \|\tilde \rho\|_\Xi\leq \inf \{\Tr((1+\hat T)\Gamma) | \Gamma \mapsto \rho, \Gamma\geq 0\} \leq \|\Psi\|^2 + \|\hat T^{1/2}\Psi\|^2\] \[\leq 1 +(4\pi)^2N^2\|\nabla \sqrt{\tilde\rho}\|_2^2 = N +\frac{(4\pi)^2N^2}{\|\rho\|_1}\|\nabla \sqrt{\rho}\|_2^2.\]
By multiplying with $\|\rho\|_1/N$, we get
\[\|\rho\|_\Xi \leq \|\rho\|_1 +(4\pi)^2N\|\nabla \sqrt{\rho}\|_2^2.\]

Next, let $\rho\in \Xi$ be a general density, and decompose $\rho = \rho^+ - \rho^-$, where $\rho^+$ and $\rho^-$ are positive and have disjoint domains. By the triangle inequality, we get 
\[\|\rho \|_\Xi = \|\rho^+-\rho^-\|_\Xi\leq \|\rho^+\|_1 +\|\rho^-\|_1 +(4\pi)^2N\left(\|\nabla\sqrt{\rho^+}\|_2^2+\|\nabla\sqrt{\rho^-}\|_2^2\right)\]
\[=\|\rho\|_1 + (4\pi)^2N\left\|\nabla\sqrt{|\rho|}\right\|_2^2. \] The last equality follows from the fact that $|\rho| = \rho^+ + \rho^-$ and that $\rho^+$ and $\rho^-$ have disjoint domains. 

Next, we show the first inequality in \eqref{Equation: norm inequality}. For every $\gamma\in X$ such that $D\gamma=\rho$, pick a decomposition $\gamma= \gamma^+ - \gamma^-$, and denote 
\[\gamma^+ = \sum_n \lambda_n^+|\psi^+_n\rangle\langle \psi^+_n|, \text{ and } \gamma^- = \sum_n\lambda_n^- |\psi^-_n\rangle\langle \psi^-_n|\] their spectral decompositions. By the Sobolev inequality \cite[Eq. (1.10)]{Lieb1983} , we have 
\[  \|T^{1/2}\psi_n^\pm\|^2_2\geq 3(\pi/2)^{4/3}\|\rho_n^\pm\|_3,\]
for every $n$, where $\rho_n(x) = D(|\psi_n^\pm\rangle\langle \psi^\pm_n|)(x) = \sum_\sigma |\psi_n^\pm|^2(x,\sigma)$. Using the fact that \[ \rho =  \sum_n\lambda_n^+\rho_n^+-\lambda_n^-\rho_n^-,\] and $\|\rho_n^\pm\|_1=1$, it follows that
\begin{align*}
    \Tr((1+T)\gamma^+ ) + \Tr((1+T)\gamma^-) &= \sum_n \lambda_n^+ (1+\|T^{1/2}\psi^+_n\|^2) + \lambda_n^- (1+\|T^{1/2}\psi^-_n\|^2) \\
    &\geq \sum_n \lambda_n^+ +\lambda_n^- + 3(\pi/2)^{4/3}\sum_n\|\lambda_n^+\rho_n^+\|_3+\|\lambda_n^-\rho_n^-\|_3\\
    &\geq \bigg\| \sum_n \lambda_n^+\rho_n^+ -\lambda_n^-\rho_n^-\bigg\|_1 +3(\pi/2)^{4/3}\bigg\|\sum_n\lambda_n^+\rho_n^+-\lambda_n^-\rho_n^-\bigg\|_3 \\
    & = \|\rho\|_1 +3(\pi/2)^{4/3}\|\rho\|_3,
\end{align*} 
and hence 
\[\Tr((1+T)\gamma^+ ) + \Tr((1+T)\gamma^-) \geq \|\rho\|_1 + 3(\pi/2)^{4/3}\|\rho\|_3.\] Taking the infimum over decompositions $\gamma = \gamma^+ - \gamma^-$, we get the desired inequality. 
\end{proof}

\begin{lemma} \label{Lemma: First Xi inequality}
    Let $\rho\in \Xi^+$. Then, we have 
    \[\|\rho\|_1 + \left\|\nabla \sqrt{\rho}\right\|_2^2\leq \Tr((1+T)\gamma)\] for all $\gamma\in X^+$ satisfying $D(\gamma)=\rho$. 
\end{lemma}
\begin{proof}
By Theorem 1.1 in \cite{Lieb1983}, we have
$\left\|\nabla\sqrt{\rho}\right\|_2^2\leq \|\hat T^{1/2}\Psi\|^2 = \Tr(\hat T|\Psi\rangle\langle \Psi|)$ for any $\Psi$ that maps to $\rho$. Since any positive $\Gamma$ is a convex combination of pure states, we get
\[\|\rho\|_1 + \left\|\nabla\sqrt{\rho}\right\|_2^2\leq \Tr((1+\hat T)\Gamma)= \|\Gamma\|_{X^N}\] for any positive $\Gamma$ which maps to $\rho$. Since $\Pi : X^N\to X$ is positive and surjective, we get that 
\[\|\rho\|_1 +\left\|\nabla\sqrt{\rho}\right\|_2^2\leq \Tr((1+ T)\gamma) \] for any positive $\gamma\mapsto \rho$. 
\end{proof}

Next, we introduce the norm 
\begin{equation}
    \begin{split}
    \|\rho\|_{\Xi}':&= \inf \left\{\int(\rho^++\rho^-)+\int (\nabla \sqrt{\rho^+})^2 +\int (\nabla \sqrt{\rho^-})^2 \;\middle|\; \rho = \rho^+ - \rho^-, \quad \rho^\pm\in\Xi^+\right\} \\
& = \inf \left\{ \|\sqrt{\rho_+}\|_{H^1}^2 + \|\sqrt{\rho_-}\|_{H^1}^2 \;\middle|\; \rho = \rho^+ - \rho^-, \quad \rho^\pm\in\Xi^+\right\},
    \end{split}
\end{equation}
where $\|\cdot\|_{H^1}$ denotes the Sobolev norm $\|f\|_{H^1}^2 := \int f^2 + \int (\nabla f)^2.$ Lemma \ref{Lemma: Xi-inequalities} then implies that 
\begin{corollary}\label{Corollary: Equivalent norms}
    $\|\cdot\|_{\Xi}'$ defines a norm on $\Xi$ which is equivalent to $\|\cdot\|_{\Xi}$.
\end{corollary}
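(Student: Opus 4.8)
The plan is to read the two comparison inequalities $\|\rho\|'_\Xi\le\|\rho\|_\Xi$ and $\|\rho\|_\Xi\le(4\pi)^2N\,\|\rho\|'_\Xi$ directly off the two-sided estimate \eqref{Equation: square root norm inequality} of Lemma~\ref{Lemma: Xi-inequalities}, and then to check the norm axioms for $\|\cdot\|'_\Xi$ by hand; since $\|\cdot\|_\Xi$ is already a complete norm (Theorem~\ref{Theorem: Xi is a Banach space}), the equivalence automatically makes $\|\cdot\|'_\Xi$ complete as well.

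\emph{Equivalence.} For the bound $\|\rho\|'_\Xi\le\|\rho\|_\Xi$ I would start from a density matrix $\gamma\in X$ with $D\gamma=\rho$ and use Lemma~\ref{Corollary: Unique decomposition} to split $\gamma=\gamma^+-\gamma^-$ with $\gamma^\pm\in X^+$ and $\|\gamma^+\|_X+\|\gamma^-\|_X=\|\gamma\|_X$. Setting $\rho^\pm:=D\gamma^\pm$, one has $\rho^\pm\in\Xi^+$ by Lemma~\ref{Lemma: Positive xi norm} and $\rho=\rho^+-\rho^-$ by linearity of $D$, so $\{\rho^+,\rho^-\}$ is an admissible competitor in the infimum defining $\|\rho\|'_\Xi$, and Lemma~\ref{Lemma: First Xi inequality} together with \eqref{Equation: Xi norm inequality} yields
\[
  \|\sqrt{\rho^+}\|_{H^1}^2+\|\sqrt{\rho^-}\|_{H^1}^2\le\|\rho^+\|_\Xi+\|\rho^-\|_\Xi\le\|\gamma^+\|_X+\|\gamma^-\|_X=\|\gamma\|_X;
\]
taking the infimum over $\gamma\in X$ with $D\gamma=\rho$ gives $\|\rho\|'_\Xi\le\|\rho\|_\Xi$, which in particular shows $\|\cdot\|'_\Xi$ is finite on $\Xi$. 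For the reverse bound I would take an arbitrary admissible decomposition $\rho=\rho^+-\rho^-$ with $\rho^\pm\in\Xi^+$ and apply the right-hand inequality of \eqref{Equation: square root norm inequality} to the positive densities $\rho^\pm$ to get $\|\rho^\pm\|_\Xi\le\|\rho^\pm\|_1+(4\pi)^2N\|\nabla\sqrt{\rho^\pm}\|_2^2\le(4\pi)^2N\|\sqrt{\rho^\pm}\|_{H^1}^2$ (using $(4\pi)^2N\ge1$); the triangle inequality for the norm $\|\cdot\|_\Xi$ then gives $\|\rho\|_\Xi\le\|\rho^+\|_\Xi+\|\rho^-\|_\Xi\le(4\pi)^2N\bigl(\|\sqrt{\rho^+}\|_{H^1}^2+\|\sqrt{\rho^-}\|_{H^1}^2\bigr)$, and the infimum over decompositions yields $\|\rho\|_\Xi\le(4\pi)^2N\|\rho\|'_\Xi$.

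\emph{Norm axioms for $\|\cdot\|'_\Xi$.} Absolute homogeneity I would obtain by scaling a competing decomposition, using $\|\sqrt{\lambda\sigma}\|_{H^1}^2=\lambda\|\sqrt{\sigma}\|_{H^1}^2$ for $\lambda\ge0$ and exchanging $\rho^+\leftrightarrow\rho^-$ when the scalar is negative; definiteness is immediate from the equivalence just proved, since $\|\rho\|'_\Xi=0$ forces $\|\rho\|_\Xi=0$ and hence $\rho=0$. For the triangle inequality the key observation to record is that the functional $\sigma\mapsto\|\sqrt{\sigma}\|_{H^1}^2=\|\sigma\|_1+\tfrac14\int_\Omega|\nabla\sigma|^2/\sigma$ is convex and positively $1$-homogeneous on the convex cone $\Xi^+$ — the $L^1$ term is linear there, and the Fisher-information term is the jointly convex perspective $(a,t)\mapsto|a|^2/t$ pre-composed with the linear map $\sigma\mapsto(\nabla\sigma,\sigma)$ and then integrated — hence subadditive: $\|\sqrt{\sigma_1+\sigma_2}\|_{H^1}^2\le\|\sqrt{\sigma_1}\|_{H^1}^2+\|\sqrt{\sigma_2}\|_{H^1}^2$ for all $\sigma_1,\sigma_2\in\Xi^+$. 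Then, given $\rho_1,\rho_2\in\Xi$ and $\epsilon>0$, I would take $\epsilon$-optimal decompositions $\rho_i=\rho_i^+-\rho_i^-$ with $\rho_i^\pm\in\Xi^+$, note that $(\rho_1^++\rho_2^+)-(\rho_1^-+\rho_2^-)$ is an admissible decomposition of $\rho_1+\rho_2$ with both parts in $\Xi^+$, and apply the subadditivity to each of the two summands to get $\|\rho_1+\rho_2\|'_\Xi\le\|\rho_1\|'_\Xi+\|\rho_2\|'_\Xi+2\epsilon$; letting $\epsilon\downarrow0$ finishes this, and completeness of $(\Xi,\|\cdot\|'_\Xi)$ follows from Theorem~\ref{Theorem: Xi is a Banach space} and the equivalence.

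\emph{Main obstacle.} The only genuinely non-routine ingredient is the convexity (equivalently subadditivity) of the Fisher-information functional $\sigma\mapsto\|\nabla\sqrt{\sigma}\|_2^2$ on $\Xi^+$ used in the triangle inequality; this is classical, but a self-contained alternative is to prove the pointwise bound $|\nabla\sqrt{\sigma_1+\sigma_2}|^2\le|\nabla\sqrt{\sigma_1}|^2+|\nabla\sqrt{\sigma_2}|^2$ almost everywhere from the elementary inequality $|u+v|^2/(s+t)\le|u|^2/s+|v|^2/t$ (with $u=\nabla\sigma_1$, $v=\nabla\sigma_2$, $s=\sigma_1$, $t=\sigma_2$, and using $\nabla\sigma_i=0$ a.e.\ on $\{\sigma_i=0\}$) and integrate. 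Everything else is bookkeeping with \eqref{Equation: square root norm inequality}, \eqref{Equation: Xi norm inequality}, and the Jordan-type decomposition of elements of $X$.
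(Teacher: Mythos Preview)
Your proof is correct and follows essentially the same approach as the paper: both derive the two-sided equivalence from the estimates in Lemma~\ref{Lemma: Xi-inequalities} and prove the triangle inequality via the subadditivity $\|\nabla\sqrt{\sigma_1+\sigma_2}\|_2^2\le\|\nabla\sqrt{\sigma_1}\|_2^2+\|\nabla\sqrt{\sigma_2}\|_2^2$ on $\Xi^+$. The only minor difference is that for $\|\rho\|'_\Xi\le\|\rho\|_\Xi$ the paper uses the disjoint-support decomposition of $\rho$ together with the left inequality of \eqref{Equation: square root norm inequality} directly, whereas you route through a density-matrix decomposition $\gamma=\gamma^+-\gamma^-$ and Lemma~\ref{Lemma: First Xi inequality}; both arguments are valid and yield the same bound.
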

\begin{proof} 
We first show that $\|\cdot\|_\Xi'$ is indeed a norm. If $\rho\in X$ and $\lambda\in \R$, then $\|\lambda\rho\|_\Xi' = |\lambda|\|\rho\|_\Xi'$ for the same reason as for $\|\lambda\gamma\|_X=|\lambda|\|\gamma\|_X$ in the proof of Lemma \ref{Lemma: The X-norm}. It remains to demonstrate the triangle inequality. Let $\rho,\tilde\rho\in X$. Then, \begin{align*}\|\rho + \tilde\rho\|_{\Xi}':=& \inf \left\{\|\sqrt{\rho_+}\|_{H^1}^2 + \|\sqrt{\rho_-}\|_{H^1}^2 \middle| \rho +\tilde\rho= \rho^+ - \rho^-, \quad \rho^\pm\in\Xi^+\right\} \\
    \leq& \inf \bigg\{\|\sqrt{\rho_++\tilde\rho^+}\|_{H^1}^2 + \|\sqrt{\rho_-+\tilde\rho^-}\|_{H^1}^2 \; \bigg|\; \rho = \rho^+ - \rho^- ;\quad\tilde\rho = \tilde\rho^+ + \tilde\rho^- ;\quad \rho^\pm, \tilde\rho^\pm\in\Xi^+\bigg\}\\
    \leq& \inf \bigg\{\|\sqrt{\rho_+}\|_{H^1}^2 + \|\sqrt{\rho_-}\|_{H^1}^2 +\|\sqrt{\tilde\rho_+}\|_{H^1}^2 + \|\sqrt{\tilde\rho_-}\|_{H^1}^2 \;\bigg|\; \rho = \rho^+ - \rho^-; \\ &\hspace{8.5cm} 
    \tilde\rho = \tilde\rho^+ -\tilde\rho^-; \quad \rho^\pm, \tilde\rho^\pm\in\Xi^+\bigg\}\\ =& \|\rho\|_\Xi' + \|\tilde\rho\|_\Xi'.\end{align*}
The first inequality above comes from the fact that the infimum is taken over a smaller set, and the second inequality comes from the fact that 
\[\int (\nabla \sqrt{f+g})^2 \leq \int (\nabla\sqrt{f})^2 + \int (\nabla\sqrt{g})^2\] for any two positive functions $f$ and $g$, see \cite[pg. 256]{Lieb1983}.

We next show that $\|\cdot\|_\Xi'$ is equivalent to $\|\cdot\|_\Xi$. 
Let $\rho\in \Xi$, and pick a decomposition $\rho = \rho^+-\rho^-$, where $\rho^\pm$ are positive. Lemma \ref{Lemma: First Xi inequality} says that for any positive $\gamma^\pm\mapsto \rho^ \pm$, we have
\[\|\sqrt{\rho_+}\|_{H^1}^2 + \|\sqrt{\rho_-}\|_{H^1}^2\leq \Tr((1+T)\gamma^+) + \Tr((1+T)\gamma^-).\]
Thus, 
\[\|\sqrt{\rho_+}\|_{H^1}^2 + \|\sqrt{\rho_-}\|_{H^1}^2\leq \inf_{\gamma^+\in X^+, \; D(\gamma+)=\rho^+}\Tr((1+T)\gamma^+) + \inf_{\gamma^-\in X^+, \; D(\gamma^-)=\rho^-}\Tr((1+T)\gamma^-).\]
By Lemma \ref{Lemma: Xi norm equals infimum over decompositions.}, 
we get by taking the infimum over decompositions $\rho^\pm$ with $\rho= \rho^+-\rho^-$ that $\|\rho\|_\Xi'\leq \|\rho\|_\Xi$. 


Likewise, the second inequality in \eqref{Equation: norm inequality} implies that for any decomposition $\rho = \rho^+ - \rho^-$, with $\rho^\pm \geq 0$, we have
\[\|\rho\|_\Xi \leq \|\rho^+\|_\Xi+\|\rho^-\|_\Xi \leq (4\pi)^2 N\left( \|\sqrt{\rho_+}\|_{H^1}^2 + \|\sqrt{\rho_-}\|_{H^1}^2 \right).\] Taking the infimum over all decompositions, we get 
$\|\rho\|_\Xi\leq (4\pi)^2 N \|\rho\|_\Xi'$. 
\end{proof}
Corollary \ref{Corollary: Equivalent norms} means that $\Xi$ is a Banach space with respect to $\|\cdot\|_{\Xi}'$. Furthermore, Lemma \ref{Lemma: Xi-inequalities} implies that 
\begin{equation}\label{Equation: Characterization of Xi}
   \Xi \subset L^1_\R(\R^3)\cap L^3_\R(\R^3).
\end{equation}
and 
\[\Xi^+ = \left\{\rho\in L^1_\R(\R^3)\middle| \rho\geq 0,\; \nabla \sqrt{\rho}\in L^2(\R^3)\right\}.\]

The natural norm $\|\cdot\|_\Xi'$ is a refinement of the $L^1_\R(\R^3)\cap L^3_\R(\R^3)$-norm considered by Lieb in his analysis \cite{Lieb1983}. 

\begin{remark} While one might be tempted to think of $\|\rho\|_1 + \|\nabla \sqrt{|\rho|}\|_ 2^2$ as the correct norm for $\Xi$, this does not satisfy the triangle inequality, and is thus not a norm. For example, pick $\epsilon, \rho\in \Xi$, let $\rho_1 = \rho $ and $\rho_2 = \rho + \epsilon,$ and consider
\[(\nabla \sqrt{|\rho_1 - \rho_2|})^2= \frac{1}{2}\frac{(\nabla \epsilon)^2}{|\epsilon|}\]
while
\[(\nabla \sqrt{|\rho_1|})^2 + (\nabla\sqrt{|\rho_2|})^2=\frac{1}{2}\frac{(\nabla\rho)^2}{|\rho|} + \frac{1}{2}\frac{(\nabla(\rho+\epsilon))^2}{|\rho+\epsilon|}.\]
It is then not hard to find an $\epsilon $ such that 

\begin{align*}
    \int \left(\nabla\sqrt{|\rho_1-\rho_2|}\right)^2=\int  \frac{1}{2}\frac{ (\nabla \epsilon)^2}{|\epsilon|} > &\frac{1}{2} \int\left(\frac{(\nabla\rho)^2}{|\rho|} + \frac{(\nabla(\rho+\epsilon))^2}{|\rho+\epsilon|}\right) \\ =& \int \left(\nabla\sqrt{|\rho_1|}\right)^2+\int \left(\nabla\sqrt{|\rho_2|}\right)^2
\end{align*}
implying that triangle inequality does not hold. \end{remark}

Combining Lemma \ref{Lemma: D maps X+ surjectively onto Xi+} with \ref{Lemma: Xi-inequalities}, we see that 
\[D(\cD)= \left\{\rho\in L^1_\R(\R^3) \middle|\rho\geq 0 \text{ a.e.}, \int \rho = N, \int(\nabla \sqrt{\rho})^2 <+\infty\right\}.\]
This is exactly the domain $\mathscr{I}_N$ for the density functional defined by Lieb \cite{Lieb1983}. This means that our $F$ is exactly the functional considered by Lieb, which is also called Levy-Lieb functional \cite{lewin_universal_2023}. The set $D(\cD)$ is convex and closed in the $\Xi$-topology.

Since $\|D(\gamma)\|_\Xi\leq \|\gamma\|_X$ for all $\gamma\in X$, we have $\|D\|_{\cB(X,\Xi)} \leq 1$, so $D$ is bounded and surjective. This means that the adjoint $D^*:\Xi^* \to X^* $ is bounded and injective. Since $X/{\ker D}\cong \Xi$, the dual map $D^*$ is in fact an isometric isomorphism of $\Xi^*$ onto the annihilator of $\ker D$ \cite[Theorem 4.9]{Rudin1991},
\[\Xi^*\cong \ker D^\perp = \{v\in X^*\; |\;  v(\gamma)=0 \text{ for all $\gamma\in \ker D$}\}.\]
These are all the elements of $X^*$ that factor through $D$:
\[\Xi^* \cong\{ v\in X^*\; |\; \exists \tilde v: X\to \R: \; v=\tilde v\circ D\}.\]

By the inclusion in \eqref{Equation: Characterization of Xi}, the dual space satisfies 
\[L^\infty_\R(\R^3) + L^{3/2}_\R(\R^3) \subset \Xi^*,\]
 since $L^\infty_\R(\R^3) + L^{3/2}_\R(\R^3)$ is the dual space of $ L^1_\R(\R^3)\cap L^3_\R(\R^3)$, and is commonly called the Kato-Rellich class of potentials \cite[Definition 8.2]{Schmudgen2012}. The space $\Xi^*$ thus contains the Kato-Rellich class. However, only those $v\in \Xi^* $ which satisfy $D^*(v)\in \cR$ yield finite ground state energies, $E_\mathrm{D}(v)>-\infty$ and are thus valid as physical potentials. Since $D^*$ is bounded, the set $\cR^\Xi:=(D^*)^{-1}(\cR)$ is open. This is the set of all $v\in X^*$ with $-\nabla^2$-bound $<1$, i.e. there exist $0<a<1$ and $b\geq 0$ such that 
 \[\int vf^2 \leq a\int (\nabla f)^2 + b\int f^2\] for all $f\in L^2_\R(\R^3)$. 

 The set of infinitesimally $-\nabla^2$-bounded potentials $\cR_\epsilon^\Xi = (D^*)^{-1}(\cR_\epsilon)$ consists of all those potentials $v$ which have $-\nabla^2 $-bound $\epsilon $ for all $\epsilon>0$. It is well-known that 
\[L^\infty_\R(\R^3) + L^{3/2}_\R(\R^3) \subset (D^*)^{-1}(\cR_\epsilon),\]
i.e. that any Kato-Rellich potential has infinitesimal $-\nabla^2$-bound (see e.g. \cite{Simon1971} or \cite[Section 8.3]{Schmudgen2012}). This is essentially a consequence of Hölder's inequality $\int v\rho\leq \|v\|_{3/2}\|\rho\|_3$ and the Soblolev inequality $\|\rho\|_3\leq C\int (\nabla\sqrt{\rho})^2$. The more general Rollnik class of potentials is also contained in $\cR_\epsilon^\Xi$ \cite[Section 10.7]{Schmudgen2012} and we have the sequence of inclusions
\[\{\text{Kato-Rellich class}\}\subset\{\text{Rollnik class}\}\subset\cR_\epsilon^\Xi\subset \cR^\Xi \subset \Xi^*.\]
Since $\cR_\epsilon$ is closed in $X^*$ by Lemma \ref{Lemma: R inf is closed}, $\cR_\epsilon^\Xi$ is also closed and hence a Banach space in the $\Xi^*$-norm. In light of remark \ref{Remark: Positive norm conjecture}, we conjecture that the $\Xi^*$-norm is equivalent to 
\[\|v\|_{\Xi^*}':= \sup_{\rho\in \Xi^+} \frac{\int |v|\rho}{\int \rho + \int (\nabla\sqrt\rho)^2},\]
which would be a more convenient formulation of the dual norm.



\section[Appendix: Proof of (3.3)]{Appendix: Proof of \eqref{Equation: Super basic trace inequality}} \label{Appendix: Basic Trace Inequality}

Here, we provide a proof of \eqref{Equation: Super basic trace inequality}. Let $\varphi,\psi\in \cH$, and suppose that $\|\varphi\|=\|\psi\|=1$. If $\varphi=\psi$, then the inequality holds trivially. Suppose that $\varphi\neq\psi$, let $\{e_1,e_2\}$ denote an orthonormal basis for $\mathrm{span}_\C\{\varphi,\psi\}$, and assume without loss of generality that $e_1=\varphi$. We can then represent the operators $|\varphi\rangle\langle\varphi|$ and $|\psi\rangle\langle \psi|$ in this basis by
\[|\varphi\rangle\langle \varphi|=\begin{pmatrix}
    1 & 0 \\ 
    0 & 0
\end{pmatrix} \text{ and } |\psi\rangle\langle \psi|=\begin{pmatrix}
    |a|^2 & a\overline{b} \\
    b\overline{a} & |b|^2
\end{pmatrix},\] where $a= \langle \varphi,\psi\rangle$ and $b=\langle e_2,\psi\rangle $. We can find the value of the left hand side of \eqref{Equation: Super basic trace inequality} by calculating the eigenvalues of $|\varphi\rangle\langle\varphi|-|\psi\rangle\langle \psi|$. By using the above matrix representation and the fact that $|a|^2+|b|^2=1$, a straightforward calculation gives the eigenvalues $\pm |b|$ for $|\varphi\rangle\langle \varphi|-|\psi\rangle\langle\psi|$. We thus have
\[\Tr(\big||\varphi\rangle\langle\varphi|-|\psi\rangle\langle \psi|\big|) = 2|b|.\]
One can readily verify that
\[\|\psi-\varphi\|=\sqrt{|a-1|^2+|b|^2} \geq |b|,\] whence \eqref{Equation: Super basic trace inequality} follows.

\section*{Declarations}
\subsection*{Data availability statement}
No datasets were generated in the preparation of this
paper.
\subsection*{Conflict of interest} The authors have no relevant financial or non-financial interests
to disclose.

\bibliography{refs}

\end{document}